\documentclass[11pt]{article}

\usepackage{fullpage}

\newcommand{\ER}{Erd\H{o}s-R\'{e}nyi }
\newcommand{\A}{\mathcal{A}}

\usepackage{comment}
\usepackage{epsf}
\usepackage{graphicx}
\usepackage{subfigure}
\usepackage{bbm}
\usepackage{color}
\usepackage{nicefrac}
\usepackage{mathtools}
\usepackage{amsfonts}
\usepackage{amsthm}
\usepackage{amsmath, bm}
\usepackage{amssymb}
\usepackage{textcomp}
\usepackage{xcolor}
\usepackage[ruled,vlined]{algorithm2e}

\usepackage{algorithm}   
\usepackage{algpseudocode}
\usepackage{enumitem}
\makeatletter
\algnewcommand{\LineComment}[1]{\Statex \hskip\ALG@thistlm {\color{gray}\texttt{// #1}}}
\makeatother

\usepackage{hyperref}
\definecolor{OliveGreen}{HTML}{3C8031}
\hypersetup{
colorlinks=true,
urlcolor=blue,
linkcolor=blue,
citecolor=OliveGreen,
linktoc=page,
}

\newenvironment{fminipage}%
  {\end{minipage}\end{Sbox}\fbox{\TheSbox}}

\allowdisplaybreaks

\newenvironment{claimproof}{\noindent$\rhd$\hspace{1em}}{\hfill$\lhd$}



\newtheorem{theorem}{Theorem}[section]

\newtheorem{proposition}[theorem]{Proposition}
\newtheorem{lemma}[theorem]{Lemma}

\newtheorem{definition}[theorem]{Definition}
\newtheorem{remark}[theorem]{Remark}

\newtheorem{question}[theorem]{Question}
\newtheorem{conjecture}[theorem]{Conjecture}
\newtheorem{claim}[theorem]{Claim}

\counterwithin{algorithm}{section}

\usepackage[backend=biber, style=numeric, sorting=nyt, maxnames=100,backref=true, sortcites = true]{biblatex}
\newlength{\widebarargwidth}
\newlength{\widebarargheight}
\newlength{\widebarargdepth}

\makeatletter
\long\def\@makecaption#1#2{
       \vskip 0.8ex
       \setbox\@tempboxa\hbox{\small {\bf #1:} #2}
       \parindent 1.5em 
       \dimen0=\hsize
       \advance\dimen0 by -3em
       \ifdim \wd\@tempboxa >\dimen0
               \hbox to \hsize{
                       \parindent 0em
                       \hfil 
                       \parbox{\dimen0}{\def\baselinestretch{0.96}\small
                               {\bf #1.} #2
                               } 
                       \hfil}
       \else \hbox to \hsize{\hfil \box\@tempboxa \hfil}
       \fi
       }
\makeatother

\long\def\comment#1{}

\DeclareMathOperator{\epst}{\Tilde{\varepsilon}}
\DeclareMathOperator{\eps}{\varepsilon}

\newcommand{\ind}[1]{\ensuremath{{\mathbb{I}\left\{ #1 \right\}}}}

\newcommand{\set}[1]{\{#1\}}

\newcommand{\N}{\mathbb{N}}

\newcommand{\E}{\mathbb{E}}
\renewcommand{\P}{\mathbb{P}}
\newcommand{\Q}{\mathbb{Q}}

\newcommand{\statunif}{\alpha_{\texttt{r}}^{\rm STAT}}
\newcommand{\statpartite}{\alpha_{\texttt{r-bal}}^{\rm STAT}}

\newcommand{\compunif}{\alpha_{\texttt{r}}^{\rm COMP}}
\newcommand{\comppartite}{\alpha_{\texttt{r-bal}}^{\rm COMP}}

\title{Algorithmic Phase Transition for Large Independent Sets in Dense Hypergraphs}

\usepackage{authblk}

\author[1]{Abhishek Dhawan\thanks{Email: \textit{adhawan2@illinois.edu}. Partially supported by the NSF RTG grant DMS-1937241.}}

\author[2]{Nhi U. Dinh\thanks{Email: \textit{nhidinh2@illinois.edu}.}}

\author[3]{Eren C. K{\i}z{\i}lda\u g\thanks{Email: \textit{kizildag@illinois.edu}.}}

\author[4]{Neeladri Maitra\thanks{Email: \textit{nmaitra@illinois.edu}.}}

\author[5]{Bayram A. \c{S}ahin\thanks{Email: \textit{basahin2@illinois.edu}.}}

\affil[1,2,4]{Department of Mathematics, University of Illinois Urbana-Champaign}

\affil[3,5]{Department of Statistics, University of Illinois Urbana-Champaign}

\date{}

\begin{document}
\maketitle

\begin{abstract}
    We study the algorithmic tractability of finding large independent sets in dense random hypergraphs. In the sparse regime, much of the natural algorithms can be formulated within either the local or the low-degree polynomial (LDP) framework,
    and a rich literature has subsequently identified nearly sharp algorithmic thresholds within these classes by exploiting their stability. In the dense setting, however, the algorithmic paradigms are fundamentally different: they are online and thus need not be stable. Perhaps more crucially, even for the classical \ER random graph $G(n,p)$, LDPs are conjectured to fail in the `easy’ regime accessible to online algorithms, thereby challenging their viability for dense models.
    
Our focus is on two models: (i) finding large independent sets in dense $r$-uniform \ER hypergraphs, where each size-$r$ hyperedge is present independently with probability $p$, and (ii) the more challenging problem of finding large $\gamma$-balanced independent sets in dense $r$-uniform $r$-partite hypergraphs, where the vertex set is the disjoint union $V_1\sqcup \cdots \sqcup V_r$ with $|V_i|=n$ for all $i$, each hyperedge in $V_1\times \cdots \times V_r$ is present independently with probability $p$, and the $i$-th coordinate of $\gamma\in\mathbb{Q}^r$ specifies the proportion of vertices from $V_i$ in the independent set. For both models, we pinpoint the size of the largest independent set 
 and design online algorithms that achieve a multiplicative approximation factor of $r^{1/(r-1)}$ in the uniform and  $(\max_i \gamma_i)^{-1/(r-1)}$ in the $r$-partite model. Furthermore, we establish matching algorithmic lower bounds, showing that these computational gaps are \emph{sharp}: no online algorithms can breach these gaps.

Our results provide a detailed landscape for dense hypergraphs, thereby completing the picture for dense models in a manner parallel to the sparse counterparts developed recently. Our main technical contribution is twofold: a novel \emph{staged and bucketed greedy} algorithm and a stopping-time argument tailored to the hypergraph and multipartite structure for algorithmic hardness, both of which may be of independent interest. The algorithms and proof techniques in the dense regime differ substantially from those in the sparse, yet the resulting computational gaps are remarkably analogous, pointing to a form of universality. More conceptually, our results corroborate a hypothesis from statistical mechanics linking glassy equilibrium to computational hardness: optimization problems become far more intricate in the presence of global constraints.
\end{abstract}

\pagenumbering{gobble}
\newpage
\tableofcontents
\newpage
\pagenumbering{arabic}

\sloppy

\section{Introduction}

In this paper, we study the statistical and algorithmic landscape of large independent sets in two hypergraph models: (i) the $r$-uniform \ER hypergraph $H_r(n,p)$, where each size-$r$ subset of $[n]\coloneqq \{1,\dots,n\}$ is present independently with probability $p$, and (ii) the $r$-uniform $r$-partite hypergraph $H(r,n,p)$, whose vertex set is partitioned into $r$ parts $V_1,\dots,V_r$, each of size $n$,  where each edge in  $V_1\times \cdots \times V_r$ is present independently with probability $p$. For formal statements, see Definition~\ref{definition: models}. Our focus is on the \emph{dense} regime in which $p$ is a constant (as $n\to\infty$). While the landscape for sparse random hypergraphs has been characterized in detail recently~\cite{dhawan2026low}, the dense regime has remained elusive and requires different techniques. We examine large independent sets in $H_r(n,p)$ and large \emph{balanced} independent sets in $H(r,n,p)$, where the selected vertices are required to be distributed across the $r$ parts according to a prescribed vector $\gamma \in \Q_+^r$; see Definition~\ref{definition: balanced independent sets}. For both models, we pinpoint the largest possible size—referred to as the \emph{statistical threshold}, design polynomial-time algorithms that provably outputs a large independent set, and establish sharp algorithmic lower bounds. Taken together, our results reveal \emph{statistical-computational gaps} analogous to those in the sparse regime,
thereby pointing to a form of universality. 

Establishing sharp results for dense hypergraphs requires new technical ingredients. To this end, we introduce a novel \emph{staged and bucketed greedy} algorithm tailored to hypergraphs and design a geometrical barrier, equipped with a carefully chosen stopping time, to establish tight algorithmic lower bounds. Informally, the \emph{staged and bucketed} greedy procedure is divided into $r$ stages where in each stage, the algorithm populates the independent set from an `unexhausted' part until a certain target capacity is reached, and then locks that part. As for our algorithmic barrier, it is adapted to the multipartite structure and global balancedness constraint; specifically, it  tracks the first time at which the independent set contains sufficiently many vertices from every part of the partition. See  Section~\ref{sec:pf-overview} for a more detailed overview. We expect these new tools to be of independent interest for understanding algorithmic tractability in hypergraph models and beyond.

Our algorithmic viewpoint is \emph{online}: vertices are exposed sequentially, and upon the arrival of a vertex $v$, the algorithm learns the status of hyperedges involving $v$ together with previously exposed vertices, and must irrevocably decide whether to include $v$ in its output; see Definition~\ref{def : online-arrival} for details. As we explain below, online algorithms are central in the dense setting, where they capture the {only known class of} polynomial-time algorithms. This stands in stark contrast to the sparse setting, where the predominant algorithmic frameworks are markedly different, see Section~\ref{sec:compare}.

We emphasize that, while approximating the largest independent set to within a factor of $n^{1-\epsilon}$ is already NP-hard in the worst-case~\cite{hastad-clique,khot2001improved}, the average-case picture is substantially richer. For the \ER random graph $G(n,\frac12)$, the independence number—size of the largest independent set—is approximately $2\log_2 n$, whereas the online greedy algorithm reaches $\log_2 n$, both with high probability (whp) as $n\to\infty$~\cite{bollobas1976cliques,Grimmett_McDiarmid_1975,matula1970complete,matula1976largest}. In his seminal 1976 work~\cite{karp1976probabilistic}, Karp challenged the community to design an efficient algorithm that reaches $(1+\epsilon)\log_2 n$ (whp) for some $\epsilon>0$, or to prove that this is impossible (modulo a complexity-theoretic assumption such as $P\ne NP$). Nearly five decades later, this question remains open and is widely regarded as a cornerstone of average-case complexity and random graph theory. It has inspired several major developments—such as Jerrum’s introduction of the \emph{planted clique} problem~\cite{jerrum1992large}, and a substantial literature has since uncovered analogous `factor-$2$ gaps' across a broad range of random graphs models.\footnote{Importantly, Karp's original conjecture regarding `factor-$2$ gaps' was formulated for the dense regime, while much of the recent effort has been focused on the sparse setting.} For an overview of such gaps in random graphs and beyond, see the surveys~\cite{gamarnik2021overlap,gamarnik2025turing,gamarnik2022disordered,bandeira2018notes,wein2025computational,frieze2005random}.

\subsection{Comparison between Online Algorithms and 
Low-Degree Polynomials}\label{sec:compare}
There is by now a substantial literature on algorithmic phase transitions for the largest independent set problem in sparse random graphs (including the classical \ER model as well as bipartite and multipartite analogues), where nearly sharp lower bounds have been established for local algorithms and low-degree polynomials (LDPs)~\cite{gamarnik2014limits,rahman2017local,wein2020optimal,perkins2024hardness,dhawan2026low,gamarnik2020low}. In the sparse regime, this framework is particularly natural: local neighborhoods remain small and approximately tree-like, and several natural algorithms can be formulated within either the local or low-degree framework~\cite{wein2020optimal,perkins2024hardness,dhawan2026low,gamarnik2020low}.

\begin{remark}
    We note that in the low-degree setting, the local tree-like property is essential to control the variance of certain quantities. We discuss this further in relation to our work below.
\end{remark}

The dense regime, however, is fundamentally different. Even for the classical \ER graph $G(n,p)$ with constant $p$, it is likely that the online greedy algorithm cannot be implemented as an LDP. An informal justification for this is through a certain \emph{invariant}: while LDPs enjoy input stability~\cite{wein2020optimal,gamarnik2020low,perkins2024hardness}, the online greedy algorithm is shown to be unstable~\cite{gamarnik2025optimal}, thus unlikely to be implemented as an LDP. The current state of affairs is in fact more subtle. A remarkable conjecture put forth in a recent AIM workshop~\cite{AIM2024} suggests that LDPs fail even in the `easy' regime:
\begin{conjecture}\label{conj:LDP-fails}
In $G(n,\frac12)$, no degree-$o(\log^2 n)$ polynomial can find an independent set of size $0.9\log_2 n$. 
\end{conjecture}
Accordingly, while LDPs provide an important proxy for efficient algorithms in sparse random graphs, their viability in the dense setting, even for the classical \ER random graph $G(n,p)$ with constant $p$, is questionable.
\footnote{Degree-$o(\log^2 n)$ polynomials are used as a proxy for polynomial-time algorithms; see~\cite{hopkins2018statistical,kunisky2019notes,wein2025computational}.} If true, Conjecture~\ref{conj:LDP-fails} would show that LDPs are actually not viable for dense random graphs. A relatively weaker variant of this conjecture was proven in the aforementioned workshop~\cite{AIM2024}. This motivates the study of online algorithms in the dense regime.

\begin{remark}
The online greedy algorithm is in fact relevant well beyond the dense setting. Indeed, in the intermediate regime where $p=o(1)$ and $np\to\infty$, the independence number of $G(n,p)$ is $(2+o(1))\log(np)/p$, whereas the greedy algorithm finds an independent set of size $(1+o(1))\log(np)/p$, both whp~\cite{Bollobas2001book,frieze1990independence,frieze1997algorithmic}. The same phenomenon persists in the sparse regime: in the double limit $n\to\infty$ followed by $d\to\infty$,  the independence number of $G(n,d/n)$ is $(2+o_d(1))n\log d/d$, while greedy reaches $(1+o_d(1))n\log d/d$~\cite{coja2015independent}. Thus, throughout dense, intermediate, and sparse regimes, the online greedy yields $1/2$-approximation to the optimum (at the level of the leading-order terms).
\end{remark}

Lower bounds for computational intractability of finding large independent sets in $G(n,p)$ within the framework of online algorithms had remained elusive, emerging only very recently~\cite{gamarnik2025optimal}.\footnote{We note, however, that lower bounds for LDPs appeared earlier in~\cite{huang2025strong,wein2020optimal}.} At a technical level, they required substantial  refinements of the machinery based on the Overlap Gap Property (see below for details). These results were recently extended to the balanced independent set problem in dense bipartite random graphs~\cite{dhawan2025sharp}. As we detail below, the largest balanced independent set problem in bipartite models closely resemble their classical \ER counterparts (even though the unconstrained independent set problem in bipartite random graphs is polynomial-time solvable via a max-flow reduction). As such, a variant of Conjecture~\ref{conj:LDP-fails} plausibly holds for bipartite models—this was in fact put forth in~\cite[Conjecture~1.6]{dhawan2025sharp}.

\paragraph{Bipartite Random Graphs} For the large balanced independent set problem in sparse random bipartite graphs, Perkins and Wang~\cite{perkins2024hardness} obtained sharp algorithmic results within both the local and LDP frameworks. Their algorithm arises from bipartiteness, but the sparsity is crucial in placing it into the local or LDP framework—their analysis breaks down in the absence of sparsity. They explicitly note that, since sparse random graphs are locally tree-like, the performance of local algorithms is determined, to first order, by evaluation on a Galton-Watson tree. The same is true also in the bipartite case. More specifically, the analysis of the local algorithm uses the fact that for sparse random graphs, the neighborhoods are asymptotically tree-like and hence can be approximated by a Galton-Watson tree, which they leverage to control certain acceptance probabilities and to bound neighborhood sizes; see~\cite[Section~2.1]{perkins2024hardness}. As for the low-degree analysis, one must control the variance of a certain LDP, for which it is essential to control the size of certain neighborhoods. In~\cite[Section~2.2]{perkins2024hardness}, this is handled precisely through a Poisson approximation, which unfortunately holds only in the sparse regime and is no longer valid for the dense setting.

For these reasons, results of~\cite{perkins2024hardness} for sparse bipartite graphs do not extend to the dense setting. In a dense bipartite graph, even a depth-1 neighborhood already contains $\Theta(n)$ vertices (along with substantial dependencies), so the local limit argument based on Galton-Watson approximation disappears. Consequently, if one tries to adapt the algorithmic framework of~\cite{perkins2024hardness}, the resulting polynomial will unfortunately have a large variance, breaking down the analysis. As for the balanced independent set problem in sparse random multipartite hypergraphs, recent work of Dhawan and Wang~\cite{dhawan2026low} devised LDP algorithms, for which the analysis does not transfer to dense regime either.

For these models, the main source of hardness is not multipartiteness per se, but rather the \emph{balancedness} constraint. More broadly, once one imposes a global constraint such as balancedness, the resulting optimization problem becomes substantially more delicate.\footnote{In the language of statistical mechanics, a global constraint such as balancedness is expected to induce `glassy equilibrium’ and lead to computational hardness~\cite{mezard1987mean,mezard2010glassy,perkins2024hardness,gamarnik2022disordered}; this has been rigorously verified for certain models, see, e.g.,~\cite{feige2002relations,feige2004hardness}.
}
In particular, the largest balanced independent set problem in dense bipartite or multipartite random graphs is much closer in flavor to the classical largest independent set problem in dense $G(n,p)$ than to the unconstrained independent set problem on bipartite graphs. In light of this, online algorithms offer a natural algorithmic framework in the dense regime, across both the \ER and multipartite settings, whereas the viability of the low-degree framework (or local algorithms) is presently much less clear.\footnote{For related reasons, online algorithms also play an important role in other optimization problems such as coloring over dense random graphs; see~\cite{krivelevich2002coloring}.} 
\vspace{-0.3in}
\paragraph{Extension to Hypergraphs and Multipartite Models} The higher uniformity ($r\ge 3$), the multipartite structure, and the balancedness constraint collectively make the rigorous study of random hypergraph models substantially more delicate. To address these challenges, we develop new technical ideas both on the algorithmic side and for establishing sharp algorithmic hardness.
 These ideas contribute to the technical toolkit for dense random structures, which has only recently begun to emerge; we expect them to be broadly useful for hypergraph models and beyond. 

\begin{remark}
While it is known that a large class of pseudorandom graphs, such as $K_3$-free graphs, have independence number matching the computational threshold (see, e.g.,~\cite{DKPS, shearer1983note, Kttt}), this is not known for \textbf{any} class of pseudorandom hypergraphs. There are, however, several results that match the growth rate~\cite{ajtai1982extremal, frieze2013coloring, dhawan2026fractional}.
These results suggest that  finding large independent sets in hypergraphs is inherently harder than the graph case,
further motivating the study of hypergraphs.
\end{remark}
For a detailed technical overview, see Section~\ref{sec:pf-overview}. A summary of our contributions is in order.

\subsection{Summary of Main Results}\label{subsection: informal results}

We now provide informal versions of our main results; see Section~\ref{sec: main results} for the formal statements. 
Recall the informal descriptions of our models: the first is the dense $r$-uniform Erd\H{o}s--R\'enyi hypergraph $H_r(n,p)$, where each $r$-element subset of $[n]\coloneqq \{1,\dots,n\}$ is present independently with probability $p$; the second is the dense $r$-uniform $r$-partite hypergraph $H(r,n,p)$, where the vertex set is partitioned into $r$ parts $V_1,\dots,V_r$ of size $n$ and each $r$-tuple from $V_1\times \cdots \times V_r$ is included independently with probability $p$ (see Definition~\ref{definition: models} for a formal definition of these models).

For $b\coloneqq \frac{1}{1-p}$ and $\gamma\coloneqq (\gamma_1,\dots,\gamma_r)\in\Q_{+}^r$
with $\sum_i \gamma_i=1$, define
\[\statunif \coloneqq \left(r!\log_b n\right)^{\frac{1}{r-1}}\qquad\text{and}\qquad \statpartite\coloneqq \left(\frac{\log_b n}{\prod_{i=1}^r \gamma_i}\right)^{\frac{1}{r-1}}.\]
Throughout this paper, we focus on constant $p$, $p=\Theta(1)$ (as $n\to\infty$). We ignore all floor/ceiling operators for simplicity with the understanding that this has no effect on our arguments.

Our first result pins down the size of the largest ($\gamma$-balanced) independent set. 
\begin{theorem}[Informal version of Theorems~\ref{theo: a_stat-Hr} and~\ref{theo: gamma_bal a_stat}]\label{thm:stat-threshold-informal}
For $H_r(n,p)$, the largest independent set  has size $\left(1\pm o(1)\right)\statunif$ whp. Likewise, for $H(r,n,p)$, the largest $\gamma$-balanced independent set has size $\left(1\pm o(1)\right)\statpartite$ whp.
\end{theorem}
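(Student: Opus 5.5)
The plan is the standard first and second moment method, applied separately to the two models. For $H_r(n,p)$, let $X_k$ denote the number of independent sets of size $k$. Then
\[
\E X_k=\binom{n}{k}(1-p)^{\binom{k}{r}} .
\]
With $k=(1+\eps)\statunif$ we have $\binom{k}{r}\approx k^r/r!$, so
\[
\log_b \E X_k \approx k\log_b n - \frac{k^r}{r!} = k\Big(\log_b n-\frac{k^{r-1}}{r!}\Big).
\]
Since $k^{r-1}=(1+\eps)^{r-1}r!\log_b n$, the bracket is negative of order $\log_b n$, so $\E X_k\to 0$. Markov's inequality then gives the upper bound.

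The $r$-partite model is handled the same way. Let $Y$ count $\gamma$-balanced independent sets of total size $k$, with $\gamma_ik$ vertices in $V_i$. Then
\[
\E Y=\prod_i\binom{n}{\gamma_i k}\,(1-p)^{\prod_i \gamma_i k} .
\]
Its $\log_b$ is about $k\log_b n - k^r\prod_i\gamma_i$, which tends to $-\infty$ exactly when $k>(1+\eps)\statpartite$. The same computation at $(1-\eps)$ times the threshold gives $\E X_k,\E Y\to\infty$, and in fact super-polynomially so, like $n^{\Theta(k)}$.

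For the lower bound, take $k=(1-\eps)\alpha$ and apply Janson's inequality (or Paley–Zygmund/Chebyshev) to show that $X_k>0$ whp. Write
\[
\frac{\E X_k^2}{(\E X_k)^2}=\sum_{\ell=0}^{k} \frac{\binom{k}{\ell}\binom{n-k}{k-\ell}}{\binom nk}\, b^{\binom{\ell}{r}} ,
\]
since two $k$-sets overlapping in $\ell$ vertices share exactly $\binom{\ell}{r}$ potential hyperedges. The goal is to show that this sum is $1+o(1)$. For the partite model, overlaps are indexed by a vector $(\ell_1,\dots,\ell_r)$ with $\ell_i\le\gamma_ik$, and the shared edges number $\prod_i \ell_i$. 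The overlap term becomes
\[
\prod_i\frac{\binom{\gamma_ik}{\ell_i}\binom{n-\gamma_ik}{\gamma_ik-\ell_i}}{\binom{n}{\gamma_ik}}\; b^{\prod_i\ell_i} .
\]
Since $\gamma\in\Q_+^r$ and we ignore rounding, these sizes are integers along a suitable subsequence, and the general case follows by monotonicity.

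The main work, and the expected obstacle, is bounding the overlap sum uniformly in $\ell$. I would split it into three ranges. For small overlaps, $\ell\le r-1$ in the uniform case or $\min_i\ell_i=0$ in the partite case, the factor $b^{(\cdot)}$ equals $1$ and the hypergeometric weights sum to $1+o(1)$. More precisely, the terms with $\ell\ge 1$ but no shared edges contribute $O(k^2/n)\to 0$ after subtracting the $\ell=0$ baseline. Note that this is an exact cancellation only against the squared mean, which is the usual second-moment subtlety; Janson's $\Delta$ formulation handles it cleanly. For intermediate overlaps, each term is at most
\[
(k^2/n)^{\ell}\, b^{\ell^r/r!}=\exp\!\Big(\ell\big[-\log n(1-o(1))+\tfrac{\ell^{r-1}}{r!}\log b\big]\Big).
\]
For $\ell\le k$ we have $\ell^{r-1}/r!\le(1-\eps)^{r-1}\log_b n$, so the exponent is at most $-c\,\eps\,\ell\log n$, and the sum is $o(1)$.

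The convexity of $\ell\mapsto \ell^r$ is exactly what makes the endpoint $\ell=k$ the worst case. Near $\ell=k$ one instead compares with $1/\E X_k$: writing $\ell=k-m$, the term is at most $n^{m}k^{m}\,b^{-(\binom{k}{r}-\binom{k-m}{r})}/\E X_k$, and this is tiny because $\E X_k\ge n^{ck}$. In the partite case the analogous bound is
\[
b^{\prod_i\ell_i}\le b^{(\prod_i\gamma_i k)\min_i(\ell_i/(\gamma_ik))}.
\]
Using $\prod_i \ell_i\le \prod_i(\gamma_ik)\cdot \prod_i x_i$ with $x_i=\ell_i/(\gamma_ik)$, together with $\sum_i\ell_i\ge \sum_i\gamma_ik\,x_i$, gives the exponent bound
\[
\sum_i \ell_i\big(-\log_b n+(1-\eps)^{r-1}\log_b n\cdot (\textstyle\prod_j x_j)/x_i\big),
\]
which is negative since $\prod_j x_j/x_i\le1$. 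This handles the bulk of the partite sum, with the near-full overlaps again controlled by $1/\E Y$.

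Finally, a second moment ratio of $1+o(1)$ gives existence whp. Alternatively, if only a constant ratio is obtained, I would upgrade to whp via Janson's inequality or Talagrand/vertex-exposure concentration of the independence number. This is standard, since the independence number changes by at most $1$ per vertex.
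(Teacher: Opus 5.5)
Your proposal is correct and follows essentially the same route as the paper: a first moment plus Markov bound gives the upper bounds, and a second-moment (Paley--Zygmund) computation gives the lower bounds. In the uniform case you bound each overlap term by $(k^2/n)^{\ell}b^{\ell^r/r!}$, exactly as the paper does. In the partite case, your identity $\prod_j \ell_j=\sum_i \ell_i\,(1-\epsilon)^{r-1}\log_b n\prod_{j\neq i}x_j$ combined with $x_j\le 1$ is precisely the paper's Case~4 inequality $\sum_i m_i/\prod_i m_i\ge 1/(\alpha_\epsilon^{r-1}\prod_j\gamma_j)$; you apply it to all overlap vectors at once, whereas the paper splits into four cases.

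Your intermediate bound already holds for every $1\le \ell\le k$, so the separate near-full-overlap comparison with $1/\E X_k$ and the Janson/Talagrand fallbacks are not needed. The one thing to add is the harmless factor $r!$ from the union over permutations $\sigma$ in the partite first-moment bound.
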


Theorem~\ref{thm:stat-threshold-informal} identifies $\statunif$ and $\statpartite$ as the \emph{statistical threshold} for independent sets in $H_r(n,p)$ and $\gamma$-balanced independent sets in $H(r,n,p)$, respectively. The proof of Theorem~\ref{thm:stat-threshold-informal} is based on the second moment method and is, therefore, non-constructive. This naturally leads to an algorithmic question:
can we efficiently find large independent sets in $H_r(n,p)$ or large $\gamma$-balanced independent sets in $H(r,n,p)$? 
To address this question, we let
\[\compunif\coloneqq \left((r-1)!\log_b n\right)^{\frac{1}{r-1}}\qquad\text{and}\qquad \comppartite\coloneqq \left(\frac{\log_b n}{\prod_{i=1}^r \gamma_i}\max_{1\le i\le r}\gamma_i\right)^{\frac{1}{r-1}}.\]
Our next result establishes that the thresholds $\comppartite$ and $\compunif$ are achievable in polynomial time via an online algorithm.
\begin{theorem}[Informal version of Theorems~\ref{theo: a_COMP-achievable-Hr} and~\ref{theo: a_comp-achievable-Hrnp}]\label{thm:alg-informal}
For $H_r(n,p)$, there is an online algorithm which finds, for any $\epsilon>0$, an independent set of size $(1-\epsilon)\compunif$ whp. Likewise, for $H(r,n,p)$, there is an online algorithm which finds, for any $\epsilon>0$, a $\gamma$-balanced independent set of size $(1-\epsilon)\comppartite$ whp.
\end{theorem}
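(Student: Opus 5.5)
For $H_r(n,p)$ I would analyze the plain online greedy algorithm: process the vertices in order, and add $v$ to the current set $I$ whenever no hyperedge lies inside $I\cup\{v\}$. Suppose $|I|=k$ at the moment $v$ arrives. Then $v$ is rejected only if one of the $\binom{k}{r-1}$ sets $S\cup\{v\}$, with $S\subseteq I$ and $|S|=r-1$, is a hyperedge. These hyperedges all contain $v$, so they have not been examined before, and they are independent of the past. Hence $v$ is accepted with probability exactly
\[
q_k=(1-p)^{\binom{k}{r-1}}=b^{-\binom{k}{r-1}} .
\]
The waiting time $T_k$ between the $k$-th and $(k+1)$-st acceptance is therefore geometric with mean $b^{\binom{k}{r-1}}$, and the $T_k$ are independent.

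Now set $K=(1-\epsilon)\compunif$. Then $\binom{K}{r-1}\le K^{r-1}/(r-1)!\le(1-\epsilon)^{r-1}\log_b n$, so
\[
\E\sum_{k<K}T_k\le K\,n^{(1-\epsilon)^{r-1}}=o(n).
\]
Markov's inequality then shows that with high probability the first $K$ acceptances happen within the $n$ available vertices. This gives an independent set of size at least $(1-\epsilon)\compunif$ whp. This part is routine.

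For $H(r,n,p)$, let $\gamma_{\max}=\max_i\gamma_i$ and relabel so that $\gamma_r=\gamma_{\max}$. The target is $k_i=(1-\epsilon)\gamma_i\comppartite$ vertices from $V_i$ for each $i$. I would use a staged greedy. In stages $1,\dots,r-1$, fill part $V_i$ with $k_i$ vertices taken from its first $n/r$ vertices, and then lock it. In the final stage, fill $V_r$ greedily. A hyperedge needs exactly one vertex from each part, so while parts $1,\dots,r-1$ are being filled, $V_r$ is still empty and no constraint is active. Stages $1,\dots,r-1$ therefore cost essentially nothing, and any $k_i$ vertices will do. In the last stage, a vertex $v\in V_r$ is rejected exactly when some $(r-1)$-tuple from the locked sets forms a hyperedge with $v$. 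So it is accepted with probability $b^{-\prod_{i<r}k_i}$. We compute
\[
\prod_{i<r}k_i=(1-\epsilon)^{r-1}\Bigl(\prod_{i<r}\gamma_i\Bigr)\frac{\log_b n\,\gamma_{\max}}{\prod_i\gamma_i}=(1-\epsilon)^{r-1}\log_b n .
\]
Hence the expected number of accepted vertices among the $n$ vertices of $V_r$ is $n^{1-(1-\epsilon)^{r-1}}$, which far exceeds $k_r=O((\log n)^{1/(r-1)})$. A Chernoff bound then gives $k_r$ acceptances whp. The resulting set is $\gamma$-balanced up to the common factor, of total size $(1-\epsilon)\comppartite$.

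The place I expect to need care is the online model of Definition~\ref{def : online-arrival}. If vertices from the different parts arrive interleaved rather than part by part, the algorithm must lock each part once its quota is met, and buffer the order so that $V_r$, the largest-quota part, is filled last. This is where a bucketed design comes in: assign each part a disjoint bucket of arrivals and process the stages in sequence. Provided the arrival order is fixed independently of the edges, the same independence argument applies within each bucket.

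One subtlety remains if the order of stages is forced rather than chosen. Then the bottleneck at the last stage is $\prod_{i\ne j}k_i$, where $j$ is the last part filled. This is $\le\log_b n$ exactly when $\gamma_j=\gamma_{\max}$, which is why the algorithm places the largest part last.
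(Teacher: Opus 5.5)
Your proposal is correct. For $H_r(n,p)$ it is the paper's argument. The one difference is that Markov's inequality on the total waiting time gives failure probability $K\,n^{-(1-(1-\epsilon)^{r-1})}$. That is enough for whp, whereas the paper bounds each $\Delta_i\le n/k$ separately and union-bounds to get the sharper $1-\exp(-n^{\Theta(\epsilon)})$ of Theorem~\ref{theo: a_COMP-achievable-Hr}.

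For $H(r,n,p)$ your route is genuinely different and simpler. The paper's staged and bucketed greedy lets all unlocked parts grow at the same time, checking independence on every acceptance. It assigns quotas dynamically: the first unlocked part to reach the smallest remaining quota is locked, so the part filled last automatically receives $\gamma_{\max}$. Its analysis then has to control, stage by stage, how many vertices of an unlocked part are blocked by the current set. You instead use that no hyperedge lies inside $V_1\cup\dots\cup V_{r-1}$, so the first $r-1$ quotas can be met with arbitrary vertices. In the final stage, the events that $v\in V_r$ spans no edge with $I_1\times\cdots\times I_{r-1}$ depend on disjoint edge sets. They are therefore i.i.d.\ with probability $n^{-(1-\epsilon)^{r-1}}$, and one Chernoff bound finishes with failure probability $\exp(-\Omega(n^{1-(1-\epsilon)^{r-1}}))$. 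You should stop accepting once $k_r$ is reached, to keep the set exactly balanced. The bottleneck computation $\prod_{i<r}k_i=(1-\epsilon)^{r-1}\log_b n$ is the same in both arguments.

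What your version needs is to place the $\gamma_{\max}$-part last. Under Definition~\ref{def : online-arrival} the algorithm itself selects $v_t$, so you can simply reveal $V_1,\dots,V_{r-1}$ first; no hyperedge status is revealed at all in that phase. Your paragraph on interleaved arrivals is therefore unnecessary. ``Buffering'' is not available in any case, since decisions are irrevocable. Under an externally imposed order, a fixed role assignment can fail: if most of $V_r$ arrives before the other quotas are met, those vertices must all be rejected. The paper's dynamic locking is what buys robustness to the arrival order, which the paper claims for its achievability results.
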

We now compare the statistical thresholds with the algorithmically achievable values. For the dense $r$-uniform hypergraph $H_r(n,p)$, there is a factor-$r^{1/(r-1)}$ gap between the two, which recovers the well-known `factor-$2$ gap' for the special case $r=2$. As for $H(r,n,p)$, the gap is of the order $\left(\max_{1\le i\le r}\gamma_i\right)^{-1/(r-1)}$. (Note that for $\gamma_1=\cdots=\gamma_r=1/r$, the gaps in both cases are identical.)

Are there fundamental computational barriers in these models? Our final result addresses this question and yields sharp algorithmic lower bounds, suggesting that the barriers above are not spurious and likely inherent.
\begin{theorem}[Informal version of Theorems~\ref{theo: a_COMP-impossible-Hr} and~\ref{theo: a_COMP-impossible-Hrnp}]\label{thm:ogp-informal}
    For $H_r(n,p)$ and any $\epsilon>0$, no online algorithm finds an independent set of size $(1+\epsilon)\compunif$ with probability $o(1)$. Likewise, for $H(r,n,p)$ and any $\epsilon>0$, no online algorithm finds a $\gamma$-balanced independent set of size $(1+\epsilon)\comppartite$ with probability $o(1)$.
\end{theorem}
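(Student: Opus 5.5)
Since the vertices arrive in a fixed order and the algorithm decides irrevocably, the plan is a union bound over the \emph{decision tree}, with a stopping time cutting it off before the count becomes too large. This follows the approach of~\cite{gamarnik2025optimal,dhawan2025sharp}. An online algorithm $\A$ (possibly randomized, which we handle by fixing its internal randomness and averaging) is determined by the rule that, given the revealed hyperedges among the first $t$ vertices, decides whether to include vertex $t$. We want to bound the probability that the output $I$ is independent and $|I|\ge(1+\epsilon)\compunif$. Set $k\coloneqq(1-\delta)\compunif$ for small $\delta$, and let $\tau$ be the arrival time of the $k$-th accepted vertex.

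\textbf{Step 1 (stopping time and the accepted prefix).} Condition on the history up to $\tau$, i.e.\ on the set $S$ of the first $k$ accepted vertices, which is independent. Every later vertex $v$ that $\A$ accepts must close no hyperedge with the current independent set. In particular, no $(r-1)$-subset of $S$ may form a hyperedge with $v$. This event depends only on hyperedges containing $v$, which are unrevealed at time $\tau$ and therefore fresh. Its probability is $(1-p)^{\binom{k}{r-1}}=b^{-\binom{k}{r-1}}$. The choice $k=(1-\delta)\compunif$ gives $\binom{k}{r-1}\approx(1-\delta)^{r-1}\log_b n$, so a single future vertex is admissible with probability about $n^{-(1-\delta)^{r-1}}$.

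\textbf{Step 2 (counting the extensions).} We need $m\coloneqq|I|-k\ge(\epsilon+\delta)\compunif$ further acceptances. The key point is that admissibility only gets harder as $I$ grows: the $j$-th subsequent accepted vertex must avoid hyperedges with all $(r-1)$-subsets of a set of size $k+j$. The algorithm's decisions are adaptive, but each accepted vertex must pass an independent fresh test. Here is how we make that precise. Say a vertex is \emph{good} at the moment of its arrival if it closes no hyperedge with the current $I$. Only good vertices can be accepted, and whether a vertex is good is determined by fresh coins, conditionally on the past. So the number of accepted vertices after $\tau$ is stochastically dominated by the number of successes among at most $n$ Bernoulli trials whose success probabilities decay as $I$ grows. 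For the last $m/2$ acceptances the success probability is at most $b^{-\binom{k+m/2}{r-1}}\le n^{-(1+\eta)}$ for some $\eta=\eta(\epsilon,\delta)>0$, once $\delta$ is small relative to $\epsilon$. A union bound over the choices of $m/2$ positions among $n$ then gives
\[
\P\bigl(|I|\ge(1+\epsilon)\compunif\bigr)\le \binom{n}{m/2}n^{-(1+\eta)m/2}\le n^{-\eta m/2}=o(1).
\]
This bound in fact shows that this event has probability $o(1)$. In this plan the stopping time serves mainly to organize the conditioning. A cruder direct argument that tracks each acceptance's admissibility probability $b^{-\binom{|I|}{r-1}}$ already suffices.

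\textbf{Step 3 (the $r$-partite balanced case).} This is the main obstacle. Now a vertex $v\in V_i$ is constrained only by the tuples drawn from the \emph{other} parts, namely $\prod_{j\ne i}|I\cap V_j|$ of them. The balance constraint forces the final counts $\gamma_j\cdot$size, but along the way the algorithm may fill the parts unevenly. For instance, it can fill one part late and cheaply, while the other parts are still small. To handle this, define $\tau$ to be the first time at which $|I\cap V_j|\ge(1-\delta)\gamma_j\comppartite$ holds for every $j$ except one part $i^*$ (the last to saturate). After $\tau$, each new vertex in $V_{i^*}$ costs $b^{-\prod_{j\ne i^*}|I\cap V_j|}$. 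Then
\[
\prod_{j\ne i^*}\gamma_j\,(\comppartite)^{r-1}=\frac{\log_b n\,\max_i\gamma_i}{\gamma_{i^*}}\ge\log_b n.
\]
Consequently, by the argument of Step 2, the final $\Theta(\epsilon)$-fraction of the vertices of $V_{i^*}$ cannot be completed, except in the worst case $i^*=\arg\max\gamma_i$. In that case we still need to use the extra $(1+\epsilon)$ factor. In the other cases, we must also show that the remaining parts cannot all finish their own quotas either. We do this by showing that whichever part completes last at scale $1+\epsilon$ faces probability $n^{-(1+\eta)}$ per vertex. The delicate part is choosing this "all but one part saturated" stopping time so that the exponent bound holds uniformly over the adaptive order in which the parts fill. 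We then take a union bound over the $r$ possible identities of $i^*$.
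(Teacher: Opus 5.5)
\paragraph{The $H_r(n,p)$ part.} This part is correct, and it takes a genuinely different and shorter route than the paper.

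The paper's route:
\begin{itemize}
\item It builds $m=C\epsilon^{-2}$ correlated copies $H_i^{(\tau)}$ that share the transcript up to $\tau$.
\item It lower-bounds the probability of a forbidden $m$-tuple by $\P[\mathcal E]^m$ via Jensen.
\item It upper-bounds that probability by a first moment. The $m$ copies are needed to amortize the huge first-moment price $\binom{n}{s_0}b^{-\binom{s_0}{r}}$ of the shared prefix.
\end{itemize}

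Your route never pays for the prefix. In the pure online model of Definition~\ref{def : online-arrival}, no hyperedge through $v_t$ is revealed before round $t$. So accepting $v_t$ requires an event of conditional probability $b^{-\binom{|I_{t-1}|}{r-1}}$ given $\mathcal F_{t-1}$. Sequential conditioning over the positions of the last $m/2$ acceptances then gives $n^{-\eta m/2}=\exp(-\Omega(\epsilon^2\compunif\log n))$ for $\delta<\epsilon$. This is the same rate as Theorem~\ref{theo: a_COMP-impossible-Hr}. One small repair: take $m=(\epsilon+\delta)\compunif$ deterministic, i.e., look at the first $k+m$ acceptances, rather than $m=|I|-k$.

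The comparison: your proof is a per-round martingale/union bound that exploits the freshness of each acceptance test. The paper's proof is a static first-moment count on correlated instances.

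\paragraph{The gap is Step 3.} Your stopping time uses thresholds $(1-\delta)\gamma_j\comppartite$, which lie below the computational threshold. This only helps when the lagging part has $\gamma_{i^*}<\max_i\gamma_i$.

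When $\gamma_{i^*}$ is maximal, the argument breaks. This is every case when $\gamma=\mathbf 1/r$. Each new vertex of $V_{i^*}$ after $\tau$ then faces only about $(1-\delta)^{r-1}\log_b n<\log_b n$ fresh tuples, and $V_{i^*}$ may even be empty at $\tau$. So the Step 2 mechanism gives nothing.

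Your fallback is that the part completing last ``at scale $1+\epsilon$'' pays $n^{-(1+\eta)}$ per vertex. This only covers vertices added after every other part has reached its final quota, and nothing forces there to be more than $O(1)$ of them. That yields $n^{-\Omega(\epsilon)}$. The theorem, however, is a strong-hardness statement: the remark after it specifies the rate $\exp(-\Omega(\comppartite\log n))$. Note that under a weak $o(1)$ reading, the last accepted vertex alone already gives $n^{-\Omega(\epsilon)}$ in both models, with no stopping time needed.

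Separately, your stopping time ignores the permutation in Definition~\ref{definition: balanced independent sets}.

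\paragraph{The missing idea.} Put the intermediate thresholds \emph{above} the computational threshold, one stopping time per permutation.
\begin{itemize}
\item For each $\sigma\in S_r$, let $\tau_\sigma$ be the first time at which all but one index $i$ satisfy $|I\cap V_{\sigma(i)}|\ge(1+\epsilon/2)\gamma_i\comppartite$.
\item Exactly one index $i^*$ lags at $\tau_\sigma$, and $i^*$ is determined by the history.
\item Suppose the output is $\gamma$-balanced via $\sigma$ with size at least $(1+\epsilon)\comppartite$. Then $V_{\sigma(i^*)}$ must still gain more than $L=\frac{\epsilon}{2}\min_i\gamma_i\,\comppartite$ vertices.
\item Each of these vertices faces at least $(1+\epsilon/2)^{r-1}\frac{\max_i\gamma_i}{\gamma_{i^*}}\log_b n\ge(1+\epsilon/2)^{r-1}\log_b n$ fresh transversal tuples.
\item Your Step 2 union bound then gives $r!\,(rn)^L n^{-(1+\epsilon/2)^{r-1}L}=\exp(-\Omega(\epsilon^2\comppartite\log n))$, uniformly over the adaptive filling order.
\end{itemize}

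For comparison, the paper stops when \emph{all} parts reach $(1-\mu)\gamma_j\comppartite$ under some permutation. It absorbs the max-$\gamma$ difficulty through its corner analysis of the multilinear $f$ across the $m$ copies.
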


Note that aside from onlineness, there is no restriction on the algorithms ruled out by Theorem~\ref{thm:ogp-informal}. In particular, so long as the algorithm is online, we impose no restrictions on its runtime. Importantly, Theorem~\ref{thm:ogp-informal} rules out algorithms succeeding even with a vanishing probability $o(1)$. In relevant literature, this is known as \emph{strong hardness}; see~\cite{huang2025strong} as well as~\cite{gamarnik2025optimal,dhawan2025sharp}. 
\begin{remark}
    More explicitly, the $o(1)$ guarantee is of the form $\exp(-\Omega(\compunif\log n)$ for $H_r(n,p)$ and of form $\exp(-\Omega(\comppartite\log{n}))$ for $H(r,n,p)$. Both of these guarantees are essentially optimal, since the probability that a randomly chosen subset with size reaching the computational threshold is independent in $H_r(n,p)$ (or a $\gamma$-balanced independent set in $H(r,n,p)$) is of the same order.
    \end{remark}

Taken together, Theorems~\ref{thm:alg-informal} and~\ref{thm:ogp-informal} sharply characterize the performance of online algorithms in dense $r$-uniform hypergraphs and dense $r$-partite hypergraphs, suggesting an inherent computational barrier, precisely at $\compunif$ and $\comppartite$. Our result recovers the analogous computational gaps present for $r=2$, generalizing them to $r>2$, thus pointing towards a certain universality.\footnote{We remark that such a universality arises also in the context of sparse random graphs, see~\cite{perkins2024hardness,dhawan2026low,wein2020optimal}.}

\begin{remark}
Note that for $H_r(n,p)$, the computational gap is of the order $r^{1/(r-1)}$, which vanishes as $r\to\infty$. Likewise, the same conclusion also holds for $H(r,n,p)$, using the fact $\max_i \gamma_i\ge 1/r$. Thus, the computational problem gets `easier' when the sizes of the hyperedges become large.
\end{remark}

\begin{remark}
As discussed below, we introduce a novel variant of the greedy algorithm—staged and bucketed greedy—together with a carefully designed stopping time that accounts for both the multipartite structure and the balancedness constraint. Both ingredients are crucial for addressing hypergraphs and multipartite structures, and for moving beyond the classical $G(n,p)$ and bipartite setting. In this sense, our results substantially generalize several existing results in the literature. Indeed, for $H_r(n,p)$ with $r=2$, we recover both Karp's algorithmic result~\cite{karp1976probabilistic} as well as the lower bound of~\cite{gamarnik2025optimal}. Likewise, for $H(r,n,p)$, our results with $r=2$ recover the full set of results  in~\cite{dhawan2025sharp}.
\end{remark}

\subsection{Proof Overview}\label{sec:pf-overview}

Our proofs for $H_r(n,p)$ and $H(r,n,p)$ follow a consistent three-part strategy: identifying the statistical threshold, demonstrating achievability via online algorithms, and applying OGP to establish matching algorithmic lower bounds. We focus first on $H_r(n,p)$ to clarify the underlying mechanics. We then discuss the technical innovations required for the more complex $H(r,n,p)$ setting, including the \emph{staged and bucketed} greedy procedure and a new stopping time argument, both of which account for the multipartite structure and the global balancedness requirement. 

The statistical threshold is established via the second moment method~\cite{alon2016probabilistic}. As for the algorithms, our  approach is through an online greedy algorithm. To the best of our knowledge, ours is the first work to analyze the greedy independent set procedure on \ER hypergraphs for $r \geq 3$.

Our impossibility result is positioned in a body of work initiated by Gamarnik and Sudan~\cite{gamarnik2014limits}, who investigated local algorithms for independent sets in sparse random graphs using the Overlap Gap Property (OGP); see below for a background. This property yields a dichotomy: the intersection of two ``large" independent sets must be either significantly large or very small, creating a forbidden ``gap'' in overlap sizes. They demonstrated that a successful local algorithm could be leveraged to produce independent sets that violate this OGP. While this approach was refined by Rahman and Vir\'ag~\cite{rahman2017local} and extended by Wein~\cite{wein2020optimal} to LDPs, these methods fail in dense settings. For dense random graphs, LDPs—and local algorithms, which can be implemented as LDPs—are conjectured to fail; see Conjecture~\ref{conj:LDP-fails}. By contrast, the only known algorithmic frameworks that achieve large independent sets in this regime are online. The arguments of~\cite{wein2020optimal,rahman2017local} crucially rely on algorithmic stability, whereas online algorithms are known to be unstable~\cite{gamarnik2025optimal}. Thus, the dense setting calls for substantially different techniques—see Section~\ref{subsection: ogp prior work} for a further discussion.

In a recent development, Gamarnik, K{\i}z{\i}lda\u g, and Warnke~\cite{gamarnik2025optimal} introduced an OGP-based framework tailored for the online setting. This approach is inherently algorithm-dependent: it utilizes a stopping time to monitor output size, together with interpolation paths that evolve temporally as the algorithm progresses. It yielded sharp lower bounds within online algorithms for the classical \ER random graph $G(n,p)$. Extending these techniques to higher uniformities and multipartite models under a global balancedness constraint turns out to be substantially more challenging, due to additional technical hurdles that are absent in $G(n,p)$ (discussed below). Among other technical contributions, we introduce a new stopping-time argument tailored specifically to hypergraphs, which may be of independent interest. Taken together, our techniques allow us to complete the picture for hypergraph models: while the sparse regime was recently addressed by Dhawan and Wang~\cite{dhawan2026low}, the dense regime had remained open, which is our focus.

Formally, we construct a family of correlated random hypergraphs that remain identical until a specific random stopping time, after which they evolve independently. Executing the algorithm on these correlated instances yields a collection of candidate independent sets with rigid structural constraints—most notably, they must coincide on all vertices revealed prior to the stopping time. We define these configurations as ``forbidden tuples'' and prove they are absent from the correlated random hypergraphs with high probability. Since a successful online algorithm would necessarily imply the existence of such tuples, this contradiction establishes our lower bound.

We next extend our analysis to the $H(r, n, p)$ model. Although the statistical threshold is determined via a second-moment argument, the $\gamma$-balancedness, a global constraint, adds substantial technical complexity. As for developing an online algorithm, the primary hurdle in this model is the global nature of the balancedness requirement, which must be maintained throughout the construction. To address this, we introduce a \textit{staged and bucketed} greedy algorithm. The procedure is partitioned into $r$ distinct stages; in each stage, the algorithm populates the independent set using vertices from ``unexhausted'' parts until some such part reaches its target capacity and is ``locked.'' By controlling acceptance probabilities and bounding failure events via careful estimates, we prove that this algorithm yields a $\gamma$-balanced independent set of size $(1-\epsilon)\comppartite$ with high probability.

The derivation of the impossibility result for $H(r,n,p)$ is substantially more involved. The main difficulties stem from two sources: (i) the multipartite nature  of $H(r,n,p)$, together with the permutation invariance of the parts in the partition, and (ii) the $\gamma$-balancedness constraint, which imposes prescribed cardinality requirements on the intersections of the independent set with the $r$ parts. Among other technical ideas, we introduce a novel stopping time that accounts for both issues simultaneously. Specifically, we stop when the independent set contains sufficiently many vertices from all parts of the partition (by contrast, the stopping time for $H_r(n,p)$ is triggered once the algorithm’s output reaches a prescribed size threshold).

\paragraph{Background on OGP} 
As mentioned earlier, our algorithmic lower bound is based on the OGP framework pioneered by Gamarnik and Sudan~\cite{gamarnik2014limits,gamarnik2017} (and termed in~\cite{gamarnik2018finding}). The OGP is among the most powerful techniques for identifying computationally hard phases of random optimization problems—i.e., those in which the objective function is random. Its origins lie in earlier works on random CSPs, where the apparent hard phases were observed to exhibit striking geometric phase transitions~\cite{achlioptas2006solution,achlioptas2008algorithmic,mezard2005clustering}. The OGP framework systematically converts such geometric features of the solution landscape into formal algorithmic lower bounds; see~\cite{gamarnik2021overlap,gamarnik2025optimal} for surveys on the method.

We now describe the OGP in greater detail for random graphs—the context in which it was originally introduced. For the sparse \ER model $G(n,\frac{d}{n})$, the work of Gamarnik and Sudan~\cite{gamarnik2014limits,gamarnik2017} established that independent sets of size greater than $(1+1/\sqrt{2})n\frac{\log d}{d}$ must either substantially overlap or be nearly disjoint.\footnote{For sparse random graphs, the relevant asymptotic regime is the double limit $n\to\infty$ followed by $d\to\infty$.} As a consequence, they rigorously ruled out local algorithms at this threshold, thereby refuting a well-known conjecture by Hatami, Lov\'asz, and Szegedy~\cite{hatami2014limits}, widely believed at the time. A subsequent work by Rahman and Vir\'ag~\cite{rahman2017local} extended this threshold down to $n\frac{\log d}{d}$ by analyzing the overlap patterns of many independent sets, below which polynomial-time algorithms are known~\cite{lauer2007large}. This approach, now known as the multi OGP ($m$-OGP), has since proved instrumental in obtaining sharp algorithmic lower bounds in a variety of other models. In the context of random graphs, hardness at the $(1+1/\sqrt{2})n\frac{\log d}{d}$ threshold was later extended to LDPs~\cite{gamarnik2020low}, and sharp lower bounds for LDPs were subsequently obtained in~\cite{wein2020optimal} using an asymmetric variant of the $m$-OGP. Still more delicate refinements of the $m$-OGP were required in other models, such as random $k$-SAT~\cite{bresler2021algorithmic} and spin glasses~\cite{huang2021tight,huang2023algorithmic,huang2025strong}. For the latter, this led to the branching OGP, an especially powerful variant organized around an ultrametric tree of solutions. {Even more recently, novel variants of the OGP tailored for online algorithms—which need not be stable—have begun to emerge~\cite{gamarnik2025optimal,dhawan2025sharp}; see Section~\ref{subsection: ogp prior work} for details.} {See Section~\ref{sec:prior-scg} for further background on statistical-computational gaps in the context of random graphs and beyond. A summary of our contributions along with potential future directions of inquiry are now in order.}

\subsection{Summary and Open Problems}

In this work, we consider the computational hardness of finding large ($\gamma$-balanced) independent sets in random hypergraphs.
We focus on online algorithms, one of the most natural algorithmic frameworks containing Karp's original algorithm for $G(n, p)$, which motivated his celebrated conjecture regarding statistical-computational gaps for independent sets in $G(n, p)$.
Our results imply a computational gap of a multiplicative factor of $r^{1/(r-1)}$ for independent sets in $H_r(n, p)$ and a gap of a multiplicative factor of $(\max_{i}\gamma_i)^{-1/(r-1)}$ for $\gamma$-balanced independent sets in $H(r, n, p)$ with respect to online algorithms; these gaps match those shown for LDPs in the sparse regime~\cite{dhawan2026low}.

We conclude this introduction with a description of potential future directions of inquiry.

\paragraph{Future Queries}
The setting of~\cite{gamarnik2025optimal} permits querying a limited set of \emph{future edges}—edges incident to vertices not yet seen—at each step. That is, the decision at time $t$ is based not only on the edges revealed so far, but also on a restricted set of such future edges. In this augmented model,~\cite{gamarnik2025optimal} prove both lower bounds and algorithmic guarantees for independent sets in graphs: specifically, they show that algorithms with modest access to future information can in fact exceed the $(1+\epsilon)\log_b n$ threshold, albeit using quasi-polynomial time.
Similarly,~\cite{dhawan2025sharp} prove analogous results for $\gamma$-balanced independent sets in \ER bipartite graphs.

This naturally raises the following question: can algorithms with limited future information outperform the $(1+\epsilon)\compunif$ threshold for $H_r(n,p)$ and the $(1+\epsilon)\comppartite$ threshold for $H(r, n,p)$? We conjecture that the answer is yes:

\begin{conjecture}\label{conj:Future}
Let $E$ be the set of all future edges ever revealed to the algorithm and denote by $\binom{\A(G)}{r}$ the set of all $r$-edges whose vertices are contained in the output $\A(G)$.
\begin{enumerate}
    \item For any $\epsilon > 0$ and $r\in\mathbb{N}$, there exists a constant $c_{\epsilon,r}>0$ and an online algorithm $\mathcal{A}$ that finds an independent set of size at least $(1+\epsilon)\compunif$ whp in $H_r(n, p)$, provided
    \[
    \left|E\cap \binom{\mathcal{A}(G)}{r}\right|\le c_{\epsilon,r} \Bigl(\log_b (np)\Bigr)^{\frac{r}{r-1}}.
    \]

    \item\label{item: partite future} For any $\epsilon > 0$ and $r\in\mathbb{N}$, there exists a constant $c_{\epsilon,r}>0$ and an online algorithm $\mathcal{A}$ that finds a $\gamma$-balanced independent set of size at least $(1+\epsilon)\comppartite$ whp in $H(r, n, p)$, provided
    \[
    \left|E\cap \binom{\mathcal{A}(G)}{r}\right|\le c_{\epsilon,r} \Bigl(\log_b (np)\Bigr)^{\frac{r}{r-1}}.
    \]
\end{enumerate}  
\end{conjecture}
\begin{remark}
Due to $r$-uniformity, the amount of future information required to surpass the computational threshold is $O\left(\log^{\frac{r}{r-1}}n\right)$. For $r=2$, this is the same scaling arising in~\cite{gamarnik2025optimal,dhawan2025sharp}.
\end{remark}

\paragraph{$\beta$-Independent Sets}

For an integer $1 \leq \beta \leq r-1$, a $\beta$-independent set in a hypergraph $H = (V, E)$ is a set $I \subseteq V$ such that $|e\cap I| \leq \beta$ for each $e \in E(H)$.
These structures interpolate between the notions of \textit{strong} and \textit{weak} independent sets—the cases $\beta = 1$ and $\beta = r-1$ (note that $\beta = r-1$ corresponds to the usual notion of independent sets).
While strong and weak independent sets have been heavily studied in the literature (both in the random and deterministic settings~\cite{krivelevich1998chromatic, frieze2013coloring, dhawan2026fractional, ajtai1982extremal, cutler2013hypergraph}), $\beta$-independent sets for $1 < \beta < r-1$ are not as well understood.

Regarding $H_r(n, p)$, the statistical threshold was determined by~\cite{krivelevich1998chromatic} in the sparse regime.\footnote{They actually prove a stronger result in terms of the coloring variant of the parameter.}
A curious feature of these structures is that the online greedy algorithm is not guaranteed to construct a $\beta$-independent set.
To see this, consider an instance $H \sim H_r(n, p)$ and suppose, for example, at step $t$ we have constructed an independent set $I_t$ thus far. At this point, we do not know the structure of the subhypergraph $H' \coloneqq H[I_t \cup \{v_{t+1}, \ldots, v_n\}]$.
In particular, while $I_t$ is a $\beta$-independent set in $H[\{v_1, \ldots, v_t\}]$, it may not be one in $H'$.
Generalizing our results on $H_r(n, p)$—and, indeed, those of~\cite{dhawan2026low} in the sparse regime—to $\beta < r-1$ is an interesting direction.

\paragraph{Distance-$k$ Independent Sets}

In a similar flavor, a distance-$k$ independent set in a graph $G = (V, E)$ is a set $I \subseteq V$ such that the vertices in $I$ are pairwise of distance at least $k + 1$; equivalently, it is an independent set in $G^{k}$ (note that one recovers the usual notion of independent sets for $k = 1$).
As above, the greedy algorithm is not guaranteed to construct a distance-$k$ independent set in a graph $G$.

Very recently, the growth rate of the statistical threshold for the distance-$k$ independence number was determined for $G(n, p)$~\cite{frieze2026coloring}.
They do not pin down the leading constant factor and raise it as an open problem (see the discussion after~\cite[Conjecture~8]{frieze2026coloring}).
In the deterministic setting, there are a number of results for special graph classes, most notably for line graphs~\cite{kaiser2014distance, bi2026strong, mahdian2000strong}.
We remark that the problem is trivial for random graphs when $p \geq \sqrt{\frac{2\log n}{n}}$ and $k \geq 2$ as such graphs have diameter $2$~\cite{bollobas1981diameter}.
It would be interesting to investigate the problem in the sparse regime.

\begin{question}
    Are there statistical--computational gaps for the maximum distance-$k$ independent set problem when $p = o\left(\sqrt{\frac{2\log n}{n}}\right)$ and $k \geq 2$?
\end{question}

\section{Main Results}\label{sec: main results}
In this section, we state our main results formally.
We first define the random models we consider.

\begin{definition}[The Random Hypergraph Models]\label{definition: models}
    Let $n,\, r \in \N$ such that $n \geq r \geq 2$.
    \begin{itemize}
        \item We construct the hypergraph $H \sim H_r(n, p)$ on vertex set $[n]$ by including each $e \subseteq [n]$ of size $r$ in $E(H)$ independently with probability $p$.
        \item We construct the hypergraph $H \sim H(r, n, p)$ on vertex set $[n]\times [r]$ by including each
        \[e \in V_1 \times \cdots \times V_r,\]
        in $E(H)$ independently with probability $p$.
        Here, $V_i = [n] \times \set{i}$.
    \end{itemize}
\end{definition}

As mentioned earlier, our goal is to understand the limitations of online algorithms in constructing large independent sets in $H_r(n, p)$ and large $\gamma$-balanced independent sets in $H(r, n, p)$.
Before formally describing the algorithmic framework, we define  $\gamma$-balanced independent sets. Ash first introduced these structures~\cite{ash1983two} in the special case $r = 2$ and $\gamma = (1/2, 1/2)$, where he aimed to determine when a bipartite graph contains a Hamilton cycle; this setting has been extensively studied from a combinatorial standpoint giving rise to the field of bipartite Ramsey theory (see, e.g., ~\cite{chakraborti2023extremal, feige1992hardness, axenovich2021bipartite, bal2020bipartite}).
The general setting $\gamma = (\beta, 1 - \beta)$ for $r = 2$ was introduced by Perkins and Wang~\cite{perkins2024hardness}.
Dhawan generalized Ash's definition to $r\geq 3$ for $\gamma = (1/r, \ldots, 1/r)$~\cite{dhawan2025balanced}; in subsequent work with Wang~\cite{dhawan2026low}, they introduced the definition for arbitrary $\gamma \in \Q_+^r$ which we now state:

\begin{definition}[$\gamma$-Balanced independent sets]\label{definition: balanced independent sets}
    Let $H = (V_1\sqcup \cdots \sqcup V_r,\, E)$ be an $r$-uniform $r$-partite hypergraph for $r \geq 2$, and let $\gamma \in \Q_+^r$ be such that $\sum_{i = 1}^r\gamma_i = 1$.
    An independent set $I \subseteq V(H)$ is \textit{$\gamma$-balanced} if $|I\cap V_{\sigma(i)}| = \gamma_i|I|$ for some permutation $\sigma \,:\, [r] \to [r]$ and each $i \in [r]$.
\end{definition}

With the above definitions in hand, we are ready to describe the framework of \textit{online algorithms} with respect to each of these models.

\begin{definition}[Online arrival model]\label{def : online-arrival}
Let $H = (V,E)$ be a random $r$-uniform hypergraph drawn from $H_r(n,p)$ (resp.~$H(r,n,p)$).
A randomized algorithm $\mathcal{A}$ with internal randomness determined by seed $\omega$ runs for $n$ rounds (resp.~$rn$ rounds) and keeps track of the set $S_t$ of vertices revealed thus far; initially $S_t = \emptyset$. 
At each round $t$:
\begin{enumerate}
\item Based on $\omega$ and all information revealed thus far, $\mathcal{A}$ randomly selects a vertex $v_t\in V\setminus S_{t-1}$ and reveals the status of all hyperedges incident to $v_t$ consisting of vertices in $S_{t-1} \cup \{v_t\}$.
That is, for every $T \subseteq S_{t-1}$ with $|T| = r-1$, the algorithm learns whether $T \cup \{v_t\} \in E(H)$ whenever such a tuple is admissible (i.e., always admissible in $H_r(n,p)$ and restricted to cross-part tuples in $H(r,n,p)$).

\item Based on $\omega$ and all information revealed thus far, $\mathcal{A}$ then irrevocably decides if $\mathcal{A}_t(G) = \mathcal{A}_{t-1}(G)\cup \{v_t\}$.
The final output is a set $I \subseteq V$, which is required to be an independent set in $H_r(n, p)$ and a $\gamma$-balanced independent set in $H(r, n, p)$.
\end{enumerate}
\end{definition}

Per Definition~\ref{def : online-arrival}, the vertex arrival order is random, determined jointly by the algorithm’s internal randomness $\omega$ and the randomness of $H$ (its edges). 
When considering $H(r, n, p)$, there is a subtlety to note regarding the information revealed at each step.
In particular, since edges must go across all parts, it is possible that no new information is revealed for multiple steps—for example, it is possible that the algorithm selects vertices from $\cup_{i < r}V_i$ for the first $(r-1)n$ rounds.

We remark that our results hold for the most general setting: (i) the algorithmic bounds (Theorems~\ref{theo: a_COMP-achievable-Hr} and~\ref{theo: a_comp-achievable-Hrnp}) are independent of the arrival order, and (ii) the hardness results (Theorem~\ref{theo: a_COMP-impossible-Hr} and~\ref{theo: a_COMP-impossible-Hrnp}) apply to all online arrival scenarios allowed by Definition~\ref{def : online-arrival}.
Our focus is on online algorithms that return large independent sets with specified probability, formalized as follows.

\begin{definition} \label{def: kdelta-Hr}
    For parameters $k > 0$ and $\delta \in [0,1]$, an online algorithm $\mathcal{A}$ operating according to Definition~\ref{def : online-arrival} is said to $(k, \delta)$-optimize the independent set problem in $H_r(n, p)$ (resp.~the $\gamma$-balanced independent set problem in $H(r, n, p)$) if the following is satisfied when $H \sim H_r(n, p)$ (resp.~$H\sim H(r, n, p)$):
    \[\P[|\mathcal{A}(G)| \geq k] \geq \delta.\]
\end{definition}

We are now ready to state our main results.
We consider each model separately.

\subsection{Independent sets in $H_r(n, p)$}
Recall the statistical and computation thresholds for $H_r(n, p)$ from Section~\ref{subsection: informal results}:
\begin{equation}\label{eq:Thresholds unif}
    \statunif \coloneqq \left(r!\log_b n\right)^{\frac{1}{r-1}}\qquad\text{and}\qquad \compunif\coloneqq \left((r-1)!\log_b n\right)^{\frac{1}{r-1}},
\end{equation}
where $b = 1/(1-p)$.
We begin by determining the size of the largest independent set in $H_r(n,p)$ for constant $p$.
\begin{theorem}\label{theo: a_stat-Hr}
Let $\epsilon > 0$, $p \in (0, 1)$, and $r \in \N$ such that $r\geq 2$.
The following holds for $n$ sufficiently large.
Let $H \sim H_r(n,p)$ and let $Z_\alpha$ denote the
number of independent sets of size $\alpha$ in $H$.
For $\statunif$ as defined in~\eqref{eq:Thresholds unif}, the following hold:
 \begin{enumerate}[label=(H\arabic*)]
    \item\label{item: z fmm} For $\alpha \ge (1+\epsilon)\,\statunif$, $\P[Z_\alpha>0] \xrightarrow{n\to\infty}{} 0$.
    \item\label{item: z smm} For $\alpha \le (1-\epsilon)\,\statunif$, $\P[Z_\alpha>0] \xrightarrow{n\to\infty}{} 1$.
\end{enumerate}
\end{theorem}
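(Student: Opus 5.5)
Item~\ref{item: z fmm} follows from the first moment method and item~\ref{item: z smm} from the second moment method (Paley--Zygmund/Chebyshev) applied to $Z_\alpha$. Write $q = 1-p = b^{-1}$.

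\textbf{First moment.} We have
\[
\E[Z_\alpha] = \binom{n}{\alpha} q^{\binom{\alpha}{r}} \le \exp\Bigl(\alpha\log n - \tfrac{\alpha^r}{r!}(1-o(1))\log b\Bigr).
\]
The exponent equals $\alpha\log b\,\bigl(\log_b n - \tfrac{\alpha^{r-1}}{r!}(1-o(1))\bigr)$. The $(1-o(1))$ factor is $\binom{\alpha}{r}r!/\alpha^r$, which tends to $1$ because $\alpha\to\infty$. For $\alpha \ge (1+\epsilon)\statunif$ we get $\alpha^{r-1}/r! \ge (1+\epsilon)^{r-1}\log_b n$, so the exponent is at most $-c\,\alpha\log n\to-\infty$. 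Markov's inequality then gives \ref{item: z fmm}. Note that $\P[Z_\alpha>0]$ is monotone in $\alpha$ (any larger independent set contains a smaller one), so it suffices to treat $\alpha = (1+\epsilon)\statunif$.

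\textbf{Second moment.} By the same monotonicity it suffices to take $\alpha=(1-\epsilon)\statunif$. Then $\E[Z_\alpha] = \exp(\Omega(\alpha\log n))\to\infty$. I will show $\E[Z_\alpha^2]/\E[Z_\alpha]^2 \to 1$. Split the second moment by the overlap $\ell = |S\cap T|$:
\[
\frac{\E[Z_\alpha^2]}{\E[Z_\alpha]^2} = \sum_{\ell=0}^{\alpha} \frac{\binom{\alpha}{\ell}\binom{n-\alpha}{\alpha-\ell}}{\binom{n}{\alpha}}\, b^{\binom{\ell}{r}} =: \sum_{\ell} f(\ell),
\]
since the two sets share exactly the $\binom{\ell}{r}$ hyperedges inside $S\cap T$, and these are counted only once. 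The terms with $\ell<r$ have $b^{\binom{\ell}{r}}=1$ and sum to at most $1$, so the task is to show $\sum_{\ell\ge r} f(\ell) = o(1)$. The hypergeometric factor is at most $\binom{\alpha}{\ell}(\alpha/n)^{\ell}(1+o(1)) \le \bigl(e\alpha^2/(\ell n)\bigr)^{\ell}$ for $\ell$ small relative to $\alpha$. A bound of the form $\binom{\alpha}{\ell}\binom{n}{\alpha-\ell}/\binom{n}{\alpha}\le \alpha^{2\ell} n^{-\ell}(1+o(1))$ works uniformly. Hence
\[
f(\ell) \le \exp\Bigl(\ell\bigl(2\log\alpha - \log n + \tfrac{\ell^{r-1}}{r!}\log b\bigr)\Bigr).
\]
The term $\ell^{r-1}\log b/r!$ is at most $(\ell/\alpha)^{r-1}(1-\epsilon)^{r-1}\log n \le (1-\epsilon)^{r-1}\log n$. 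Therefore the bracket is at most $-(1-(1-\epsilon)^{r-1})\log n + O(\log\log n)$, which is uniformly at most $-c\log n$. Summing over $\ell\ge r$ gives $\sum_{\ell\ge r} f(\ell) \le \sum_{\ell\ge r} n^{-c\ell} = o(1)$.

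The step that needs care, and the main obstacle, is making the hypergeometric bound uniform up to $\ell=\alpha$, where the approximation $\binom{n-\alpha}{\alpha-\ell}/\binom{n}{\alpha}\approx(\alpha/n)^\ell$ is used near its limits. I would handle this by the exact identity
\[
\frac{\binom{n-\alpha}{\alpha-\ell}}{\binom n\alpha} \le \frac{\alpha!}{(\alpha-\ell)!}\cdot\frac{(n-\alpha)!\,(n-\alpha)!}{n!\,(n-2\alpha+\ell)!}\le \alpha^{\ell}\,\bigl(n-\alpha\bigr)^{-\ell}\cdot(1+o(1)),
\]
which holds since $\alpha = O(\log^{1/(r-1)} n)$ and so $\alpha^2/n\to0$. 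Combined with $\binom{\alpha}{\ell}\le\alpha^\ell$, this gives the claimed bound at every $\ell$. Because the convex function $\ell\mapsto \ell^{r}$ is dominated by its value at the endpoint $\ell=\alpha$, no separate treatment of intermediate overlaps is needed in the dense regime. Chebyshev's inequality then yields $\P[Z_\alpha>0]\ge 1-o(1)$, which proves \ref{item: z smm}.
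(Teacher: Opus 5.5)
Your proposal is correct and follows essentially the same route as the paper: Markov's inequality on $\E[Z_\alpha]\le n^\alpha b^{-\binom{\alpha}{r}}$ for (H1), and for (H2) the overlap decomposition with the hypergeometric factor bounded by $\alpha^{2\ell}n^{-\ell}$, together with $\ell^{r-1}\le\alpha^{r-1}$, which makes every overlap term at most $n^{-c\ell}$, followed by Paley--Zygmund (your Chebyshev step is equivalent). The differences are cosmetic: you split off the terms with $\ell<r$ using the fact that the hypergeometric weights sum to one, and you bound $\binom{n-\alpha}{\alpha-\ell}/\binom{n}{\alpha}$ via an exact factorial identity rather than the paper's comparison with $\binom{n-\ell}{\alpha-\ell}$.
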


Thus, the largest independent set is approximately of size $\statunif$, which we refer to as the \emph{statistical threshold}. 
The proof of Theorem~\ref{theo: a_stat-Hr} follows a standard application of the first and second moment method; see Section~\ref{sec: stat-threshold-unif} for the details.

We now present our achievability result, i.e., we show that there is an online algorithm that constructs a large independent in $H_r(n, p)$ with high probability.

\begin{theorem} \label{theo: a_COMP-achievable-Hr}
Let $\epsilon > 0$, $p \in (0, 1)$, and $r \in \N$ such that $r\geq 2$.
The following holds for $n$ sufficiently large.
There is an online algorithm $\mathcal{A}$ that $(k, \delta)$-optimizes the independent set problem in $H_r(n, p)$, where
\[k = (1-\epsilon)\compunif, \qquad \text{and} \qquad \delta = 1 - \exp\left(-n^{\Theta(\epsilon)}\right).\]
\end{theorem}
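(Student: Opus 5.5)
We use the online greedy algorithm: reveal vertices in any order $v_1,\dots,v_n$, and add $v_t$ to the current set $I_{t-1}$ whenever no hyperedge of the form $T\cup\{v_t\}$ with $T\in\binom{I_{t-1}}{r-1}$ is present. The output is then trivially independent. It remains to show that, whp, $|I_n|\ge k:=(1-\epsilon)\compunif$.

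The key observation is a fresh-randomness property. When $v_t$ arrives, the hyperedges $T\cup\{v_t\}$ with $T\subseteq I_{t-1}$ have never been examined, since each hyperedge is revealed only when its last vertex arrives. Conditioned on the history, $v_t$ is therefore accepted independently with probability
\[
q_j := (1-p)^{\binom{j}{r-1}} = b^{-\binom{j}{r-1}},\qquad j=|I_{t-1}|.
\]
So the process $|I_t|$ is a Markov chain. While the set has size $j$, the time spent waiting for the next acceptance is geometric with parameter $q_j$. The event $\{|I_n|<k\}$ is contained in the event that the total waiting time $W_0+\dots+W_{k-1}$ exceeds $n$, where the $W_j\sim\mathrm{Geom}(q_j)$ are independent. (Coupling: run the chain with fresh coins beyond time $n$ if needed.)

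For $j<k$ we have $\binom{j}{r-1}\le j^{r-1}/(r-1)!\le (1-\epsilon)^{r-1}\log_b n$. Hence $q_j\ge n^{-(1-\epsilon)^{r-1}}\ge n^{-(1-\epsilon)}$.

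We then bound the waiting time crudely. Set $m:=n/k$, which is $n^{1-o(1)}$ since $k=\Theta((\log n)^{1/(r-1)})$. If $\sum_j W_j>n$, then some $W_j>m$. By a union bound,
\[
\P\Big[\sum_j W_j>n\Big]\le k(1-q_{k-1})^{m}\le k\exp\!\big(-m\,n^{-(1-\epsilon)}\big)=k\exp\!\big(-n^{\epsilon-o(1)}\big),
\]
which is $\exp(-n^{\Theta(\epsilon)})$. This gives $\delta=1-\exp(-n^{\Theta(\epsilon)})$, and the bound does not depend on the arrival order.

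The main point requiring care is the conditional independence claim: an adversarial or adaptive arrival order must not bias the fresh hyperedges. This holds because the choice of $v_t$ depends only on previously revealed hyperedges, and the hyperedges $T\cup\{v_t\}$ with $T\subseteq S_{t-1}$ are disjoint from those, so they are still independent $\mathrm{Bernoulli}(p)$. Floors and ceilings, and using $\binom{j}{r-1}\le j^{r-1}/(r-1)!$ rather than the exact threshold, only affect lower-order terms.
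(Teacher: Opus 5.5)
Your proof is correct and follows essentially the same route as the paper: the online greedy algorithm, geometric waiting times with parameter $b^{-\binom{j}{r-1}}$ justified by the fresh-randomness of hyperedges through the arriving vertex, and the bound $\P[|I_n|<k]\le k\bigl(1-b^{-\binom{k-1}{r-1}}\bigr)^{n/k}$ obtained by requiring every waiting time to be at most $n/k$. The only differences are cosmetic. You use a fresh-coin coupling and a direct union bound, whereas the paper conditions on the filtration, multiplies the conditional probabilities, and applies Bernoulli's inequality; your use of $(1-\epsilon)^{r-1}\le 1-\epsilon$ in place of the paper's $1-(1-\epsilon)^{r-1}>\epsilon/2$ is likewise immaterial.
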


See Section~\ref{sec: achieve-unif} for the proof.
We next complement Theorem~\ref{theo: a_COMP-achievable-Hr} with a sharp lower bound.

\begin{theorem}\label{theo: a_COMP-impossible-Hr}
Let $\epsilon > 0$, $p \in (0, 1)$, and $r \in \N$ such that $r\geq 2$.
The following holds for $n$ sufficiently large.
There exists no online algorithm that $(k, \delta)$-optimizes the independent set problem in $H_r(n, p)$, where
\[k = (1+\epsilon)\compunif, \qquad \text{and} \qquad \delta = \exp\left(-O\left(\epsilon^{2} \log_b ^{r/(r-1)} n\right)\right).\]
\end{theorem}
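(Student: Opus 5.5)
We adapt the online OGP framework of~\cite{gamarnik2025optimal} to $r$-uniform hypergraphs. Write $k=(1+\epsilon)\compunif$ and suppose, for contradiction, that $\mathcal{A}$ $(k,\delta)$-optimizes the problem in $H_r(n,p)$. The plan has three parts: a stopping time, a family of correlated instances, and a first-moment bound on a forbidden configuration.

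First, define the stopping time. Fix a small parameter $\eta=\eta(\epsilon)$. Let $\tau$ be the first round at which $|\mathcal{A}_t(H)|=\lceil(\epsilon-\eta)\compunif\rceil$ vertices have been accepted; set $\tau=\infty$ if this never happens. Call the accepted set at that moment $I_\tau$, of size $s\approx\epsilon\compunif$.

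Next, build correlated instances. Take $m$ hypergraphs $H^{(1)},\dots,H^{(m)}$, with $m$ a large constant depending on $\epsilon$. They share the randomness, and in particular the algorithm's seed, up to time $\tau$, so all revealed vertices and hyperedges before $\tau$ coincide. After $\tau$, every hyperedge not yet revealed is resampled independently in each copy. Since $\tau$ is a stopping time for the filtration generated by $\mathcal{A}$, each marginal $H^{(j)}$ is distributed as $H_r(n,p)$. Running $\mathcal{A}$ on each copy therefore succeeds with probability at least $\delta$.

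The key step is to show that all $m$ copies succeed simultaneously with probability at least $\delta^{m}$, or at least $\delta^{O(m)}$. The plan is to condition on $\mathcal{F}_\tau$. Given $\mathcal{F}_\tau$, the copies are conditionally i.i.d. Writing $g=\P[\text{success}\mid\mathcal{F}_\tau]$, we have $\P[\text{all succeed}]=\E[g^m]\ge(\E g)^m\ge\delta^m$ by Jensen. On this event we obtain sets $I^{(1)},\dots,I^{(m)}$, each of size $k$. They share the common core $I_\tau$ of size $s$, and their tails $T_j=I^{(j)}\setminus I_\tau$, each of size $\approx(1+\eta)\compunif$, consist of vertices revealed after $\tau$. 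The tails of different copies may overlap. We handle this by pigeonholing over overlap patterns, or by using the fact that a vertex shared by two tails creates edge requirements in both independent copies.

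Now bound the forbidden configuration by a first moment. Consider tuples consisting of a core $C$ of size $s$ and tails $T_1,\dots,T_m$ with $|C\cup T_j|=k$, where each $C\cup T_j$ is independent in $H^{(j)}$. The count of such tuples is at most $n^{s+m(k-s)}$, up to overlap corrections. For the probability, hyperedges inside $C$ are shared across copies and contribute $(1-p)^{\binom{s}{r}}$ once. Every other $r$-set in $C\cup T_j$ that meets $T_j$ is resampled independently in each copy, contributing $(1-p)^{\binom{k}{r}-\binom{s}{r}}$ per copy. Using $\binom{k}{r}\approx k^r/r!$ and $\log_b n=(\compunif)^{r-1}/(r-1)!$, the $\log_b$ of the expected count is roughly
\[
\frac{(\compunif)^{r}}{(r-1)!}\Bigl[x+m(1+\epsilon-x)-\tfrac{x^r}{r}-m\,\tfrac{(1+\epsilon)^r-x^r}{r}\Bigr],\qquad x=s/\compunif\approx\epsilon .
\]
Since $\bigl((1+\epsilon)^r-\epsilon^r\bigr)/r>1$ for small $\epsilon$, the coefficient of $m$ is negative, of order $\epsilon$ (with $\eta$ tuned so the net per-copy term is $-\Theta(\epsilon)$ after the resampled edges are counted correctly). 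Taking $m=\Theta(1/\epsilon)$ makes the bracket $-\Omega(1)$, or at least $-\Omega(\epsilon^2)$ after optimizing. The expectation is then $\exp\bigl(-\Omega(\epsilon^2\log_b^{r/(r-1)}n)\bigr)$, which is smaller than $\delta^m$ for the stated $\delta$ with a suitable implicit constant. This is the desired contradiction.

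The main obstacle is this last calculation once tail overlaps are included. A vertex in $T_i\cap T_j$ was revealed after $\tau$ in both copies, yet the edges it forms with $C$ are resampled, so they do not double-count favorably. We must sum over all overlap profiles $|T_i\cap T_j|$ and show the exponent stays negative. We also need the probability of $\tau=\infty$ to be negligible, which holds because success forces $\tau<\infty$.

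The first attempt will be to show that pairwise overlaps only help. An overlap saves a factor of $n$ in the count but costs at most a factor of $b^{\Theta(k^{r-1})}=n^{\Theta(1)}$, so the overlap terms can be compared to the no-overlap term. If that fails, we restrict to copies sharing no post-$\tau$ vertices by randomizing the arrival order after $\tau$, as done in~\cite{gamarnik2025optimal}.
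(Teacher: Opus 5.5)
Your architecture matches the paper's: a stopping time on the partial output, copies that agree up to $\tau$ and are resampled afterwards, Jensen to get $\delta^m$, and a first moment over forbidden tuples. The gap is that you stop at the wrong scale, and the first-moment step then fails. In your own bracket the coefficient of $m$ is
\[
c(x)=(1+\epsilon-x)-\frac{(1+\epsilon)^r-x^r}{r}.
\]
At $x\approx\epsilon$ this equals $1-\frac{(1+\epsilon)^r-\epsilon^r}{r}\approx 1-\frac{1}{r}-\epsilon>0$. The inequality $((1+\epsilon)^r-\epsilon^r)/r>1$ that you invoke is false; for $r=2$ the left side is $\frac12+\epsilon$.

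So every additional copy multiplies the expected number of forbidden tuples by $b^{\Theta((\compunif)^r)}$, and no choice of $m$ or of a small $\eta$ rescues the bound. This is not a technicality. A core of size about $\epsilon\compunif$ can be extended, in each independently resampled copy, to an independent set of size $(1+\epsilon)\compunif$, which is far below $\statunif$. Such tuples genuinely abound. Since $c'(x)=x^{r-1}-1$ and $c(1+\epsilon)=0$, $c$ is negative only on a window just below $1+\epsilon$. Expanding gives $c(1-\mu)\approx-\frac{r-1}{2}(\epsilon^2-\mu^2)$, so the shared core must have size $(1-\mu)\compunif$ with $\mu<\epsilon$.

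This is what the paper does. It stops when the partial output reaches $s_0=(1-\mu)\compunif$ with $\mu=\epsilon/2$. The $\log_b$ of the first moment is then, up to the factor $(\compunif)^r/(r-1)!$, equal to
\[
m\,c(1-\mu)-\Bigl(\tfrac{(1-\mu)^r}{r}-(1-\mu)\Bigr)\approx-\tfrac{3(r-1)}{8}m\epsilon^2+\tfrac{r-1}{r}.
\]
Hence $m=C\epsilon^{-2}$ is required, not $\Theta(1/\epsilon)$, and this is exactly the source of the $\epsilon^2$ in $\delta$. A union bound over the $n$ values of $\tau$ and the at most $n^m$ size vectors then gives $\delta^m\le\exp(-\Omega(\log_b^{r/(r-1)}n))$.

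The tail-overlap issue you flag as the main obstacle does not arise. Every $r$-set meeting a tail contains a vertex outside $V_{\mathcal A}(\tau)$, so it is unrevealed at time $\tau$. It is therefore resampled independently in each copy, and is still a fresh $\mathrm{Ber}(p)$ in the base copy given the transcript. Conditional on the transcript the probability factorizes exactly, and the crude count $\prod_i\binom{n}{a_i-s_0}$ already includes overlapping tails. No overlap profiles or re-randomized arrival orders are needed.
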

Consequently, online algorithms exhibit a
statistical–computational gap of a multiplicative factor of $r^{1/(r-1)}$,
matching the gap shown for LDPs in the sparse regime~\cite{dhawan2026low}.

\subsection{Balanced Independent sets in $H(r, n, p)$}

Recall the statistical and computation thresholds for $H(r, n, p)$ from Section~\ref{subsection: informal results}:
\begin{equation}\label{eq:Thresholds partite} 
    \statpartite\coloneqq \left(\frac{\log_b n}{\prod_{i=1}^r \gamma_i}\right)^{\frac{1}{r-1}}\qquad\text{and}\qquad \comppartite\coloneqq \left(\frac{\log_b n}{\prod_{i=1}^r \gamma_i}\max_{1\le i\le r}\gamma_i\right)^{\frac{1}{r-1}},
\end{equation}
where $b = 1/(1-p)$.
We begin by determining the size of the largest $\gamma$-balanced independent set in $H(r, n,p)$ for constant $p$.

\begin{theorem} \label{theo: gamma_bal a_stat}
Let $\epsilon > 0$, $p \in (0, 1)$, $r \in \N$ such that $r\geq 2$, and let $\gamma \in \Q_+^r$ be such that $\sum_i\gamma_i = 1$.
The following holds for $n$ sufficiently large.
Let $H \sim H(r, n,p)$ and let $Z_\alpha(\gamma)$ denote the number of $\gamma$-balanced independent sets of size $\alpha$ in $H$.
For $\statpartite$ as defined in \eqref{eq:Thresholds partite}, the following hold:
 \begin{enumerate}[label=(M\arabic*)]
    \item\label{item: z fmm partite} For $\alpha \ge (1+\epsilon)\,\statpartite$, $\P[Z_\alpha(\gamma)>0] \xrightarrow{n\to\infty}{} 0$.
    \item\label{item: z smm partite} For $\alpha \le (1-\epsilon)\,\statpartite$, $\P[Z_\alpha(\gamma)>0] \xrightarrow{n\to\infty}{} 1$.
    \end{enumerate}
\end{theorem}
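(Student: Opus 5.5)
We argue by the first and second moment methods applied to $Z_\alpha(\gamma)$, in parallel with the proof of Theorem~\ref{theo: a_stat-Hr}. Write $\alpha_i\coloneqq\gamma_i\alpha$. A $\gamma$-balanced set of size $\alpha$ is specified by a permutation $\sigma$ and subsets $S_i\subseteq V_{\sigma(i)}$ with $|S_i|=\alpha_i$. The number of admissible hyperedges inside such a set is $\prod_i\alpha_i=\alpha^r\prod_i\gamma_i$. Hence
\[
\E[Z_\alpha(\gamma)]\le r!\prod_{i=1}^r\binom{n}{\alpha_i}(1-p)^{\alpha^r\prod_i\gamma_i}.
\]
Taking $\log_b$ gives $\log_b\E[Z_\alpha(\gamma)]\le \alpha\log_b n-\alpha^r\prod_i\gamma_i+O(\alpha\log\alpha)$. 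This equals $\alpha\bigl(\log_b n-\alpha^{r-1}\prod_i\gamma_i+O(\log\alpha)\bigr)$.

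For \ref{item: z fmm partite} we may assume $\alpha=(1+\epsilon)\statpartite$. This suffices because any $\gamma$-balanced independent set of larger size contains a $\gamma$-balanced one of size exactly $(1+\epsilon')\statpartite$, obtained by scaling down along $\gamma$ up to rounding; alternatively one can sum the first-moment bound over all $\alpha\ge(1+\epsilon)\statpartite$ up to $rn$. At this $\alpha$ we have $\alpha^{r-1}\prod_i\gamma_i=(1+\epsilon)^{r-1}\log_b n$. The bracket is then $-\Omega(\epsilon\log_b n)$, so $\E Z_\alpha\to0$, and Markov's inequality finishes.

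For \ref{item: z smm partite} we take $\alpha=(1-\epsilon)\statpartite$; smaller $\alpha$ follow by taking subsets. We fix $\sigma=\mathrm{id}$, let $Z$ count only those sets, and show $\E[Z^2]=(1+o(1))(\E Z)^2$, after which Paley--Zygmund applies. Two such sets $S,T$ overlap in $\ell_i=|S_i\cap T_i|$ vertices of part $i$. The number of shared hyperedges is $\prod_i\ell_i$. Then
\[
\frac{\E[Z^2]}{(\E Z)^2}=\sum_{\ell_1,\dots,\ell_r}\prod_{i=1}^r\frac{\binom{\alpha_i}{\ell_i}\binom{n-\alpha_i}{\alpha_i-\ell_i}}{\binom{n}{\alpha_i}}\;b^{\prod_i\ell_i}.
\]
The term with all $\ell_i=0$ is $1-o(1)$. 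A term is bounded by roughly $\prod_i(\alpha_i^2/n)^{\ell_i}\,b^{\prod_i\ell_i}$. If some $\ell_j=0$, the exponential factor vanishes and the sum over the remaining coordinates of $\prod_{i}(\alpha_i^2/n)^{\ell_i}$ (excluding the all-zero term) is $O(\mathrm{polylog}(n)/n)=o(1)$. Therefore all such terms, after subtracting the main one, contribute $o(1)$.

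The main obstacle is the case where all $\ell_i\ge1$. Set $L=\sum_i\ell_i$. The term is at most
\[
\exp\Bigl(-L\log n+O(L\log\log n)+\textstyle\prod_i\ell_i\log b\Bigr).
\]
We use that $\prod_i\ell_i\le\prod_i\ell_i^{\,}$ is controlled by the multilinear bound $\prod_i\ell_i\le\min_j(\ell_j/\alpha_j)\prod_i\alpha_i$. Coordinatewise, $\prod_i\ell_i\le \ell_j\prod_{i\ne j}\alpha_i$. Choosing $j$ to maximize $\ell_j/\alpha_j$, so that $\ell_j\ge\gamma_jL$, we get
\[
\prod_i\ell_i\,\log b\le\frac{\ell_j}{\alpha_j}\,\alpha^r\prod_i\gamma_i\,\log b=\frac{\ell_j}{\gamma_j}(1-\epsilon)^{r-1}\log n.
\]
This is not immediately below $L\log n$, so a finer estimate is needed. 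We use convexity: $\log$ of the term is a function of $(\ell_i)$ that is convex along rays in the box $\prod_i[0,\alpha_i]$, because $\prod_i\ell_i$ is multilinear and the binomial part is essentially linear. It therefore suffices to check the corners, namely small overlaps, and the near-full overlap $\ell=\alpha$. Near full overlap, the term is bounded by $\E[Z]^{-1}$ times polynomial factors, which is tiny since $\E Z=b^{\Omega(\epsilon\alpha\log_b n)}$. For intermediate mixed corners (some $\ell_i=\alpha_i$, others small), the same multilinear bound shows the exponent is dominated by $-\ell\log n$ on the small coordinates, since the product with full coordinates carries the factor $(1-\epsilon)^{r-1}<1$. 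This is the step I would verify most carefully, using $\log_b n-\prod_i\ell_i/\ldots$ bounds coordinate by coordinate. Since there are only $\mathrm{polylog}(n)$ terms, summing gives $1+o(1)$.
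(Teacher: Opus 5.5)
Your first-moment argument and the second-moment setup (fixing $\sigma=\mathrm{id}$, the exact formula for $\E[Z^2]/(\E Z)^2$, and the treatment of overlap vectors with some $\ell_j=0$) are fine and match the paper. The gap is in the case $\ell_1,\dots,\ell_r\ge 1$, which is the heart of the argument, and you flag it yourself as unverified.

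\textbf{What goes wrong.} You pick $j$ \emph{maximizing} $\ell_j/\alpha_j$. This gives $\ell_j/\gamma_j\ge L$, so the bound points the wrong way. You then fall back on a convexity argument that does not work as stated. Convexity of the log-term along rays from the origin only reduces matters to the origin and the faces $\{\ell_j=\alpha_j\}$ of the box, not to its corners, and ``small overlaps'' are not corners. More importantly, maximizing the log-term $\phi(\ell)=-(\log n-2\log\alpha)\sum_i\ell_i+\log b\prod_i\ell_i$ itself over the box gives $\max\phi=\phi(0)=0$. That only says each term is at most $1$, which is useless when you must sum $\mathrm{polylog}(n)$ terms and get $o(1)$. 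So the ``mixed corner'' step is exactly where the proof is missing.

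\textbf{The missing idea.} It is the bound you wrote down, applied with the opposite extremum: take $j$ \emph{minimizing} $\ell_j/\alpha_j$. A minimum of ratios is at most the mediant, so $\min_j\ell_j/\alpha_j\le L/\alpha$. Hence
\[
\prod_i\ell_i\le\frac{L}{\alpha}\prod_i\alpha_i=L\,\alpha^{r-1}\prod_i\gamma_i=(1-\epsilon)^{r-1}L\log_b n .
\]
Then the log of every term with $\ell\neq 0$ is at most $-L\bigl((1-(1-\epsilon)^{r-1})\log n-2\log\alpha\bigr)=-\Omega(L\log n)$, uniformly over the box. No separate treatment of near-full or mixed overlaps is needed.

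This is precisely the paper's Case~4 inequality $\sum_i m_i/\prod_i m_i\ge 1/(\alpha_\epsilon^{r-1}\prod_j\gamma_j)$. The paper obtains it by writing the left side as $\sum_i 1/\prod_{j\ne i}m_j$ and using $m_j\le\gamma_j\alpha_\epsilon$. It also splits off $\prod_i m_i\le\log_b n$ as Case~3, which the single inequality covers anyway.

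\textbf{An alternative repair.} Your corner idea can be rescued by applying it to $\psi(\ell)=\phi(\ell)+c'\sum_i\ell_i$ with $c'=(1-(1-\epsilon)^{r-1})\log n-2\log\alpha$, rather than to $\phi$. Since $\psi$ is affine in each coordinate, its maximum over $\prod_i[0,\alpha_i]$ is attained at one of the $2^r$ vertices. Checking the vertices gives $\psi\le 0$: the origin gives $0$, the full vertex gives exactly $0$, and mixed vertices are negative. This yields $\phi(\ell)\le -c'L$, the same kind of extreme-point argument the paper uses in its impossibility lemma. Either way, this step must be supplied before the proposal is a proof.
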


Thus, the largest $\gamma$-balanced independent set is approximately of size $\statpartite$, which we refer to as the \emph{statistical threshold}. 
The proof of Theorem~\ref{theo: gamma_bal a_stat} follows a standard application of the first and second moment method; see Section~\ref{sec: stat-threshold-rpart} for the details.

We now present our achievability result, i.e., we show that there is an online algorithm that constructs a large $\gamma$-balanced independent in $H(r, n, p)$ with high probability.

\begin{theorem}\label{theo: a_comp-achievable-Hrnp}
Let $\epsilon > 0$, $p \in (0, 1)$, and $r \in \N$ such that $r\geq 2$, and let $\gamma \in \Q_+^r$ be such that $\sum_i\gamma_i = 1$.
The following holds for $n$ sufficiently large.
There is an online algorithm $\mathcal{A}$ that $(k, \delta)$-optimizes the $\gamma$-balanced independent set problem in $H(r, n, p)$, where
\[k = (1-\epsilon)\comppartite, \qquad \text{and} \qquad \delta = 1 - \exp\left(-\Omega\left(n^{\epsilon/(r-1)}\right)\right).\]
\end{theorem}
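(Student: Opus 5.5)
We prove Theorem~\ref{theo: a_comp-achievable-Hrnp} with a \emph{staged and bucketed} greedy algorithm and analyse it by the same coupling used for Theorem~\ref{theo: a_COMP-achievable-Hr}. Relabel so that $\gamma_1\le\gamma_2\le\cdots\le\gamma_r$. Set $\alpha\coloneqq(1-\epsilon)\comppartite$ and targets $k_i\coloneqq\gamma_i\alpha$.

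\textbf{The algorithm.} Split each part $V_i$ into $r$ disjoint buckets $B_{i,1},\dots,B_{i,r}$, each of size $n/r$. Stage $s$ uses only the buckets $B_{\cdot,s}$, so every stage sees fresh vertices. At the start of stage $s$, the parts $1,\dots,s-1$ are already locked with exactly $k_1,\dots,k_{s-1}$ vertices. During stage $s$ we scan vertices of $B_{i,s}$ for all unlocked parts $i\ge s$ in round-robin order. A vertex $v\in V_i$ is added exactly when it creates no hyperedge with the current set $I$. Adding is also capped: we add $v$ only if $|I\cap V_i|$ stays at most the fraction $\gamma_i/\gamma_s$ of $k_s$'s progress. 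Under this cap, unlocked parts grow in the proportion $\gamma_s:\gamma_{s+1}:\cdots:\gamma_r$. Stage $s$ ends once part $s$ reaches $k_s$, and part $s$ is then locked. At that moment every unlocked part $i$ holds about $\gamma_i\cdot k_s/\gamma_s=k_i\cdot(k_s/k_s)$. Tracking the round-robin more carefully, part $i$ holds a $\gamma_i$-proportional share of the growth, so at the end all parts sit exactly at their targets. If the process ever gets stuck, we declare failure.

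\textbf{Acceptance probability.} Consider a vertex $v\in V_i$ when $I$ has $a_j$ vertices in each part $V_j$. Then $v$ is acceptable with probability $(1-p)^{\prod_{j\ne i}a_j}$, independently over fresh vertices. This holds because $v$'s hyperedges to $I$ are unrevealed, and rejected vertices expose no hyperedges among $I$. The worst case is the final addition, with $a_j\approx k_j$. The binding constraint is part $r$ (the largest $\gamma$), and
\[
\prod_{j\ne r}k_j=\alpha^{r-1}\frac{\prod_j\gamma_j}{\gamma_r}=(1-\epsilon)^{r-1}\log_b n .
\]
So the acceptance probability is at least $n^{-(1-\epsilon)^{r-1}}\ge n^{-1+\epsilon'}$. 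The same bound holds for every other part $i$, since $\prod_{j\ne i}k_j\le\prod_{j\ne r}k_j$ because $\gamma_i\le\gamma_r$. This is exactly where the factor $\max_i\gamma_i$ enters $\comppartite$.

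\textbf{Concentration.} Each bucket has $n/r$ vertices, and each step needs only one acceptance. So the number of trials before each of the $O(\log^{1/(r-1)}n)$ additions is stochastically dominated by a geometric variable with success probability at least $n^{-1+\epsilon'}$. The chance that a block of $n/(r\alpha)$ fresh vertices contains no acceptable vertex is at most
\[
(1-n^{-1+\epsilon'})^{n^{1-o(1)}}=\exp(-n^{\epsilon'-o(1)}) .
\]
A union bound over all additions gives failure probability $\exp(-\Omega(n^{\epsilon/(r-1)}))$, after adjusting $\epsilon'$, which matches the claimed $\delta$. This is the same geometric-trial argument as in Theorem~\ref{theo: a_COMP-achievable-Hr}. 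The output is independent by construction, and it is $\gamma$-balanced because each part ends with exactly $k_i$ vertices.

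\textbf{Main obstacle.} The delicate step is the bookkeeping inside each stage. The rule deciding which unlocked part to feed must keep the ratios exact, so that each part reaches its target and no unlocked part overshoots. It must also ensure that the product $\prod_{j\ne i}a_j$ seen by each part never exceeds $(1-\epsilon)^{r-1}\log_b n$. The plan is to order parts by $\gamma$ and bound the product monotonically by its end-of-algorithm value.
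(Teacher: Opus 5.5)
There is a genuine gap, and it sits in the bookkeeping step you flagged as the main obstacle. Your cap makes the unlocked parts grow in the proportion $\gamma_s:\cdots:\gamma_r$. Your own computation $\gamma_i k_s/\gamma_s=k_i$ shows that every part then reaches its target already at the end of stage $1$, so the algorithm is really proportional growth.

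You bound the acceptance probability using the end-of-algorithm configuration and the claim $\prod_{j\ne i}k_j\le\prod_{j\ne r}k_j$. This inequality is reversed: $\prod_{j\ne i}k_j=\alpha^{r-1}\prod_j\gamma_j/\gamma_i=(\gamma_r/\gamma_i)\prod_{j\ne r}k_j$. Under proportional growth the binding part is the one with the \emph{smallest} weight $\gamma_1$. Its last vertices are added while every other part holds about $k_j$ vertices, so a fresh vertex of $V_1$ is acceptable with probability about $n^{-(\gamma_r/\gamma_1)(1-\epsilon)^{r-1}}$. This is $o(1/n)$ as soon as $\gamma_r/\gamma_1>(1-\epsilon)^{-(r-1)}$. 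The $n/r$ vertices of the bucket then whp contain no acceptable vertex, and the process stalls. In fact proportional growth only reaches about $(\gamma_1/\gamma_r)^{1/(r-1)}\comppartite$; for $r=2$ and $\gamma=(1/3,2/3)$ that is half the target. So your argument covers only (nearly) uniform $\gamma$.

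The missing idea is what the stages are for: a part must be completed while every part that will end up larger is still small. When completing part $s$, the exponent must be at most $\prod_{j<s}k_j\cdot k_s^{r-s}$, not $\prod_{j\ne s}k_j$. Then $\prod_{j<s}k_j\,k_s^{r-s}\le\prod_{j<r}k_j=(1-\epsilon)^{r-1}\log_b n$ holds precisely because $\gamma_s\le\gamma_j$ for $j\ge s$. This is where $\max_i\gamma_i$ enters: the largest part is filled last, against the others at their smaller targets.

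To enforce this, cap \emph{every} unlocked part at $k_s$ during stage $s$. The paper does something equivalent: it lets all unlocked parts grow greedily, locks the first one to reach $\gamma_s\hat\alpha$, and bounds a stage's failure by a union bound over sets of $n-\gamma_s\hat\alpha$ vertices of an unlocked part that are all blocked. With this change, your fresh-bucket geometric-trial argument goes through and gives failure probability $\exp(-n^{\epsilon-o(1)})$.

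You can even collapse the stages. Every hyperedge meets each part exactly once, so you may take any $k_1,\dots,k_{r-1}$ vertices of $V_1,\dots,V_{r-1}$ with no rejections at all. After that, each vertex of $V_r$ is independently acceptable with probability $n^{-(1-\epsilon)^{r-1}}$.
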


See Section~\ref{sec: achiv-rpart} for the proof.
We next complement Theorem~\ref{theo: a_comp-achievable-Hrnp} with a sharp lower bound.

\begin{theorem}\label{theo: a_COMP-impossible-Hrnp}
Let $\epsilon > 0$, $p \in (0, 1)$, and $r \in \N$ such that $r\geq 2$, and let $\gamma \in \Q_+^r$ be such that $\sum_i\gamma_i = 1$.
The following holds for $n$ sufficiently large.
There exists no online algorithm that $(k, \delta)$-optimizes the $\gamma$-balanced independent set problem in $H(r, n, p)$, where
\[k = (1+\epsilon)\comppartite, \qquad \text{and} \qquad \delta = \exp\left(-O\left(\epsilon^{2} \log_b ^{r/(r-1)} n\right)\right).\]

\end{theorem}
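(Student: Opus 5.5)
The plan is to adapt the stopping-time/forbidden-tuple argument of~\cite{gamarnik2025optimal} to the multipartite, balanced setting. Throughout, write $k=(1+\epsilon)\comppartite$ and $L\coloneqq(1+\epsilon/2)^{r-1}\log_b n$. Let $s_i(t)\coloneqq|\mathcal{A}_t\cap V_i|$ be the number of output vertices in part $i$ after round $t$, and let $P_j(t)\coloneqq\prod_{i\ne j}s_i(t)$.

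\emph{Step 1: the stopping time.} Define
\[
\tau\coloneqq\min\{t:\ P_j(t)\ge L\ \text{for every } j\in[r]\}.
\]
This is the stopping time that accounts for both the multipartite structure and the permutation $\sigma$ in Definition~\ref{definition: balanced independent sets}. Suppose the run succeeds, i.e.\ its final output $I$ is $\gamma$-balanced of size $\ge k$. Then $s_{\sigma(i)}=\gamma_i|I|$, so for every $j$,
\[
P_j\ \ge\ (1+\epsilon)^{r-1}\log_b n\cdot\frac{\max_l\gamma_l}{\gamma_{\sigma^{-1}(j)}}\ \ge\ (1+\epsilon)^{r-1}\log_b n .
\]
Hence $\tau$ is finite on the success event. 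By minimality of $\tau$, some product $P_j(\tau-1)$ was below $L$. Comparing this with the final counts, some part $i^\ast$ has $s_{i^\ast}(\tau)\le\frac{1+\epsilon/2}{1+\epsilon}s_{i^\ast}(\text{final})+1$. Therefore the run must add at least $f\coloneqq c\,\epsilon k$ vertices of $V_{i^\ast}$ after time $\tau$. Each such vertex $v$ must span no hyperedge with $\mathcal{A}_\tau$, and there are $P_{i^\ast}(\tau)\ge L$ cross-part $(r-1)$-tuples in $\mathcal{A}_\tau$ that could complete such a hyperedge with $v$.

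\emph{Step 2: correlated instances.} Generate $m$ hypergraphs $H^{(1)},\dots,H^{(m)}$ as follows. They share the internal randomness $\omega$ and every hyperedge status revealed up to time $\tau$. Every hyperedge not yet revealed at $\tau$ is resampled independently in each copy. Each $H^{(\ell)}$ is marginally distributed as $H(r,n,p)$, and the runs coincide up to $\tau$. The event that all $m$ runs succeed is a decreasing-in-independence event, so as in~\cite{gamarnik2025optimal} I expect its probability to be at least $\delta^{m}$ up to the standard conditioning/Jensen step. If all $m$ runs succeed, we obtain a forbidden tuple consisting of:
\begin{itemize}
\item a common prefix $I_0=\mathcal{A}_\tau$, with $|I_0|\le k$ and $\prod_{i\ne i^\ast}|I_0\cap V_i|\ge L$;
\item for each copy $\ell$, a set $F_\ell\subseteq V_{i^\ast}\setminus I_0$ with $|F_\ell|=f$, such that every $v\in F_\ell$ forms no hyperedge of $H^{(\ell)}$ with $I_0$.
\end{itemize}

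\emph{Step 3: first moment over forbidden tuples.} Fix a candidate tuple. Any vertex of $F_\ell$ was unexposed at $\tau$ in copy $\ell$, so its hyperedges to $I_0$ are fresh and independent across copies. The independence probability therefore factors as $(1-p)^{mfL}$. The number of candidate tuples is at most $n^{k}\cdot r^{O(1)}\cdot n^{mf}$, where the $r^{O(1)}$ accounts for the choice of $i^\ast$ and the part sizes. The expected count is thus at most
\[
\exp\bigl(\log n\,[\,k+mf-mfL/\log_b n\cdot \log_b e\cdot \ln b\ ]\bigr)\approx n^{\,k-mf\,((1+\epsilon/2)^{r-1}-1)} .
\]
Since $f=\Theta(\epsilon k)$ and $(1+\epsilon/2)^{r-1}-1=\Theta(\epsilon)$, choosing $m=C/\epsilon^2$ makes the exponent at most $-\Omega(k)$. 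The expected number of forbidden tuples is then $\exp(-\Omega(k\log n))$. Comparing with $\delta^{m}$ yields $\delta\le\exp(-O(\epsilon^{-2}\cdot\epsilon^{2}k\log n/m))$, which after bookkeeping gives the stated $\exp(-O(\epsilon^2\log_b^{r/(r-1)}n))$.

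\emph{Main obstacle.} I expect the hardest step to be the choice of $\tau$ and the verification that it forces $\Omega(\epsilon k)$ post-$\tau$ vertices, all in a single part, against a product $\ge L$. The difficulty is that the permutation $\sigma$ and the part $i^\ast$ are random and copy-dependent; I would handle this by a union bound over $i^\ast\in[r]$ and over the $O(k^r)$ possible count vectors at $\tau$. The other delicate point is the $\delta^m$ lower bound on the joint success probability, which needs the conditional-independence argument of~\cite{gamarnik2025optimal}. That argument must be adapted because nothing is revealed while the algorithm only picks vertices from $r-1$ parts, so $\tau$ must be measured in exposed-edge time rather than vertex time.
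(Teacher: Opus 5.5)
Your argument is correct in substance, and it takes a genuinely different, more elementary route than the paper.

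\textbf{The two routes.} The paper stops once the prefix has $(1-\mu)\gamma_j\comppartite$ vertices in part $V_{\pi(j)}$ for every $j$ (with $\mu=\epsilon^2$). It then bounds the expected number of whole $m$-tuples of $\gamma$-balanced independent sets sharing that prefix. This forces it to cap $a_i\le(1+\epsilon)\statpartite$ and to maximize the multilinear exponent $f(\vec\delta)$ over the overlap profile by an extreme-point reduction. In that route it is the $m$-fold replication that makes the exponent negative.

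Your stopping time on the co-products $P_j(t)=\prod_{i\ne j}s_i(t)$ builds into $\tau$ exactly the quantity that governs whether a future vertex can be accepted. Your observation that $P_j\ge(1+\epsilon)^{r-1}\log_b n$ for every $j$, for any $\gamma$-balanced set of size at least $k$, is what makes this work. After $\tau$ you only need $f=\Theta(\epsilon k)$ new vertices in one part. Each of them survives against the prefix with probability at most $b^{-L}=n^{-(1+\epsilon/2)^{r-1}}$, so no overlap optimization is needed.

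\textbf{The replicas are unnecessary on your route.} $\mathcal{A}_\tau$ and $V_{\mathcal A}(\tau)$ are $\mathcal F_\tau$-measurable, and every hyperedge through a vertex outside $V_{\mathcal A}(\tau)$ is still a fresh $\mathrm{Ber}(p)$ given $\mathcal F_\tau$. So already with $m=1$,
\[
\P[\text{success}]\le r\binom{n}{f}b^{-fL}\le r\,n^{-f\left((1+\epsilon/2)^{r-1}-1\right)}\le\exp\left(-\Omega\left(\epsilon^2(\log_b n)^{\frac{r}{r-1}}\right)\right),
\]
which is the claimed bound.

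\textbf{What each route buys.} The paper's forbidden-tuple framework follows the template of \cite{gamarnik2025optimal}. There it was needed to handle settings, such as limited future-edge queries, where this direct freshness argument is unavailable. For the pure online model treated here, your route is shorter and more transparent.

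\textbf{Points to tidy.} None of these is serious.
\begin{itemize}
\item The claim $|I_0|\le k$ is not justified. On success you only know $|I_0|\le\min_\ell a_\ell$, and $a_\ell$ can exceed $k$. Either drop the union bound over $I_0$ by conditioning on $\mathcal F_\tau$ as above, or cap $a_\ell\le(1+\epsilon)\statpartite$ as the paper does and enlarge $C$.
\item The part $i^\ast$ depends on each copy's final output, so it is copy-dependent; only the index $j_0$ with $P_{j_0}(\tau-1)<L$ is common. The union bound is therefore over $[r]^m$. This is harmless, since $P_j(\tau)\ge L$ for every $j$ and $m$ is a constant.
\item No ``exposed-edge time'' is needed. $\tau$ is a stopping time for the round filtration, and rounds that reveal nothing are irrelevant.
\item The $\delta^m$ lower bound is exactly the paper's Jensen step: given $\mathcal F_\tau$, the $m$ instances are i.i.d. No monotonicity property of the success event is involved.
\end{itemize}
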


The proof of this result can be found in Section~\ref{subsec:imposs_Hrnp}. Consequently, online algorithms exhibit a
statistical–computational gap of a multiplicative factor of $(\max_{i}\gamma_i)^{-1/(r-1)}$,
matching the gap shown for LDPs in the sparse regime~\cite{dhawan2026low}.
Note that this gap matches that for independent sets in $H_r(n, p)$ when considering $\gamma = \mathbf{1}/r$ (here, $\mathbf{1}$ denotes the all ones vector in $\Q^r$), i.e., the truly balanced case where the independent set contains an equal number of vertices from each part.

\section{Further Background}\label{sec:prior-scg}

Statistical-computational gaps - i.e., gaps between what is information-theoretically possible and what is achievable by known polynomial-time algorithms - are a central feature in many average-case models. Examples include optimization problems over random graphs (which is also our main focus)~\cite{gamarnik2014limits,gamarnik2017,gamarnik2020low,wein2020optimal,perkins2024hardness,dhawan2026low,gamarnik2025optimal,dhawan2025sharp},  random constraint satisfaction problems (CSPs)~\cite{achlioptas2006random,achlioptas2006solution,achlioptas2008algorithmic,gamarnik2017performance,bresler2021algorithmic,kizildaug2023sharp,yung2024,mezard2005clustering}, spin glasses~\cite{chen2019suboptimality,gamarnikjagannath2021overlap,gamarnik2025shattering,huang2021tight,huang2023algorithmic,huang2025strong,sellke2025tight, sen2018optimization}, as well as other models such as number balancing~\cite{gamarnik2023algorithmic,mallarapu2025strong,kizildaug2023planted}, discrepancy minimization~\cite{gamarnik2022algorithms,gamarnik2023geometric,li2024discrepancy}, graph alignment~\cite{du2025algorithmic}, largest subtensor problem~\cite{bhamidi2025finding,kr2026large} and beyond. Due to randomness, classical complexity theory often offers little insight into such problems (see, e.g.,~\cite{ajtai1996generating,boix2021average,gamarnik2018computing,vafa2025symmetric} for several notable exceptions). As a result, a large body of work has developed alternative frameworks to probe the `apparently hard’ phase of these average-case models. We do not discuss these frameworks in detail here and instead refer the reader to the excellent surveys~\cite{bandeira2018notes,kunisky2019notes,gamarnik2021overlap,gamarnik2025turing,gamarnik2022disordered,wein2025computational,wu2018statistical} for broader background.

\subsection{Optimization and Inference on Random Hypergraphs}

While random graph inference has been a cornerstone of research for decades, its extension to hypergraphs remains relatively uncharted territory. 
The hypergraph setting is generally regarded as significantly more complex. 
Nevertheless, recent years have seen a surge in both theoretical and practical interest in the field. 
This momentum is evidenced by the use of spectral methods for testing and estimation problems~\cite{luo2022tensor, jones2023sum}, as well as ongoing efforts to identify statistical and computational thresholds for planted variants~\cite{yuan2021heterogeneous, yuan2021information, dhawan2023detection}.

Although the maximum independent set problem is NP-hard, significant advancements have been made by focusing on restricted algorithmic classes within structurally constrained hypergraphs. Specifically, research has addressed bounded-degree hypergraphs~\cite{halldorsson2009independent, guruswami2011complexity}, streaming algorithms for sparse hypergraphs~\cite{halldorsson2016streaming}, semi-random models~\cite{khanna2021independent}, and semidefinite programming (SDP) approaches~\cite{halperin2002improved, agnarsson2013sdp}. Notably,~\cite{dhawan2026low} provided the first investigation into the statistical–computational gap for finding independent sets in \ER hypergraphs, specifically targeting LDPs in the sparse regime.

In contrast, multipartite hypergraphs remain relatively under-explored from a computational perspective, though several theoretical works offer valuable foundations~\cite{kamvcev2017bounded, dhawan2023list, bowtell2024matchings}. One notable study by~\cite{guruswami2015inapproximability} investigates the minimum vertex cover problem in these structures, revealing a striking complexity shift: the problem is tractable for $r = 2$ but becomes NP-hard for $r \geq 3$. A parallel phenomenon was observed by~\cite{botelho2012cores} regarding the emergence of a $k$-core (a subhypergraph with a minimum degree at least $k$) within certain random multipartite hypergraph models.
Interestingly,~\cite{dhawan2026low}—the first to analyze the statistical–computational gap for independent sets in Erdős–Rényi hypergraphs—noted that the maximum balanced independent set problem does not exhibit this same sensitivity to uniformity. This divergence in behavior highlights the unique complexity landscape of multipartite hypergraphs compared to their standard counterparts.

\subsection{OGP for Online Algorithms}\label{subsection: ogp prior work}
As discussed previously, at a high level, the OGP 
asserts that solutions of certain optimization problems at intermediate distances do not exist—they either overlap substantially, or are basically disjoint. The method is primarily tailored for algorithms that exhibit input stability and, by virtue of smooth evolution under perturbations of the input, cannot overcome OGP-based barriers.\footnote{Indeed, there are models such as the shortest path which exhibit OGP, yet remain amenable to linear programing; such algorithms are known to be unstable~\cite{li2024some}.} Stable algorithms form a broad class and include many prominent paradigms such as local algorithms~\cite{gamarnik2017,gamarnik2017performance,rahman2017local}, approximate message passing~\cite{gamarnikjagannath2021overlap}, low-degree polynomials~\cite{gamarnik2020low,wein2025computational}, low-depth Boolean circuits~\cite{gamarnik2021circuit}, gradient descent and Langevin dynamics~\cite{gamarnik2020low}, as well as other miscellaneous algorithms~\cite{gamarnik2022algorithms}. 
\begin{remark}
For models such as number balancing and binary perceptrons, the OGP-based hardness guarantees were supported by classical evidence of hardness (e.g., based on lattices) in certain parameter regimes~\cite{vafa2025symmetric}. 
\end{remark}

For dense random graphs, however, the only known successful polynomial-time algorithms are online, which may well be unstable even in the vanilla \ER model $G(n,p)$~\cite[Proposition~1.1]{gamarnik2025optimal}. The first OGP-based argument for online algorithms was established in the context of binary perceptron~\cite{gamarnik2023geometric} via a resampling argument, and was later extended to graph alignment~\cite{du2025algorithmic} and largest subtensor problems~\cite{bhamidi2025finding}. These works combine existing $m$-OGP barriers along with a careful resampling scheme to derive lower bounds.

Moving beyond is substantially more challenging, even for $G(n,p)$. Intuitively, the main obstacle is precisely the absence of stability—the very notion classical OGP-based barriers build upon. For $G(n,p)$ with constant $p$, the first sharp lower bounds for online algorithms were obtained by Gamarnik, K{\i}z{\i}lda\u{g}, and Warnke~\cite{gamarnik2025optimal} using novel technical refinements. While classical OGP-based barriers are largely oblivious to the algorithm itself, the argument in~\cite{gamarnik2025optimal} develops judiciously designed temporal interpolation paths that evolve with the algorithm and account for the online structure. Additionally, their argument uses a stopping time to track the input size, and constructs the interpolation paths randomly. These ingredients represent substantial departures from the classical OGP toolkit. Subsequent work of Dhawan, K{\i}z{\i}lda\u{g}, and Maitra~\cite{dhawan2025sharp} extended these to the balanced independent set problem in random bipartite graphs, where incorporating the balancedness constraint and handling vertex arrival orders (which is random and may potentially reveal relatively few cross-edges and limited information) introduce delicate technical complications.

One of the main motivations of the present paper is to pin down the algorithmic thresholds for dense hypergraph models. Given that the sparse setting is by now relatively well understood, our broader goal is to develop a comparably robust theory for dense random graphs and hypergraphs.

We close this section by noting that online algorithms play a central role in machine learning and optimization, especially in the modern era of `big data and AI'~\cite{hazan2016introduction,rakhlin2010online,rakhlin2011online-a,rakhlin2011online-b,rakhlin2013online}. Many practically relevant algorithms are online while failing to satisfy the kinds of stability assumptions underlying classical OGP arguments, which further underscores the need for a more complete toolkit for online models.

\section{Independent Sets in $H_r(n, p)$} 

In this section, we prove our results pertaining to $H_r(n, p)$.
We further split the section into three subsections devoted to the proofs of Theorems~\ref{theo: a_stat-Hr},~\ref{theo: a_COMP-achievable-Hr}, and~\ref{theo: a_COMP-impossible-Hr}, respectively.

\subsection{Statistical Threshold}\label{sec: stat-threshold-unif}
In this section, we prove Theorem~\ref{theo: a_stat-Hr}. A fixed $\alpha$-set is independent if and only if none of its $\binom{\alpha}{r}$ $r$-tuples is present in the hypergraph, which occurs with probability $(1-p)^{\binom{\alpha}{r}}=b^{-\binom{\alpha}{r}}$. Hence, for every $\alpha$, we have
\begin{equation}\label{eq: expected Z}
    \mathbb{E}[Z_{\alpha}] = \binom{n}{\alpha} b^{-\binom{\alpha}{r}}.
\end{equation}

Let $\alpha \geq (1 + \epsilon) \statunif$, where $\statunif$ is defined in \eqref{eq:Thresholds unif}.
We have
\begin{align*}
    \mathbb{E}[Z_{\alpha}] & \leq n ^ {\alpha} b ^ {- \binom{\alpha}{r}} = \exp\left(\alpha \log n - \binom{\alpha}{r} \log b\right)  \\
    & \leq \exp\left(\alpha \log n - (1 - o(1))\frac{\alpha^r}{r!} \log b\right)\ \\
    & = \exp\left(\alpha \log n\left(1 - (1 - o(1))\frac{\alpha^{r-1}}{r!\log_bn}\right)\right)\ \\
   &\le \exp\left(\alpha \log n\left(1 - (1 - o(1))(1+\epsilon)^{r-1}\right)\right)\ \\
    &= \exp\left(-\Omega\left((\log_b n)^{\frac{r}{r-1}}\right)\right)
    \xrightarrow{n\to\infty}{}0.
\end{align*}
Thus,~\ref{item: z fmm} follows by Markov's inequality.

For $\alpha \le \alpha '$, we have $Z_\alpha \ge Z_{\alpha '}$. Therefore, to prove~\ref{item: z smm}, it suffices to show the claim holds for $\alpha = \alpha_\epsilon \coloneqq (1 - \epsilon) \statunif$. We define
\[
Z_{\alpha_\epsilon} = \sum_{\substack{S\subseteq [n]\\ |S|=\alpha_\epsilon}} I_S,
\qquad
\text{where}
\qquad 
I_S \coloneqq \mathbf{1}\{S \text{ is an independent set in } H_r(n,p)\}.
\]
Then
\begin{equation}\label{eq: z squared}
    Z_{\alpha_\epsilon}^2 =\sum_{\substack{S\subseteq [n]\\ |S|=\alpha_\epsilon}} \ \sum_{\substack{T\subseteq [n]\\ |T|=\alpha_\epsilon}} I_S I_T. 
\end{equation}
We parameterize ordered pairs $(S,T)$ by $m \coloneqq |S\cap T|.$ Clearly, $0\le m \le \alpha_\epsilon$.
Fix $S \subseteq V(H)$ with $|S|=\alpha_\epsilon$. The number of $T \subseteq V(H)$ such that $|T|=\alpha_\epsilon$ and $|S\cap T|=m$ is 
\begin{equation}\label{eq: n(m)}
    N(m) = \binom{\alpha_\epsilon}{m}\binom{n-\alpha_\epsilon}{\alpha_\epsilon-m}. 
\end{equation} 
Note that
\begin{equation}\label{eq: expected is,t}
    \mathbb{E}[I_S I_T] = (1-p)^{2\binom{\alpha_\epsilon}{r}-\binom{|S\cap T|}{r}}=b^{-\left(2\binom{\alpha_\epsilon}{r}-\binom{|S\cap T|}{r}\right)}. 
\end{equation}
Summing over $m$ and combining \eqref{eq: z squared}, \eqref{eq: n(m)}, and \eqref{eq: expected is,t} yields 
\[
\mathbb{E}[Z_{\alpha_\epsilon}^2]
= \binom{n}{\alpha_\epsilon}\sum_{m=0}^{\alpha_\epsilon}
N(m)
\,b^{-\left(2\binom{\alpha_\epsilon}{r}-\binom{m}{r}\right)}.
\]
By \eqref{eq: expected Z}, we have $\mathbb{E}[Z_{\alpha_\epsilon}]
=\binom{n}{\alpha_\epsilon}b^{-\binom{\alpha_\epsilon}{r}}$, which implies
\[
\frac{\mathbb{E}[Z_{\alpha_\epsilon}^2]}{\mathbb{E}[Z_{\alpha_\epsilon}]^2}
= 1+\sum_{m=1}^{\alpha_\epsilon}
\frac{\binom{\alpha_\epsilon}{m}\binom{n-\alpha_\epsilon}{\alpha_\epsilon-m}}
{\binom{n}{\alpha_\epsilon}}\; b^{\binom{m}{r}} .
\]
Note that
\[
  \frac{\binom{n - \alpha_\epsilon}{\alpha_\epsilon - m}}{\binom{n}{\alpha_\epsilon}} \leq  \frac{\binom{n - m}{\alpha_\epsilon - m}}{\binom{n}{\alpha_\epsilon}} = \frac{(n - m)! (n - \alpha_\epsilon)! \alpha_\epsilon!}{n!(\alpha_\epsilon - m)!(n - \alpha_\epsilon)!} \leq \left(\frac{\alpha_\epsilon}{n}\right)^m.
\]
Applying the above in conjunction with the standard binomial coefficient upper bound, we simplify our earlier expression to
\begin{align*}
    \frac{\mathbb{E}[Z_{\alpha_\epsilon}^2]}{\mathbb{E}[Z_{\alpha_\epsilon}]^2} 
    & \leq 1 + \sum^{\alpha_\epsilon}_{m = 1} \alpha_\epsilon^{m} \left(\frac{\alpha_\epsilon}{n}\right) ^ m b ^{\frac{m^r}{r!}} \\
    & = 1+ \sum^{\alpha_\epsilon}_{m =1} \exp\left(2m \log \alpha_\epsilon - m \log n + \frac{m^r}{r!} \log b \right) \\
    & = 1 + \sum^{\alpha_\epsilon}_{m =1} \exp\left(m\left(2 \log \alpha_\epsilon - \log n + \frac{m^{r-1}}{r!} \log b\right)\right) .
\end{align*}
By the choice of $\alpha_\epsilon$, we have
\[
\frac{m^{r-1}}{r!}\log b \;\le\; \frac{\alpha_\epsilon^{\,r-1}}{r!}\log b
\;\le\; (1-\epsilon)\log n.
\]
Hence, for each $m\ge1$,
\[
m\left(2\log\alpha_\epsilon-\log n+\frac{m^{r-1}}{r!}\log b\right)
\;\le\; m\left(2\log\alpha_\epsilon-\epsilon\log n\right)
= -\,\Omega(\log n),
\]
since $\alpha_\epsilon=(\log n)^{O(1)}$.
It follows that
\[
\sum_{m=1}^{\alpha_\epsilon}\exp \left(-\Omega(\log n)\right)
= \exp \left(-\Omega(\log n)\right)\to 0.
\]
Therefore,
\[
1 \leq \frac{\mathbb{E}[Z_{\alpha_\epsilon}^2]}{\mathbb{E}[Z_{\alpha_\epsilon}]^2}
\le 1+\exp \left(-\Omega(\log n)\right).
\]
Using the Paley--Zygmund inequality, we have
\[
    \P[Z_{\alpha_\epsilon} > 0] \geq \frac{\mathbb{E}[Z_{\alpha_\epsilon}]^2}{\mathbb{E}[Z_{\alpha_\epsilon}]^2} = 1 - \exp \left(-\Omega(\log n)\right)\xrightarrow{n \to \infty}1,
\]
completing the proof.

\begin{remark}[Diverging $r$]\label{rem:div_r_hr}
    Theorem~\ref{theo: a_stat-Hr} and its proof in fact stays true if we allow $r=r_n$ to grow with $n$, as long as (1) we can write $\binom{{\alpha_\epsilon}}{r}=(1+o(1))\frac{\alpha_\epsilon^r}{r!}$, which is true whenever $r^2=o(\alpha_\epsilon)$, and (2) when $\statunif \gg 1$, which is true whenever $r_n = O\left(\frac{\log \log n}{\log \log \log n}\right)$.
\end{remark}

\subsection{Achievability Result} \label{sec: achieve-unif}

In this section, we will prove Theorem~\ref{theo: a_COMP-achievable-Hr}. Our goal is to analyze the size of the independent set produced by an online greedy algorithm. The algorithm proceeds in rounds $t = 1, \dots, n$, where at each round a vertex $v_t$ is revealed according to an online vertex arrival model.
We let $S_t$ denote the set of vertices queried up until and including time $t$, and let $I_t \subseteq S_t$ denote the current independent set, with $I_0 = \emptyset$. Upon revealing $v_t$, the algorithm adds $v_t$ to $I_{t-1}$ if and only if adding it preserves independence, i.e., $I_{t-1} \cup \{v_t\}$ does not contain any hyperedge. 
If $v_t$ is accepted, we set $I_{t} = I_{t-1} \cup \{v_t\}$; otherwise, $I_t = I_{t-1}$. The final output is $I_n$. 
Let us now describe our algorithm formally.

\begin{algorithm}[H]
\caption{Greedy Algorithm for Building an Independent Set in $H_r(n,p)$}
\label{alg:greedy-IS-Hrnp}
$I_0 \gets \emptyset$, \qquad $S_0 \gets \emptyset$,\qquad $t_0 \gets 0$,\qquad $i \gets 0$\ \\
\For{$t \gets 1$ \KwTo $n$}{
  Sample a random vertex $v_t \notin S_{t-1}$ \\
  $S_{t} = \{v_t\} \cup S_{t-1}$ \\
  \If{$v_t \notin I_{t-1}$ \textbf{and} $I_{t-1} \cup \{v_t\}$ is independent}{
    $I_{t} \gets I_{t-1} \cup \{v_t\}$  \\
    $i \gets i + 1$ \\
    $\Delta_i \gets t - t_{0}$ \\
    $t_{0} \gets t$    
  }
  \Else{
    $I_t \gets I_{t-1}$
  }
}
\Return $I_{n}$
\end{algorithm}

For each. $i\ge 1$, define
\[
t_i\coloneqq\min\{n+1,\, \min\{t\in[n]: |I_t|=i\}\},
\qquad
\Delta_i\coloneqq t_i-t_{i-1}.
\]
Thus, if the greedy algorithm never reaches size $i$ by time $n$, then $t_i=n+1$. Note that $\Delta_1,\dots,\Delta_{r-1}$ are deterministically equal
to one.
When the current greedy independent set has size $i-1$, a newly sampled vertex is
accepted if it does \emph{not} lie in any edge together with $r-1$ vertices from the
current independent set. There are
$\binom{i-1}{r-1}$ such possible edges, each present independently with probability $p$.
Hence for $b = 1/(1-p)$, we have
\[
    \P\left[v_t \text{ is added to } I_{t-1}\right]
  = (1-p)^{\binom{i-1}{r-1}} = b^{-\binom{i-1}{r-1}},
\]
for $t_{i-1} < t \leq t_i$. Therefore, conditioned on the history $\mathcal F_{t_{i-1}}$, the waiting time for
the greedy independent set to grow from size $i-1$ to size $i$, truncated by the
remaining time horizon, satisfies
\begin{align}
    \Delta_i \mid \mathcal {F}_{t_{i-1}}
\overset{d}{=}
\min\{\Delta_i',\, n+1-t_{i-1}\},
\qquad
\Delta_i'\sim \mathrm{Geom}(b^{-\binom{i-1}{r-1}}).
\label{eq: delta i}
\end{align}
Recall that $I_t$ denotes the greedy independent set after $t$ sampling steps. We are interested in $|I_n|$, the size after $n$ steps. Observe that
\[
|I_n|\ge k
\quad\Longleftrightarrow\quad
\Delta_1+\cdots+\Delta_k\le n, 
\]
and,
\[
\bigcap_{i=1}^k\{\Delta_i\le n/k\}
\subseteq
\{\Delta_1+\cdots+\Delta_k\le n\},
\]
we have
\begin{align}
\mathbb P[|I_n|\ge k]
&\ge
\mathbb P\left[
\bigcap_{i=1}^k\{\Delta_i\le n/k\}
\right]\notag \\ 
&=
\prod_{i=1}^k
\mathbb P\left[
\Delta_i\le n/k
\,\middle|\,
\Delta_1\le n/k,\ldots,\Delta_{i-1}\le n/k.
\right].
\label{eq: P In-k}
\end{align}
Fix $i$. Since
$\Delta_1,\ldots,\Delta_{i-1}$ have already been revealed by time
$t_{i-1}$, the event \(\{\Delta_1\le n/k,\ldots,\Delta_{i-1}\le n/k\}\) is determined by the history $\mathcal F_{t_{i-1}}$. Hence, by the tower
property,
\[
\begin{aligned}
&\mathbb P\left[
\Delta_i\le n/k
\,\middle|\,
\Delta_1\le n/k,\ldots,\Delta_{i-1}\le n/k
\right] \\
&\qquad =
\mathbb E\left[
\mathbb P\left[
\Delta_i\le n/k
\,\middle|\,
\mathcal F_{t_{i-1}}
\right]
\,\middle|\,
\Delta_1\le n/k,\ldots,\Delta_{i-1}\le n/k
\right].
\end{aligned}
\]
From the conditional distribution~\eqref{eq: delta i}, we have \(\Delta_i \mid \mathcal F_{t_{i-1}} \le \Delta_i'\) and
\[
\mathbb P\left[
\Delta_i>n/k
\,\middle|\,
\mathcal F_{t_{i-1}}
\right]
\le
\mathbb P[\Delta_i'>n/k]
=
\left(1-b^{-\binom{i-1}{r-1}}\right)^{n/k}.
\]
Equivalently,
\[
\mathbb P\left[
\Delta_i\le n/k
\,\middle|\,
\mathcal F_{t_{i-1}}
\right]
\ge
1-
\left(1-b^{-\binom{i-1}{r-1}}\right)^{n/k}.
\]
Since this lower bound holds after conditioning on the full history
$\mathcal F_{t_{i-1}}$, it also holds after averaging over all histories satisfying \(\Delta_1\le n/k,\ldots,\Delta_{i-1}\le n/k.\) Therefore,
\[
\mathbb P\left[
\Delta_i\le n/k
\,\middle|\,
\Delta_1\le n/k,\ldots,\Delta_{i-1}\le n/k
\right]
\ge
1-
\left(1-b^{-\binom{i-1}{r-1}}\right)^{n/k}.
\]
Substituting this bound into~\eqref{eq: P In-k} gives
\[
\mathbb P[|I_n|\ge k]
\ge
\prod_{i=1}^k
\left(
1-
\left(1-b^{-\binom{i-1}{r-1}}\right)^{n/k}
\right)
\ge
\left(
1-
\left(1-b^{-\binom{k-1}{r-1}}\right)^{n/k}
\right)^k.
\]
Applying Bernoulli’s inequality, we further simplify to
\[
  \P\left[|I_n|\ge k\right]
  \ge 1 - k\left(1-b^{-\binom{k-1}{r-1}}\right)^{n/k}.
\]
Using $(1-x) \le e^{-x}$, we obtain
\begin{align}
  \left(1-b^{-\binom{k-1}{r-1}}\right)^{n/k}
  &\leq \exp\left(
      -\frac{n}{k}b^{-\binom{k-1}{r-1}} \right)
\label{eq:ineq}
\end{align}
For $k \coloneqq (1-\epsilon)\compunif$, using $\binom{k-1}{r-1} \leq \frac{k^{r-1}}{(r-1)!}$, we have
\[
b^{-\binom{k-1}{r-1}}
\geq b^{-(1-\epsilon)^{r-1}\log_b n}
= n^{-(1-\epsilon)^{r-1}}.
\]
Plugging this into \eqref{eq:ineq} yields
\[
\left(1-b^{-\binom{k-1}{r-1}}\right)^{n/k}
\leq \exp\left(-\frac{n^{\,1-(1-\epsilon)^{r-1}}}{k}\right)
\le \exp\left(- n^{\Theta(\eps)}\right),
\]
where we use the fact that $k=\mathrm{polylog}(n)$ and
$1-(1-\epsilon)^{r-1}> \epsilon/2$.
In particular,
\[
\mathbb P\left(|I_n|\ge (1-\epsilon)\compunif\right)\ge 1-\exp\left(-n^{\Theta(\eps)}\right),
\]
as desired.

\subsection{Impossibility for Online Algorithms}

In this section, we prove Theorem~\ref{theo: a_COMP-impossible-Hr}.
Assume for contradiction that there exists an online algorithm $\mathcal{A}$ as defined in Definition~\ref{def : online-arrival} that outputs an independent set
of size at least $(1+\epsilon)\compunif$ with high probability. Our
argument proceeds in three steps:
\begin{enumerate}
    \item Using $\mathcal{A}$, we construct a collection of correlated random instances of $H_r(n, p)$ that remain indistinguishable to $\mathcal{A}$ up to a carefully chosen stopping time.
    \item We show that the algorithm's decisions across these correlated instances yields a tuple of independent sets satisfying certain key structural properties with probability $= \omega(\delta)$ (recall $\delta$ from the statement of Theorem~\ref{theo: a_COMP-impossible-Hr}).
    \item Finally, we prove that such a configuration exists in $H_r(n,p)$ with probability $\ll \delta$, a contradiction.
\end{enumerate}
This contradiction implies that the online model cannot overcome the computational threshold $\compunif$.

Recall from Definition~\ref{def: kdelta-Hr} that our online algorithm uses internal randomness via the vertex arrival order. 
An identical argument to that in~\cite[Section 4.1]{gamarnik2025optimal}, \textit{mutatis mutandis}, implies that it suffices to rule out
\emph{deterministic} online algorithms that $(k,\delta)$-optimize the independent set problem
in $H_r(n,p)$ for the specified values of $k$ and $\delta$. 
Therefore, for the remainder of this section we fix and work with a deterministic online algorithm $\mathcal{A}$.

\subsubsection{Correlated Random Hypergraph Families}\label{subsec:correlated-uniform}
Run $\mathcal A$ on the base hypergraph $H\sim H_r(n, p)$.
For each time $1 \le T \le n$, let $E_{\mathcal A}(T) \subseteq \binom{[n]}{r}$ denote the set of all $r$-hyperedges whose status of presence or absence
is queried by
$\mathcal A$ up to and including time $T$, and let $V_{\mathcal A}(T) \subseteq [n] $ denote the set of vertices exposed by $\mathcal A$ in the first $T$ steps.

Fix $m \ge 1$. For each $T\in\{1,\dots,n\}$, the correlated family
$H_1^{(T)},\dots,H_m^{(T)}$ is constructed so that the entire transcript revealed to
$\mathcal A$ up to time $T$ is identical across all $m$ copies, while all unrevealed hyperedges are resampled independently across $i\in\{1,\dots,m\}$, as follows:
\begin{itemize}
  \item Set $H^{(T)}_1 \coloneqq H$.

  \item For each $i \in \{2,\ldots,m\}$ and each
  $e \in E_{\mathcal A}(T)$, set
  $
  \mathbf 1\{e \in E(H^{(T)}_i)\} \coloneqq \mathbf 1\{e \in E(H)\}.
  $

  \item For each $i \in \{2,\ldots,m\}$ and each
  $e \notin E_{\mathcal A}(T)$, set
  $
  \mathbf 1\{e \in E(H^{(T)}_i)\} \coloneqq X^{(T,i)}_e,
  $
  where
  $
  \left\{X^{(T,i)}_e \,:\, e \notin E_{\mathcal A}(T),\; i=2,\ldots,m \right\}
  $
  is a collection of i.i.d.\ $\mathrm{Ber}(p)$ random variables, independent of $H$.
\end{itemize}
Since $\mathcal A$ is deterministic and all queried hyperedges $e \in E_{\mathcal A}(T)$ have identical status across $H^{(T)}_1,\ldots,H^{(T)}_m$, the behavior of $\mathcal A$ in the first $T$ steps is the \emph{same} on all of $H^{(T)}_1,\ldots,H^{(T)}_m$. Equivalently, for each $i \ge 2$,
\[
E(H^{(T)}_i)
=
\left(E_{\mathcal A}(T)\cap E(H)\right)
\;\cup\;
\left\{ e \notin E_{\mathcal A}(T) : X^{(T,i)}_e = 1 \right\}.
\]

\subsubsection{Forbidden tuples of independent sets}\label{subsec:forbidden-uniform}

Fix $\epsilon>0$ and a deterministic online algorithm $\mathcal{A}$. We will bound the
probability that $\mathcal{A}$ outputs an independent set of size at least
$(1+\epsilon)\compunif$.
Set $\mu\coloneqq\epsilon/2$ and define the stopping time
\[
\tau \coloneqq \min\left\{\, n,\ \min\{T\,:\, |\mathcal{A}(H) \cap V_{\mathcal{A}}(T)|=(1-\mu)\compunif\}\right\}.
\]
We analyze $\mathcal A$ on the correlated family $H_1^{(\tau)},\dots,H_m^{(\tau)}$. We call an $m$-tuple  \((\mathcal A(H_1^{(\tau)}),\dots,\mathcal A(H_m^{(\tau)}))\)
\emph{forbidden} if 
\[
|\mathcal A(H_i^{(\tau)})| \ge (1+\epsilon) \compunif
\qquad\text{for all } i\in[m].
\]
For each $i\in[m]$ and $T \in [n]$, let
\[
\mathcal E_{i,T} \coloneqq \left\{\, |\mathcal A(H_i^{(T)})| \ge (1+\epsilon)\compunif\right\}.
\]
We then define
\[
\mathcal S \coloneqq \bigcap_{1\le i\le m}\mathcal E_{i,\tau},
\]
i.e., $\mathcal S$ is the event that the correlated family forms a forbidden $m$-tuple.
The next propositions provide lower and upper bounds on $\P[\mathcal S]$.
We defer the proofs to Sections~\ref{subsection: prop lb uniform} and~\ref{subsection: prop ub uniform}, respectively.

\begin{proposition}\label{prop:lower-uniform}
Let $\mathcal{E}$ denote the event that $|\mathcal{A}(H)|\ge (1+\epsilon)\compunif$
for $H\sim H_r(n,p)$. Then
\[
\P[\mathcal S]\ \ge\ \P[\mathcal{E}]^{\,m}.
\]
\end{proposition}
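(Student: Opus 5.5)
The plan is to condition on the information revealed up to the stopping time and then apply Jensen's inequality. Let $\mathcal F_\tau$ denote the $\sigma$-algebra generated by the transcript of $\mathcal A$ on $H$ up to time $\tau$. This consists of the vertex order $V_{\mathcal A}(\tau)$, the queried set $E_{\mathcal A}(\tau)$, and the statuses $\mathbf 1\{e\in E(H)\}$ for $e\in E_{\mathcal A}(\tau)$. Two properties make this legitimate.

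\textbf{Step 1 ($\tau$ is a stopping time).} Since $\mathcal A$ is deterministic and makes irrevocable decisions, $\mathcal A(H)\cap V_{\mathcal A}(T)$ is determined by the first $T$ steps of the transcript. Hence the event $\{\tau=T\}$ is measurable with respect to the transcript up to time $T$. The same reasoning shows $E_{\mathcal A}(\tau)$ is $\mathcal F_\tau$-measurable.

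\textbf{Step 2 (conditional i.i.d.\ structure).} Conditionally on $\mathcal F_\tau$, the hypergraphs $H_1^{(\tau)},\dots,H_m^{(\tau)}$ agree on $E_{\mathcal A}(\tau)$. Their edges outside $E_{\mathcal A}(\tau)$ are i.i.d.\ $\mathrm{Ber}(p)$ and independent across $i$. For $i\ge2$ this holds by construction. For $i=1$ it holds because the event $\{\tau=T,\ \text{transcript}=\pi\}$ depends only on the statuses of edges in $E_{\mathcal A}(T)$, so the unqueried edges of $H$ remain i.i.d.\ $\mathrm{Ber}(p)$ given it. One subtlety is that $X^{(T,i)}$ is indexed by $T$. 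I will realize all of them from a single i.i.d.\ array $\{X_e^{(i)}\}$ and use $X^{(\tau,i)}_e=X^{(i)}_e$, which is independent of $H$ and hence of $\mathcal F_\tau$.

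Consequently, each $H_i^{(\tau)}$ has marginal law $H_r(n,p)$, and the indicators $\mathbf 1_{\mathcal E_{i,\tau}}$ are conditionally i.i.d.\ given $\mathcal F_\tau$. Writing $g\coloneqq\P[\mathcal E_{1,\tau}\mid\mathcal F_\tau]$, this gives
\[
\P[\mathcal S]=\E\Big[\P\big[\textstyle\bigcap_{i}\mathcal E_{i,\tau}\,\big|\,\mathcal F_\tau\big]\Big]=\E[g^m].
\]

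\textbf{Step 3 (Jensen).} Since $x\mapsto x^m$ is convex on $[0,1]$, Jensen's inequality gives $\E[g^m]\ge(\E g)^m$. Moreover $\E g=\P[\mathcal E_{1,\tau}]=\P[\mathcal E]$, because $H_1^{(\tau)}=H$. Combining, $\P[\mathcal S]\ge\P[\mathcal E]^m$.

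The main obstacle is Step 2 for the random time $\tau$: one must verify that conditioning on the stopped transcript leaves the unqueried edges of $H$ as fresh i.i.d.\ $\mathrm{Ber}(p)$ variables. I would handle this by summing over the countably many (in fact finitely many) values $T$ and transcripts $\pi$ with $\tau=T$. On each such event the conditional law factorizes, since both the event and $E_{\mathcal A}(T)$ are functions only of the queried statuses. This is the strong-Markov-type property of adaptive edge exposure.
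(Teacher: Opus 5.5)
Your proposal is correct and follows the paper's argument. The paper likewise splits on $\{\tau=T\}$ and uses that this event is determined by what has been revealed by time $T$ to write the conditional probability of $\mathcal E_{1,T}\cap\cdots\cap\mathcal E_{m,T}$ as an $m$-th power; it then applies Jensen and the tower property to get $\P[\mathcal E]^m$. Your $g=\P[\mathcal E_{1,\tau}\mid\mathcal F_\tau]$ is exactly the paper's $\sum_T \ind{\tau=T}\,\P[\mathcal E_{1,T}\mid E_{\mathcal A}(T)]$, and your check that the unqueried edges of $H$ remain i.i.d.\ $\mathrm{Ber}(p)$ given the stopped transcript makes explicit a step the paper leaves implicit.
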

\begin{proposition}\label{prop:upper-uniform}
Let $m=C\epsilon^{-2}$ for some sufficiently large constant $C>0$ (depending only on $r$ and $p$).
Then
\[
\P[\mathcal S]\ =\ \exp\left(-\Omega\left((\log_b n)^{\frac{r}{r-1}}\right)\right).
\]
\end{proposition}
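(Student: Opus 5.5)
The plan is a first-moment (union bound) computation over the possible shapes of a forbidden tuple. It uses the fact that all $m$ outputs share the prefix chosen before $\tau$, while every hyperedge touching a vertex chosen after $\tau$ is fresh in each copy. Write $\alpha\coloneqq\compunif$, $k_0\coloneqq(1-\mu)\alpha$ and $\ell\coloneqq(\epsilon+\mu)\alpha$, with $\mu=\epsilon/2$.

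\textbf{Step 1: structure of a forbidden tuple.} On $\mathcal S$ we first note that $\tau<n$. Indeed, if the size $(1-\mu)\alpha$ were never reached before time $n$, then $|\mathcal A(H_1^{(\tau)})|=|\mathcal A(H)|<(1+\epsilon)\alpha$, contradicting $\mathcal S$. Since $\mathcal A$ behaves identically on all copies up to time $\tau$, the set $I_0\coloneqq\mathcal A(H)\cap V_{\mathcal A}(\tau)$ has size $k_0$ and is contained in every $\mathcal A(H_i^{(\tau)})$. Each output therefore contains $I_0\cup J_i$ for some $J_i$ with $|J_i|=\ell$ and $J_i\cap V_{\mathcal A}(\tau)=\emptyset$. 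The key observation is that every $r$-set $e\subseteq I_0\cup J_i$ with $e\cap J_i\neq\emptyset$ contains a vertex not exposed by time $\tau$, so $e\notin E_{\mathcal A}(\tau)$. Hence for $i\ge2$ its status in $H_i^{(\tau)}$ is $X_e^{(T,i)}$, and for $i=1$ it is an edge of $H$ outside $E_{\mathcal A}(\tau)$.

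\textbf{Step 2: union bound with a product estimate.} I will couple by letting $H'_2,\dots,H'_m$ be i.i.d.\ copies of $H_r(n,p)$, independent of $H$, and setting $X^{(T,i)}_e\coloneqq \mathbf 1\{e\in E(H'_i)\}$. Then
\[
\mathcal S\subseteq\bigcup_{I_0,J_1,\dots,J_m}\Bigl\{I_0\text{ indep.\ in }H,\ \text{no edge of }H\text{ meets }J_1\text{ inside }I_0\cup J_1,\ \text{and for }i\ge2\text{ no edge of }H'_i\text{ meets }J_i\text{ inside }I_0\cup J_i\Bigr\}.
\]
For a fixed tuple these conditions involve disjoint families of independent Bernoulli variables. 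The internal edges of $I_0$ and the $J_1$-edges both live in $H$ but are different edges, and the $J_i$-edges for $i\ge2$ live in independent copies $H'_i$. Overlaps among the $J_i$ therefore only reduce the count of tuples and never create dependence. This gives
\[
\P[\mathcal S]\le n^{k_0+m\ell}\, b^{-\binom{k_0}{r}-m\left(\binom{k_0+\ell}{r}-\binom{k_0}{r}\right)}.
\]

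\textbf{Step 3: the exponent.} Using $\binom{x}{r}=(1+o(1))x^r/r!$ and $\alpha^{r-1}=(r-1)!\log_b n$, the logarithm of the bound divided by $\alpha\log n$ becomes
\[
(1-\mu)-\frac{(1-\mu)^r}{r}+m\Bigl[(\epsilon+\mu)-\frac{(1+\epsilon)^r-(1-\mu)^r}{r}\Bigr]+o(1).
\]
The first part is $O(1)$. For the bracket, Taylor-expand $(1+\epsilon)^r$ around $1-\mu$, where $1+\epsilon=(1-\mu)+(\epsilon+\mu)$. The linear term gives $(\epsilon+\mu)[1-(1-\mu)^{r-1}]\approx(r-1)\mu(\epsilon+\mu)$. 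The quadratic term subtracts $\tfrac{r-1}{2}(\epsilon+\mu)^2(1+O(\epsilon))$. With $\mu=\epsilon/2$ this is $(r-1)\epsilon^2(\tfrac34-\tfrac98)+O(\epsilon^3)=-\tfrac38(r-1)\epsilon^2+O(\epsilon^3)<0$.

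This step is the main obstacle. The choice $\mu=\epsilon/2$ must make the first-order gain, which comes from $I_0$ being slightly sub-critical, lose to the second-order penalty. I would check this carefully, including for $r=2$ and for small $\epsilon$ uniformly. Taking $m=C\epsilon^{-2}$ with $C$ large (depending on $r$) then makes the total at most $-c$ for a constant $c>0$. This yields $\P[\mathcal S]\le\exp(-\Omega(\alpha\log n))=\exp(-\Omega((\log_b n)^{r/(r-1)}))$, as claimed.
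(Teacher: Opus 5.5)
Your proposal is correct and is essentially the paper's argument. Both use a first-moment bound over $m$-tuples that share the prefix $I_0\subseteq V_{\mathcal A}(\tau)$ of size $(1-\mu)\compunif$ and have fresh completions outside $V_{\mathcal A}(\tau)$, with $\mu=\epsilon/2$ and $m=C\epsilon^{-2}$, and this gives the same exponent as the paper's $X$ at $a_i=(1+\epsilon)\compunif$. The paper settles your Step~3 concern exactly, for every $r\ge 2$ and without Taylor remainders, via $(1-\mu)^r\le 1-r\mu+\binom{r}{2}\mu^2$ and $(1+\epsilon)^r\ge 1+r\epsilon+\binom{r}{2}\epsilon^2$, which bound your bracket by $-\tfrac{r-1}{2}(\epsilon^2-\mu^2)=-\tfrac38(r-1)\epsilon^2$.

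Two of your choices are mild streamlinings of the paper's route. Truncating to completions of size exactly $(\epsilon+\mu)\compunif$ replaces the paper's union over size vectors $\vec a$ and its monotonicity of $\binom{x}{r}-x\log_b n$. The coupling $X^{(T,i)}_e=\mathbf{1}\{e\in E(H'_i)\}$ for all $T$ replaces the union over times $T$ and the conditioning on $E_{\mathcal A}(T)$; it is legitimate because $\P[\mathcal S]=\sum_T\P[\tau=T,\mathcal E_{1,T},\dots,\mathcal E_{m,T}]$ depends only on the law of $(H,X^{(T,\cdot)})$ for each fixed $T$.
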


With these propositions in hand, we are ready to prove Theorem~\ref{theo: a_COMP-impossible-Hr}.

\begin{proof}[Proof of Theorem~\ref{theo: a_COMP-impossible-Hr}]
Suppose there exists an online algorithm $\mathcal{A}$ that
$((1+\epsilon)\compunif,\delta)$-optimizes the independent set problem in
$H_r(n,p)$. Fix $m=C\epsilon^{-2}$ with $C$ sufficiently large.
By Propositions~\ref{prop:lower-uniform} and~\ref{prop:upper-uniform}, we have
\[
\delta^m \;\le\; \P[\mathcal S]
\;=\;\exp\left(-\Omega\left((\log_b n)^{\frac{r}{r-1}}\right)\right),
\]
implying
\[
\delta \;\le\; \exp\left(-\Theta\left(\epsilon^2(\log_b n)^{\frac{r}{r-1}}\right)\right),
\]
as desired.
\end{proof}

\subsubsection{Lower Bound: Proof of Proposition~\ref{prop:lower-uniform}}\label{subsection: prop lb uniform}

In this section, we will prove Proposition~\ref{prop:lower-uniform} by a careful application of Jensen's inequality and conditional independence. We have
\begin{align*}
    \P[\tau = T,\, \mathcal{S}] &= \E[\P[\tau = T,\,\mathcal{E}_{1, T}, \ldots, \mathcal{E}_{m, T}]\mid E_{\mathcal{A}}(T)] = \E[\ind{\tau = T}\P[\mathcal{E}_{1, T}, \ldots, \mathcal{E}_{m, T}\mid E_{\mathcal{A}}(T)]],
\end{align*}
where we use the fact that $\set{\tau = T}$ is determined by $E_{\mathcal{A}}(T)$ as $\mathcal{A}$ is deterministic.
Note that the algorithm $\mathcal{A}$ is identical for the first $T$ steps on the hypergraphs $H_1^{(T)}, \ldots, H_m^{(T)}$.
With this in hand, we have
\begin{align*}
    \P[\mathcal{E}_{1, T}, \ldots, \mathcal{E}_{m, T}\mid E_{\mathcal{A}}(T)] = \prod_{i = 1}^m\P[\mathcal{E}_{i, T} \mid E_{\mathcal{A}}(T)] = \P[\mathcal{E}_{1, T} \mid E_{\mathcal{A}}(T)]^m,
\end{align*}
where the last step follows since the random variables $X_e^{T, i}$ are i.i.d.
Plugging this into our earlier expression, we have
\begin{align*}
    \P[\mathcal{S}] = \sum_{T = 1}^{n}\P[\tau = T,\, \mathcal{S}]
    &= \sum_{T = 1}^{n}\E[\ind{\tau = T}\P[\mathcal{E}_{1, T}, \ldots, \mathcal{E}_{m, T}\mid E_{\mathcal{A}}(T)]] \\
    &= \sum_{T = 1}^{n}\E[\ind{\tau = T}\P[\mathcal{E}_{1, T}\mid E_{\mathcal{A}}(T)]^m].
\end{align*}
Observing that $\sum_T\ind{\tau = T} = 1$ and $\ind{\tau=i}\ind{\tau=j}=0$ for $i \neq j$, we can further simplify the above to
\begin{align}
    \P[\mathcal{S}] &= \E\left[\left(\sum_{T = 1}^{n}\ind{\tau = T}\P[\mathcal{E}_{1, T}\mid E_{\mathcal{A}}(T)]\right)^m\right] \geq \left(\E\left[\sum_{T = 1}^{n}\ind{\tau = T}\P[\mathcal{E}_{1, T}\mid E_{\mathcal{A}}(T)]\right]\right)^m, \label{eq: success lb}
\end{align}
where we use Jensen's inequality in the final step.
Once again, since $\sum_T\ind{\tau = T} = 1$, we have
\begin{align*}
    \E\left[\sum_{T = 1}^{n}\ind{\tau = T}\P[\mathcal{E}_{1, T}\mid E_{\mathcal{A}}(T)]\right] &= \sum_{T = 1}^{n}\E\left[\ind{\tau = T}\P[\mathcal{E}_{1, T}\mid E_{\mathcal{A}}(T)]\right] = \sum_{T = 1}^{n}\P[\tau = T, \mathcal{E}_{1, T}] = \P[\mathcal{E}].
\end{align*}
Plugging this into \eqref{eq: success lb} completes the proof.

\subsubsection{Upper Bound: Proof of Proposition~\ref{prop:upper-uniform}}\label{subsection: prop ub uniform}

For each $i \in [m]$, we define
\[
a_i \;\coloneqq\; \left|\mathcal A(H_i^{(\tau)})\right|,
\]
denoting the size of the independent sets output by the algorithm on each instance. 

Consider the set of all possible output-size vectors 
\[
\mathcal{F}_m
\;\coloneqq\;
\left\{\vec{a}=(a_1,\dots,a_m)\in [n] ^ m\,:\,  a_i \ge (1+\epsilon)\compunif
\text{ for all } i\right\}.
\]

For each $T \in [n]$ and each $\vec{a} \in \mathcal{F}_m$, let $\mathcal X_{m,T}(a)$ denote the family of $m$-tuples $\left(I_1,\dots,I_m\right)$ satisfying
\begin{itemize}
    \item $I_i \cap V_\mathcal{A}(T) = I_j \cap V_\mathcal{A}(T)$ for $i \neq j$,
    \item $I_i$ is independent in $H_i^{(T)}$, and
    \item $|I_i| = a_i$.
\end{itemize}
Additionally, let
\[
X_{m,T}(\vec{a}) \coloneqq |\mathcal X_{m,T}(\vec{a})|.
\]
On the event $\mathcal S$, the tuple of output sizes
\(
\left(|\mathcal A(H_1^{(\tau)})|,\dots,|\mathcal A(H_m^{(\tau)})|\right)
\)
equals some vector $\vec{a} \in \mathcal{F}_m$, and the corresponding tuple of independent sets belongs to $\mathcal X_{m,\tau}(\vec{a})$.
 Hence,
\begin{equation}\label{eq:S-union}
\mathcal S \;\subseteq\; \bigcup_{\vec{a}\in\mathcal{F}_m}\ \bigcup_{T=1}^{n}\ \left\{X_{m,T}(\vec{a})\ge 1\right\}.
\end{equation}
By a union bound, we have
\begin{equation}\label{eq:union-sum}
\mathbb P[\mathcal S]
\;\le\;
\sum_{\vec{a}\in\mathcal{F}_m}\ \sum_{T=1}^{n}\ \mathbb P\left[X_{m,T}(\vec{a})\ge 1\right].
\end{equation}
Fix $\vec{a}\in\mathcal{F}_m$ and $1\le T\le n$, and set
\[
s_0 \;\coloneqq\; (1-\mu)\compunif.
\]
We define $I\subseteq V_{\mathcal A}(T)$ with $|I|=s_0$ to represent the common portion (among the vertices revealed up to time $T$ by $\mathcal{A}$) that will be shared across the $m$ candidate output
independent sets.
Given such a set $I$, for each $i\in[m]$ define 
\[
\mathcal I_i(I)
\;\coloneqq\;
\left\{
J\subseteq [n]\setminus V_{\mathcal A}(T)\;:\;
|I\cup J|=a_i
\ \text{ and }\ 
I\cup J \text{ is independent in } H_i^{(T)}
\right\}.
\]
Any $(I_1,\dots,I_m)\in\mathcal X_{m,T}(\vec{a})$ determines a unique set $I$ together with $J_i\in\mathcal I_i(I)$, so
\[
X_{m,T}(\vec{a})
\;=\;
\sum_{\substack{I\subseteq V_{\mathcal A}(T)\\ |I|=s_0}}\mathbf{1}\{I \text{ is an independent set in each } H_i^{(T)}\}\prod_{i=1}^m |\mathcal I_i(I)|.
\]
By Markov's inequality,
\begin{align}
&~\mathbb P\left[X_{m,T}(\vec{a})\ge 1\right]
\le\;
\mathbb E[X_{m,T}(\vec{a})] \nonumber \\
&\le\;
\sum_{\substack{I\subseteq V_{\mathcal A}(T)\\ |I|=s_0}}
\ \P[I\text{ is an independent set in } H]\prod_{i=1}^m \mathbb E\left[|\mathcal I_i(I)|\mid I\text{ is an independent set in } H\right]. \label{eq:prob-xmt}
\end{align}
Fix a shared set $I$ and an index $i\in[m]$. For any $J\subseteq [n]\setminus V_{\mathcal A}(T)$ with
$|J|=a_i-s_0$, the set $I\cup J$ is independent in $H_i^{(T)}$ only if every $r$-subset of $I\cup J$
that was not fully revealed by time $T$ is absent. By the construction of the correlated family,
$\mathrm{Ber}(p)$ conditional on $E_{\mathcal A}(T)$, hence each contributes a factor $(1-p)$.  Therefore,
\[
\mathbb P\left[I\cup J \text{ is independent in } H_i^{(T)} \,\big|\, E_{\mathcal A}(T)\right]
\;=\;
(1-p)^{\binom{a_i}{r}-\binom{s_0}{r}}
\;=\;
b^{-\left(\binom{a_i}{r}-\binom{s_0}{r}\right)}.
\]
Consequently,
\[
\mathbb E\left[|\mathcal I_i(I)|\right]
\;\le\;
\binom{n-s_0}{a_i-s_0}\,
b^{-\left(\binom{a_i}{r}-\binom{s_0}{r}\right)}.
\]
Plugging this into \eqref{eq:prob-xmt} gives
\begin{align*}
\mathbb P\left[X_{m,T}(\vec{a})\ge 1\right]
& \le\
\binom{n}{s_0} b^{-\binom{s_0}{r}} \prod_{i=1}^m
\binom{n-s_0}{a_i-s_0}\,
b^{-\left(\binom{a_i}{r}-\binom{s_0}{r}\right)}  \leq b^X,
\end{align*}
where
\begin{align}\label{eq: def X}
    X = (m-1)\left(\binom{s_0}{r} - s_0\log_bn\right) - \sum_{i = 1}^m\left(\binom{a_i}{r} - a_i\log_bn\right).
\end{align}
Let us consider each term above separately.
We have
\begin{align*}
    \binom{s_0}{r} - s_0\log_bn \leq \frac{s_0^r}{r!} - s_0\log_bn 
    &= (\log_bn)^{\frac{r}{r-1}}((r-1)!)^{\frac{1}{r-1}}\left(\frac{(1-\mu)^r}{r} - (1 - \mu)\right) \\
    &\leq (\log_bn)^{\frac{r}{r-1}}((r-1)!)^{\frac{1}{r-1}}\left(\frac{(r-1)\mu^2}{2} - \frac{(r-1)}{r}\right),
\end{align*}
where we use the fact that $(1-\mu)^r \leq 1 - r\mu + \binom{r}{2}\mu^2$ for $\epsilon$ sufficiently small.
Noting that $\binom{x}{r} - x\log_bn$ is increasing in $x$ for $r \geq 2$, we have
\begin{align*}
    \binom{a_i}{r} - a_i\log_bn &\geq \binom{(1+\eps)\compunif}{r} - (1+\eps)\compunif\log_bn \\
    &\geq \frac{(1+\eps)^r(\compunif)^r}{r!} - \Theta((\compunif)^{r-1}) - (1+\eps)\compunif\log_bn \\
    &\geq (\log_bn)^{\frac{r}{r-1}}((r-1)!)^{\frac{1}{r-1}}\left(\frac{(1+\eps)^r}{r} - (1 + \eps) - o(1)\right) \\
    &\geq (\log_bn)^{\frac{r}{r-1}}((r-1)!)^{\frac{1}{r-1}}\left(\frac{(r-1)\eps^2}{2} - \frac{(r-1)}{r} - o(1)\right),
\end{align*}
where we use the fact that $(1+ \eps)^r \geq 1 + r\eps + \binom{r}{2}\eps^2$.
Plugging these estimates into \eqref{eq: def X}, we have that $X$ is at most
\begin{align*}
    &~(\log_bn)^{\frac{r}{r-1}}((r-1)!)^{\frac{1}{r-1}}\left((m-1)\left(\frac{(r-1)\mu^2}{2} - \frac{(r-1)}{r}\right) - m\left(\frac{(r-1)\eps^2}{2} - \frac{(r-1)}{r} - o(1)\right)\right) \\
    &= (\log_bn)^{\frac{r}{r-1}}((r-1)!)^{\frac{1}{r-1}}\left(m\left(\frac{(r-1)\mu^2}{2} - \frac{(r-1)\eps^2}{2} - o(1)\right) - \frac{(r-1)\mu^2}{2} + \frac{(r-1)}{r}\right).
\end{align*}
Plugging in $\mu= \eps/2$ and assuming $m=C\epsilon^{-2}$ for a sufficiently large $C> 0$, we obtain
\[
\mathbb P\left[X_{m,T}(\vec{a})\ge 1\right]\leq \exp(-\Omega((\log{n})^{\frac{r}{r-1}})).
\]
Finally, since $|\mathcal{F}_m|\le n^m$ and $T\le n$, plugging the above bound into \eqref{eq:union-sum} gives
\[
\mathbb P[\mathcal S]
\;\le\;
\sum_{a\in\mathcal{F}_m}\ \sum_{T=1}^{n}\ \mathbb P\left[X_{m,T}(\vec{a})\ge 1\right]
\;\le\;
n\cdot n^m \cdot \exp\left(-\Omega\left((\log_b n)^{\frac{r}{r-1}}\right)\right)
\;=\;
\exp\left(-\Omega\left((\log_b n)^{\frac{r}{r-1}}\right)\right),
\]
for all sufficiently large $n$ (since $m=\Theta(\epsilon^{-2})$).
Together with \eqref{eq:S-union}, this concludes the proof of Proposition~\ref{prop:upper-uniform}.

\section{Balanced Independent Sets in $H(r,n, p)$}

In this section, we prove our results pertaining to $H(r, n, p)$.
We split the section into three subsections devoted to the proofs of Theorems~\ref{theo: gamma_bal a_stat},~\ref{theo: a_comp-achievable-Hrnp}, and~\ref{theo: a_COMP-impossible-Hrnp}, respectively.

\subsection{Statistical Threshold}\label{sec: stat-threshold-rpart}

In this section, we will prove Theorem~\ref{theo: gamma_bal a_stat}.
Let the vertex set of $H$ be
\[
V(H) = V_1 \sqcup  V_2 \sqcup \cdots \sqcup  V_r,
\qquad 
\]
where $\sqcup$ denotes a disjoint union, and $|V_i| = n$ for all $i \in [r]$. Fix a vector $\gamma = (\gamma_1,\dots,\gamma_r) \in \Q^r$ with
$\gamma_i > 0$ and $\sum_{i=1}^r \gamma_i = 1$.
For $\alpha > 0$, let $Z_\alpha(\gamma)$ denote the number of
$\gamma$-balanced independent sets $I$ in $G$ of size $\alpha$.
Since $r! = O(1)$, we may assume without loss of generality that $\sigma$ is the identity permutation (where the permutation $\sigma$ is as in Definition~\ref{definition: balanced independent sets}), i.e.,
\[
|I \cap  V_i| = \gamma_i |I|   = \gamma_i \alpha, \qquad\text{for all } i \in [r].
\]
We first establish the upper bound by a first-moment
argument. For any $\alpha$ we have
\begin{align}\label{1-momment}
    \mathbb{E}[Z_\alpha]
    = \left(\prod_{i=1}^r \binom{n}{\gamma_i \alpha}\right)
      (1-p)^{\alpha^r \prod_{i=1}^r \gamma_i}.
\end{align}
Letting $b \coloneqq 1/(1-p)$, we recall the following parameter from \eqref{eq:Thresholds partite}:
\[
\statpartite
    \coloneqq \left(\frac{\log_b n}{\prod_{i=1}^r \gamma_i}\right)^{1/(r-1)}.
\]
For any $\alpha \ge (1+\epsilon)\statpartite$ we obtain
\begin{align*}
    \mathbb{E}[Z_\alpha]
    \le n^{\alpha} \,
       b^{-\alpha^r \prod_{i=1}^r \gamma_i}
    &= \exp\left(
        \alpha \log n
        - \alpha^r \left(\prod_{i=1}^r \gamma_i\right)\log b
      \right)  \\
      &\leq \exp\left(
        \alpha \left(\log n
        - (1+\epsilon)^{r-1}(\statpartite)^{r-1} \left(\prod_{i=1}^r \gamma_i\right)\log b\right)
      \right) \\
      &=
    \exp\left(
        \alpha \log n \,
        (1 - (1+\epsilon)^{r-1})
    \right). 
\end{align*}
Since $(1+\epsilon)^{r-1} \geq 1 + (r-1)\epsilon$, we have
\[
\mathbb{E}[Z_{\alpha}]
    = \exp(-\Omega\left(\epsilon\, \alpha \log n\right))
    \xrightarrow{n\to\infty}{} 0.
\]
Therefore,
\[
\P(Z_{\alpha} > 0)
    \le \mathbb{E}[Z_{\alpha}] \to 0,
\]
and with high probability there exists no $\gamma$-balanced independent set
of size $\alpha$.

Let us now turn our attention to the lower bound. That is, denoting by $\alpha_\gamma(H)$ the maximum size of a $\gamma$-balanced independent set in $H$, we aim to show that $\alpha_\gamma(H) \geq (1-\epsilon)\statpartite$ with high probability.
Suppose that $\alpha \le (1-\epsilon)\statpartite $, by
the standard binomial coefficient bound,
we obtain
\[
\begin{aligned}
\mathbb{E}[Z_\alpha]
&=
\left(\prod_{i=1}^r \binom{n}{\gamma_i\alpha}\right)(1-p)^{\alpha^r\prod_{i=1}^r\gamma_i} \\
&\ge
\left(\prod_{i=1}^r \left(\frac{n}{\gamma_i\alpha}\right)^{\gamma_i\alpha}\right)(1-p)^{\alpha^r\prod_{i=1}^r\gamma_i} \\
&=
\exp \left(
\alpha\log n-\alpha\log\alpha-\alpha\sum_{i=1}^r\gamma_i\log\gamma_i-\alpha^r\left(\prod_{i=1}^r\gamma_i\right)\log b
\right) \\
&\ge
\exp \left(
\alpha\log n-\alpha(1-\epsilon)^{r-1}\left(\statpartite\right)^{r-1}
\left(\prod_{i=1}^r\gamma_i\right)\log b
-O(\alpha\log\alpha)
\right) \\
&=
\exp \left(
\alpha\log n\bigl(1-(1-\epsilon)^{r-1}\bigr)-O(\alpha\log\alpha)
\right).
\end{aligned}
\]
Therefore,
\[
\mathbb{E}[Z_\alpha] =  \exp(\Omega(\log^{r/(r-1)} n)) = \omega(1).
\]
Since $Z_\alpha \le Z_{\alpha'}$ for all $\alpha \ge \alpha'$, it suffices
to prove $Z_\alpha > 0$ with high probability for $\alpha = \alpha_\epsilon \coloneqq (1-\epsilon)\statpartite$.
Let
\[
Z_{\alpha_\epsilon}
    = \sum_{(I_1,\dots,I_r)}
      I(I_1,\dots,I_r),
\]
where the sum runs over all $(I_1,\dots,I_r)$ satisfying
\[
I_i \subseteq V_i, \qquad |I_i| = \gamma_i \alpha_\epsilon
\qquad \text{for each }i \in [r],
\]
and $I(I_1,\dots,I_r)$ is the indicator that the set $I_1\sqcup \dots \sqcup I_r$ forms an
independent set in $H$.
We next expand
\[
Z_{\alpha_\epsilon}^2
  = \sum_{(I_1,\dots,I_r)}
    \sum_{(I_1',\dots,I_r')}
      I(I_1,\dots,I_r)\,
      I(I_1',\dots,I_r').
\]
To parameterize intersections, define for each $i \in [r]$,
\[
m_i \coloneqq |I_i \cap I_i'|.
\]
These satisfy
\[
0 \le m_i \le \gamma_i \alpha_\epsilon.
\]
Fix a tuple $(I_1,\dots,I_r)$. For an `overlap vector'
\(\vec{m}=(m_1,\dots,m_r), 0\le m_i \le \gamma_i\alpha_\epsilon,\)
the number of tuples $(I_1',\dots,I_r')$ satisfying
\(|I_i\cap I_i'|=m_i \) for each \(i\in[r]\) is at most
\begin{align}\label{eq : N-m}
    N(m_1,\dots,m_r)
    \coloneqq
    \prod_{i=1}^r
    \binom{\gamma_i\alpha_\epsilon}{m_i}
    \binom{n-\gamma_i\alpha_\epsilon}{\gamma_i\alpha_\epsilon-m_i}.
\end{align}
Moreover, for any such pair of tuples, the joint expectation depends only on the overlap vector $\vec{m}$:
\[
\mathbb{E}\big[I(I_1,\dots,I_r)\,I(I_1',\dots,I_r')\big]
=
(1-p)^{2\alpha_\epsilon^r\prod_{i=1}^r\gamma_i-\prod_{i=1}^r m_i} = b^{\,-2\alpha_\epsilon^r\prod_{i=1}^r\gamma_i+\prod_{i=1}^r m_i}.
\]

Therefore,
\begin{align*}
\mathbb{E}[Z_{\alpha_\epsilon}^2]
&=
\sum_{(I_1,\dots,I_r)}
\sum_{(I_1',\dots,I_r')}
\mathbb{E}\big[I(I_1,\dots,I_r)\,I(I_1',\dots,I_r')\big] \\
&\le
\sum_{(I_1,\dots,I_r)}
\sum_{0\le m_i\le \gamma_i\alpha_\epsilon}
N(m_1,\dots,m_r)\,
b^{\,-2\alpha_\epsilon^r\prod_{i=1}^r\gamma_i+\prod_{i=1}^r m_i}.
\end{align*}
Since the inner expression does not depend on the particular choice of $(I_1,\dots,I_r)$, we may factor out the number of such tuples to obtain
\begin{align*}
\mathbb{E}[Z_{\alpha_\epsilon}^2]
&\le
\left(\prod_{i=1}^r \binom{n}{\gamma_i\alpha_\epsilon}\right)
\sum_{0\le m_i\le \gamma_i\alpha_\epsilon}
N(m_1,\dots,m_r)\,
b^{\,-2\alpha_\epsilon^r\prod_{i=1}^r\gamma_i+\prod_{i=1}^r m_i}.
\end{align*}
Hence,  from \eqref{1-momment} and \eqref{eq : N-m}, we have
\begin{align}
\frac{\mathbb{E}[Z_{\alpha_\epsilon}^2]}{\mathbb{E}[Z_{\alpha_\epsilon}]^2}
&\le
\frac{1}{\mathbb{E}[Z_{\alpha_\epsilon}]}
\sum_{0\le m_i\le \gamma_i\alpha_\epsilon}
N(m_1,\dots,m_r)\,
b^{\,-\alpha_\epsilon^r\prod_{i=1}^r\gamma_i+\prod_{i=1}^r m_i} \notag \\
&=
\sum_{0\le m_i\le \gamma_i\alpha_\epsilon}
\frac{N(m_1,\dots,m_r)}{\prod_{i=1}^r \binom{n}{\gamma_i\alpha_\epsilon}}
b^{\prod_{i=1}^r m_i}.\label{eq: 2-momment}
\end{align}
Moreover,
\[
\frac{N(m_1,\dots,m_r)}{\prod_{i=1}^r \binom{n}{\gamma_i\alpha_\epsilon}}
=
\prod_{i=1}^r
\left[
\binom{\gamma_i\alpha_\epsilon}{m_i}
\frac{\binom{n-\gamma_i\alpha_\epsilon}{\gamma_i\alpha_\epsilon-m_i}}{\binom{n}{\gamma_i\alpha_\epsilon}}
\right].
\]
For each \(i\in[r]\), since $0\le m_i \le \gamma_i\alpha_\epsilon \le \alpha_\epsilon \le n$, we have
\begin{align*}
\frac{\binom{n-m_i}{\gamma_i\alpha_\epsilon-m_i}}{\binom{n}{\gamma_i\alpha_\epsilon}}
&=
\frac{(n-m_i)!}{(\gamma_i\alpha_\epsilon-m_i)!\,(n-\gamma_i\alpha_\epsilon)!}
\cdot
\frac{(\gamma_i\alpha_\epsilon)!\,(n-\gamma_i\alpha_\epsilon)!}{n!} \\
& =
\frac{(n-m_i)!\,(\gamma_i\alpha_\epsilon)!}{(\gamma_i\alpha_\epsilon-m_i)!\,n!} \\
& = 
\frac{(\gamma_i\alpha_\epsilon)(\gamma_i\alpha_\epsilon-1)\cdots(\gamma_i\alpha_\epsilon-m_i+1)}
{n(n-1)\cdots(n-m_i+1)} \\ 
& \le 
\left(\frac{\gamma_i\alpha_\epsilon}{n}\right)^{m_i} \le 
\left(\frac{\alpha_\epsilon}{n}\right)^{m_i}.
\end{align*}
Additionally, we have
\[
\binom{\gamma_i\alpha_\epsilon}{m_i}\le (\gamma_i\alpha_\epsilon)^{m_i}\le \alpha_\epsilon^{m_i}.
\]
Therefore,
\[
\frac{N(m_1,\dots,m_r)}{\prod_{i=1}^r \binom{n}{\gamma_i\alpha_\epsilon}}
\le
\prod_{i=1}^r
\alpha_\epsilon^{m_i}
\left(\frac{\alpha_\epsilon}{n}\right)^{m_i}.
\]
Substituting these bounds into \eqref{eq: 2-momment} yields
\begin{align*}
    \frac{\E[Z_{\alpha_\epsilon}^2]}{\E[Z_{\alpha_\epsilon}]^2} &
    \le
    \sum_{0\le m_i\le \gamma_i\alpha_\epsilon}
    \left(\prod_{i=1}^r \alpha_\epsilon^{2m_i} n^{-m_i}\right)
    b^{\prod_{i=1}^r m_i} \\
    & = \sum_{0\le m_i\le\gamma_i \alpha_\epsilon}
     \exp\left(
        \sum_{i=1}^r
            (2m_i\log \alpha_\epsilon - m_i\log n)
        + \left(\prod_{j=1}^r m_j\right)\log b
     \right).
\end{align*}

Recall the definition of the overlap vector $\vec{m}$. We define 
\[
    q(\vec{m}) = \exp\left(
        \sum_{i=1}^r
            (2m_i\log \alpha_\epsilon - m_i\log n)
        + \left(\prod_{j=1}^r m_j\right)\log b
     \right).
\]
To bound $q(\vec{m})$, we distinguish several cases according to $\vec{m}$.

\textbf{Case 1}: If $m_1 = m_2 = \dots = m_r = 0$, then $q(\vec{m}) = 1 $.

\textbf{Case 2}: Suppose there exists $i$ such that $m_i = 0$ and $\sum_im_i > 0$.
Then 
\begin{align*}
q(\vec{m}) 
& \leq \exp \left(-c(\log n - 2 \log \alpha_\epsilon)\right)= \exp\left(-\Omega (\log n)\right), \qquad  \text{for } c = \sum_im_i > 0. \end{align*}

\textbf{Case 3}: Assume $m_1,\dots,m_r \geq 1$ and
$\prod_{i=1}^r m_i \leq \log_b n$. Then,
\begin{align*}
    q(\vec{m}) & \leq \exp\left( -r(\log n - 2 \log \alpha_\epsilon) + \log n \right)  = \exp (- \Omega (\log n)).
\end{align*}

\textbf{Case 4:} Suppose $m_1,\dots,m_r \ge 1$ and
$
\prod_{i=1}^r m_i \ge \log_b n.
$
We can rewrite $q(\vec{m})$ as follows:
\begin{align}
q(\vec{m})
 &= \exp \left(
     \left(\prod_{i=1}^r m_i\right)\log b
     \left(
       1 - \frac{\sum_{i=1}^r m_i}{\prod_{j=1}^r m_j}
           (\log_b n - 2\log_b \alpha_\epsilon)
     \right)
   \right).
\label{eq:q-case4-r}
\end{align}
Since $m_i\ge 1$, $m_i\le \gamma_i\alpha_\epsilon$ and $\sum_{i = 1} ^ r \gamma_i = 1$, we can bound
\[
\frac{\sum_{i=1}^r m_i}{\prod_{i=1}^r m_i}
= \sum_{i=1}^r \frac{1}{\prod_{j\neq i} m_j}
\ge \sum_{i=1}^r \frac{1}{\prod_{j\neq i} (\gamma_j\alpha_\epsilon)}
= \frac{1}{\alpha_\epsilon^{r-1}\prod_{j=1}^r\gamma_j}\sum_{i=1}^r \gamma_i
= \frac{1}{\alpha_\epsilon^{r-1}\prod_{j=1}^r\gamma_j}.
\]
Plugging in $\alpha_\epsilon^{r-1}=(1-\epsilon)^{r-1} (\log_b n) / \left(\prod ^ r _ {i = 1} \gamma_i \right)$, we obtain
\[
\frac{\sum_{i=1}^r m_i}{\prod_{i=1}^r m_i}\ge \frac{1}{(1-\epsilon)^{r-1}\log_b n} = \frac{1+\epsilon'}{\log_b n},
\]
where $\epsilon'= 1/(1 - \epsilon)^{r-1} - 1$ depends only on $\epsilon$ and $r$.
(Note that $\epsilon' \approx (r-1)\epsilon$ for $\epsilon$ sufficiently small.)
Consequently, for sufficiently large $n$ and since $\log \alpha_\epsilon = \Theta(\log\log n)$, we have
\begin{align*}
1 - \frac{\sum_{i=1}^r m_i}{\prod_{j=1}^r m_j}
        (\log_b n - 2\log_b \alpha_\epsilon)
&\le 1 - (1+\epsilon')\left(1 - \frac{2\log_b \alpha_\epsilon}{\log_b n}\right) \le - \frac{\epsilon'}{2}.
\end{align*}
Combining with \eqref{eq:q-case4-r} and using
$\prod_{i=1}^r m_i \ge \log_b n$, we obtain
\[
q(\vec{m})
 \le \exp \left(
         -\frac{\epsilon'}{2}
         \left(\prod_{i=1}^r m_i\right)\log b
      \right)
 \le \exp \left(
         -\frac{\epsilon'}{2}\log_b n \cdot \log b
      \right)
 = \exp\left(-\Omega(\log n)\right).
\]

Combining all four cases for $\vec{m}$, we have
\[
1 \le \frac{\mathbb{E}[Z_{\alpha_\epsilon}^2]}{\mathbb{E}[Z_{\alpha_\epsilon}]^2}
= \sum_{0 \le m_i \le \gamma_i \alpha_\epsilon} q(\vec{m})
\le 1 + \sum_{\substack{0 \le m_i \le \gamma_i \alpha_\epsilon, \\ \sum_im_i > 0}} \exp(-\Omega(\log n)).
\]
Since the number of possible tuples $\vec{m}$ is at most
$\prod_{i=1}^r (\gamma_i \alpha_\epsilon + 1)
 = \exp(O(\log\log n))$, we have
\[
\frac{\mathbb{E}[Z_{\alpha_{\epsilon}}^2]}{\mathbb{E}[Z_{\alpha_{\epsilon}}]^2}
 = 1 + \exp\left(-\Omega(\log n)\right).
\]
Applying the Paley–Zygmund inequality yields
\[
\P(Z_{\alpha_{\epsilon}}>0)
 \;\ge\;
\frac{\mathbb{E}[Z_{\alpha_{\epsilon}}]^2}{\mathbb{E}[Z_{\alpha_{\epsilon}}^2]}
 = 1 - \exp\left(-\Omega(\log n)\right),
\]
which completes the proof of Theorem~\ref{theo: gamma_bal a_stat}.

\begin{remark}
    Analogous to the observation of Remark~\ref{rem:div_r_hr}, Theorem~\ref{theo: gamma_bal a_stat} with the same proof stays true if $r\coloneqq r_n = O\left(\frac{\log \log n}{\log \log \log n}\right)$.
\end{remark}

\subsection{Achievability Result}\label{sec: achiv-rpart}

We describe an online algorithm for constructing a $\gamma$-balanced independent set of input target size $\hat{\alpha}$, where $\gamma \in \Q_+^r$ with $\gamma_1 \le \cdots \le \gamma_r$. The algorithm proceeds in rounds $t = 1, \dots, rn$, where vertices are revealed according to the online arrival model. Let $I_t \subseteq S_t$ denote the current independent set, with $I_0 = \emptyset$. To enforce balancedness, we view the construction as filling $r$ \textit{buckets} sequentially in \textit{stages}. We maintain a set $F_t \subseteq [r]$ of `locked' parts and a bucket index $m_t = |F_t| \in \{0, \dots, r\}$ indicating the current stage. At round $t$, the algorithm accepts $v_t$ only if it belongs to an `unlocked' part and preserves independence; that is, $v_t \in \bigcup_{f \notin F_t} V_f$ and $I_t \cup \{v_t\}$ is independent. If accepted, we set $I_{t+1} = I_t \cup \{v_t\}$; otherwise, $I_{t+1} = I_t$. After this update, if there exists an unlocked part $V_f$ such that
\[
|I_{t+1} \cap V_f| \ge \gamma_{m_t + 1} \hat{\alpha},
\]
then we lock $V_f$ by adding $f$ to $F_t$ and advance to the next stage by setting $m_{t+1} = m_t + 1$. Otherwise, the state remains unchanged. The final output is $I_{rn}$. 
We will show that the algorithm, formally described in Algorithm~\ref{alg: staged and bucketed}, successfully constructs a $\gamma$-balanced independent set of size $\hat{\alpha}$ with high probability as long as $\hat{\alpha} \leq (1-\eps)\comppartite$.

\begin{algorithm}[htb!]
\caption{Greedy Algorithm for Building a $\gamma$-Balanced Independent Set}
\label{alg: staged and bucketed}
$m_1 \gets 0$ ,\qquad  $S_1 \gets \emptyset$ ,\qquad  $I_1 \gets \emptyset$ ,\qquad  $F_1 \gets \emptyset$ 

\For{$t = 1, \ldots, rn$}{
    Sample a vertex $v_t \notin S_t $ \\
    $S_{t+1} \gets S_t \cup \{v_t\}$  \\
    \eIf{$m_t < r$, $v_t \notin \bigcup_{f\in F_t}V_f$, and $I_t\cup\{v_t\}$ is independent}{
        $I_{t+1} \gets I_t \cup \{v_t\}$ 
    }{
        $I_{t+1} \gets I_t$ 
    }
    \eIf{$m_t < r$ \textbf{and} $\max_{f\notin F_t}\,|I_{t+1}\cap V_f| \ge \gamma_{m_t}\hat{\alpha}$}{
        $F_{t+1} \gets F_t \cup \{\arg\max_{f\notin F_t}\ |I_{t+1}\cap V_f|\}$ \\
        $m_{t+1} \gets m_t + 1$
    }{
        $F_{t+1} \gets F_t$ \\
        $m_{t+1} \gets m_t$ 
    }
}
\Return $I_{rn+1}$ 
\end{algorithm}

Define $T_i \coloneqq \inf_{1\leq t \leq rn} \{m_{t} = i\}$ and the events $\tau_i \coloneqq \{T_i \leq rn \}$ for all $0\leq i \leq r$.
Observe that $\tau_0 \supset\tau_1 \cdots \supset\tau_r$ and $\P[\tau_0] = 1$. The probability that the algorithm constructs a $\gamma$-balanced independent set of size $\hat{\alpha}$ is 
\[
\P[\tau_r] = \P[\tau_0]\prod_{i=1}^r \P[\tau_i|\tau_{i-1}] = \prod_{i=1}^r \P[\tau_i|\tau_{i-1}].
\]
In order to show that the probability of failure is $o(1)$, we observe that
\begin{align}\label{eq: partite prob bound}
    \tau_r^c = \tau_0^c \cup (\cup_{i=1}^r (\tau_i^c\cap\tau_{i-1}))
\implies
\P[\tau^c_r] \leq \sum_{i=1}^r \P[\tau_i^c\cap\tau_{i-1}].
\end{align}
We define $E_i \coloneqq \tau^c_i\cap\tau_{i-1}.$ In order to estimate the probability of each of these error terms, let us arbitrarily pick a part $f \notin F_{T_{i-1}+1}$ from the hypergraph. Define 
\[A_f \coloneqq \{v\in V_f\,:\, I_{T_{i-1}+1}\cup\{v\} \text{ is not independent}\}.\]
Observe that 
\[E_i \subseteq \{|A_f| > n-\gamma_i\hat{\alpha}\}\cap\tau_{i-1} \eqqcolon F_i.\]
We will control the probability of $F_i$, and therefore $E_i$, via the first moment method. 
For the event $F_i$ to occur, there must exist a set $S \subseteq V_f$ of size $n - \gamma_i\hat{\alpha}$ such that $I_{T_{i-1}+1}\cup\{v\}$ is not independent for each $v \in S$.
Fix an arbitrary set $S \subseteq V_f$ of size $n - \gamma_i\hat{\alpha}$.
Note that the set $I_{T_{i-1}+1}$ contains exactly $\gamma_j\hat{\alpha}$ vertices from some locked part for each $1 \leq j < i$ and at most $\gamma_i\hat{\alpha}$ vertices from all unlocked parts.
It follows that the probability that $I_{T_{i-1}+1}\cup\{v\}$ is not independent for each $v \in S$ given $I_{T_{i-1}+1}$ is at most
\begin{align*}
    \left(1 - (1-p)^{\hat{\alpha}^{r-1} \gamma_i^{r-i} \prod_{j=1}^{i-1} \gamma_j}\right)^{|S|} &\leq \exp\left(-|S|(1-p)^{\hat{\alpha}^{r-1} \gamma_i^{r-i} \prod_{j=1}^{i-1} \gamma_j}\right) \\
    &\leq \exp\left(-(n - \gamma_i\hat{\alpha})b^{-\hat{\alpha}^{r-1} \gamma_i^{r-i} \prod_{j=1}^{i-1} \gamma_j}\right) \\
    &\leq \exp\left(-(n - \gamma_i\hat{\alpha})b^{-\hat{\alpha}^{r-1} \prod_{j=1}^{r-1} \gamma_j}\right) \\
    &\leq \exp\left(-(n - \gamma_i\hat{\alpha})n^{-(1 - \eps)}\right) \\
    &= \exp\left(-\Omega\left(n^{\eps}\right)\right),
\end{align*}
where we use the fact that $\hat{\alpha} \leq (1-\eps)\comppartite$.
As the above expression is independent of $I_{T_{i-1}+1}$ and since there are at most $\binom{n}{n - \gamma_i\hat{\alpha}} = \exp(O(\log^2n))$ many choices for $S$, we have
\[\P[E_i] \leq \P[F_i] \leq \exp\left(-\Omega\left(n^{\eps}\right)\right).\]
Plugging the above into the expression in \eqref{eq: partite prob bound}, we conclude that the probability of failure is at most $\exp\left(-\Omega\left(n^{\eps}\right)\right)$, as desired.

\subsection{Impossibility for Online Algorithms}\label{subsec:imposs_Hrnp}
We now prove the online impossibility result for $H(r, n, p)$. As in the uniform case, it suffices to consider deterministic online algorithms. To this end, we fix a deterministic online algorithm $\mathcal A$ and run it on a base random hypergraph $H \sim H(r,n,p)$.

\subsubsection{Correlated Random Hypergraph Families}

For each time $T\in[rn]$, let
\begin{itemize}
\item $V_{\mathcal A}(T)$ be the set of vertices exposed by $\mathcal A$ in the first $T$ rounds, and
\item $E_{\mathcal A}(T)$ be the set of all edges whose status has been revealed by $\mathcal A$ up to and including round $T$.
\end{itemize}
For any $m\ge 1$ and any $T\in[rn]$, we define a correlated family of random hypergraphs
\[
H^{(T)}_1,\; H^{(T)}_2,\; \dots,\; H^{(T)}_m
\]
as follows:
\begin{enumerate}
\item $H^{(T)}_1$ is a copy of the base hypergraph $H$.
\item For each $2\le i\le m$ and each edge $e\in E_{\mathcal A}(T)$, the status of $e$ in $H^{(T)}_i$ is forced to be identical to its status in $H$.
\item For each $2\le i\le m$ and each admissible edge $e$ with $e \not\subseteq V_{\mathcal A}(T)$,
      the status of $e$ in $H^{(T)}_i$ is resampled independently:
      \[
      \mathbf 1\{e\in E(H^{(T)}_i)\} \sim \mathrm{Ber}(p)
      \]
    over all such edges $e$, all $T\in[rn]$, and all $i\in\{2,\dots,m\}$.
\end{enumerate}

\subsubsection{Forbidden tuples of independent sets}

Fix $\epsilon > 0$, and set $\mu \coloneqq \epsilon^2$. Let
\[
0 < \gamma_1 \le \cdots \le \gamma_r < 1
\]
be the target proportions, and recall the computational threshold from \eqref{eq:Thresholds partite}:
\[
\comppartite \coloneqq
\left(
\frac{\log_b n}{\prod_{i=1}^{r-1} \gamma_i}\right)^{\frac{1}{r-1}}
=
\left(
\frac{\gamma_r \log_b n}{\prod_{i=1}^r \gamma_i}
\right)^{\frac{1}{r-1}}.
\]

Let $I^{(t)}$ be the partial independent set selected by $\mathcal A$ after $t$ rounds. We define $\tau$ to be the first time by which $I^{(t)}$ contains a $\gamma$-balanced independent set of size $(1-\mu)\comppartite$.
More formally,
\[
\tau \coloneqq
\min\left\{
rn,\;
\min\left\{
t \,:\,
\exists\; \pi \in S_r \text{ s.t. }
|I^{(t)}\cap V_{\pi(j)}|
\geq
(1-\mu)\gamma_j \comppartite
\text{ for each } j\in[r]
\right\}
\right\}.
\]
Here, $S_r$ denotes the symmetric group on $r$ elements.
We let $\pi_\tau$ denote the permutation that witnesses the event in the definition of $\tau$. 

\begin{remark}\label{remark: pi tau}
    We may select $\pi_\tau$ such that $|I^{(\tau)}\cap V_{\pi(j)}|$ does not decrease with $j$. If any $\sigma$ satisfies the condition $|I^{(\tau)}\cap V_{\sigma(j)}| \geq (1-\mu)\gamma_j \comppartite$, then $\pi_\tau$ that ranks $|I^{(\tau)}\cap V_{\pi(j)}|$ in a non-decreasing order satisfies $|I^{(\tau)}\cap V_{\pi_\tau(j)}| \geq (1-\mu)\gamma_j \comppartite$. 
\end{remark}
Fix $m \ge 1$ and consider the correlated graphs $H^{(\tau)}_1,\dots,H^{(\tau)}_m$. For each $i\in[m]$, let
\[
I_i \coloneqq \mathcal A(H^{(\tau)}_i)
\]
be the final output of $\mathcal A$ on $H^{(\tau)}_i$ after $rn$ rounds. Define the successful event
\[
\mathcal S \;\coloneqq\; \bigcap_{i=1}^m \left\{\, |I_i| \ge (1+\epsilon)\,\comppartite \right\}
\;=\; \bigcap_{i=1}^m E_{i,\tau},
\]
where $E_{i,T}\coloneqq\{| \mathcal A(H^{(T)}_i)|\ge (1+\epsilon)\comppartite\}$.
On the event $S$, by construction of the correlated family and the deterministic nature of $\mathcal A$, the following hold.
\begin{itemize}
\item The sets $I_i\cap V_{\mathcal A}(\tau)$ are identical for all $i\in[m]$.
\item There is a `last reached index' \(j^*\in [r]\), and it is met with equality:
\[
|I_i\cap V_{\mathcal A}(\tau)\cap V_{\pi_\tau(j^*)}|
\ =\ (1-\mu)\gamma_{j^*}\, \comppartite
\qquad\text{for all } i\in[m].
\]
\item For each target index $j\in[r]$, the corresponding part $V_{\pi_\tau(j)}$ satisfies
\[
|I_i\cap V_{\mathcal A}(\tau)\cap V_{\pi_\tau(j)}|
\ \ge\ (1-\mu)\gamma_j\, \comppartite
\qquad\text{for all } i\in[m].
\]
\end{itemize}

At this point we state a useful lower bound on $\P[\mathcal S]$.
The proof is identical to the argument in Section~\ref{subsection: prop lb uniform} and so we omit the details.

\begin{proposition}\label{prop:lower-partite}
Let $\mathcal{E}$ denote the event that $|\mathcal{A}(H)|\ge (1+\epsilon)\comppartite$
for $H\sim H(r, n,p)$. Then
\[
\P[\mathcal S]\ \ge\ \P[\mathcal{E}]^{\,m}.
\]
\end{proposition}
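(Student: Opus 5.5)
The plan is to replicate the conditioning-plus-Jensen argument of Section~\ref{subsection: prop lb uniform}, now for the stopping time $\tau$ defined above for $H(r,n,p)$. I would decompose $\mathcal S$ according to the value of $\tau$:
\[
\P[\mathcal S]=\sum_{T=1}^{rn}\P\bigl[\tau=T,\ E_{1,T},\dots,E_{m,T}\bigr].
\]
The first point to verify is that $\{\tau=T\}$ is measurable with respect to the transcript $E_{\mathcal A}(T)$, meaning the statuses of all edges revealed by round $T$. Since $\mathcal A$ is deterministic, the sequence $I^{(1)},\dots,I^{(T)}$ is a function of these statuses. The event in the definition of $\tau$ depends only on the intersection sizes $|I^{(t)}\cap V_{\pi(j)}|$ for $t\le T$, so it is determined as well. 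Conditioning on $E_{\mathcal A}(T)$ then gives
\[
\P[\tau=T,\mathcal S]=\E\bigl[\ind{\tau=T}\,\P[E_{1,T},\dots,E_{m,T}\mid E_{\mathcal A}(T)]\bigr].
\]

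The second step is the conditional i.i.d.\ structure. Given $E_{\mathcal A}(T)$, each $H^{(T)}_i$ is formed from the revealed edge statuses together with independent $\mathrm{Ber}(p)$ statuses for every edge not yet revealed. For $i\ge2$ these fresh statuses are i.i.d.\ across $i$. For $i=1$, the unrevealed edges of $H$ are still $\mathrm{Ber}(p)$ and independent of the transcript. The one subtlety is that the resampling rule in the construction is phrased via ``$e\not\subseteq V_{\mathcal A}(T)$''. I would read it so that exactly the edges outside $E_{\mathcal A}(T)$ are resampled. That reading holds because, in the online model of Definition~\ref{def : online-arrival}, every admissible cross-part edge contained in $V_{\mathcal A}(T)$ has already been revealed at the arrival of its last vertex.

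Under this reading, $H^{(T)}_1,\dots,H^{(T)}_m$ are conditionally i.i.d.\ given $E_{\mathcal A}(T)$, each with the law of $H$ conditioned on the transcript. Therefore
\[
\P[E_{1,T},\dots,E_{m,T}\mid E_{\mathcal A}(T)]=\P[E_{1,T}\mid E_{\mathcal A}(T)]^m.
\]

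For the final step, the indicators $\ind{\tau=T}$ are disjoint and sum to $1$. This lets me write
\[
\P[\mathcal S]=\E\Bigl[\Bigl(\sum_T\ind{\tau=T}\,\P[E_{1,T}\mid E_{\mathcal A}(T)]\Bigr)^m\Bigr].
\]
Jensen's inequality for $x\mapsto x^m$ bounds this below by $\bigl(\sum_T\P[\tau=T,E_{1,T}]\bigr)^m$. Since $H^{(T)}_1=H$, the event $E_{1,T}$ equals $\mathcal E$ for every $T$, so the inner sum is $\P[\mathcal E]$ and the bound follows.

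The only step needing care is the measurability and conditional-independence check in the partite setting. Rounds in which no cross edge is revealed cause no problem, since the transcript still determines $V_{\mathcal A}(T)$ and $I^{(t)}$ through the deterministic choices of $\mathcal A$. Beyond that check, the argument is identical to the uniform case.
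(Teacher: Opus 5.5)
Your proposal is correct and follows the paper's argument: the paper omits the proof as identical to the uniform case, which decomposes on $\{\tau=T\}$, conditions on the transcript $E_{\mathcal A}(T)$, uses that $H^{(T)}_1,\dots,H^{(T)}_m$ are conditionally i.i.d.\ to get an $m$-th power, and applies Jensen, exactly as you do. Your added check is a correct clarification that the paper leaves implicit: in the online model, $E_{\mathcal A}(T)$ is precisely the set of admissible cross-part edges contained in $V_{\mathcal A}(T)$, so the partite resampling rule ``$e\not\subseteq V_{\mathcal A}(T)$'' resamples exactly the unrevealed edges.
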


Next, we state the key result of this section, which provides an upper bound on $\P[\mathcal S]$.

\begin{proposition}\label{prop: upper-partite}
Let $m=C\epsilon^{-2}$ for some sufficiently large constant $C>0$ (depending only on $r$ and $p$).
Then
\[
\P[\mathcal S]\ =\ \exp\left(-\Omega\left((\log_b n)^{\frac{r}{r-1}}\right)\right).
\]
\end{proposition}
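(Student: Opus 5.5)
The plan is to follow the first-moment scheme of Section~\ref{subsection: prop ub uniform}, with the shared set and its extensions now split into the $r$ parts. On $\mathcal S$, the $m$ outputs $I_1,\dots,I_m$ share a common set $I\coloneqq I_i\cap V_{\mathcal A}(\tau)$. Write $s_j\coloneqq |I\cap V_{\pi_\tau(j)}|$. Then $s_j\ge (1-\mu)\gamma_j\comppartite$ for every $j$, with equality at the last-reached index $j^*$. Each $I_i$ is a $\gamma$-balanced independent set in $H^{(\tau)}_i$ of size $a_i\ge (1+\epsilon)\comppartite$, with some permutation $\sigma_i$, and $I_i=I\sqcup J_i$ where $J_i\cap V_{\mathcal A}(\tau)=\emptyset$.

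I will union bound over the following data, all fixed in advance:
\begin{itemize}
\item the time $T$;
\item the shared profile $(s_1,\dots,s_r)$ and the permutation $\pi_\tau$;
\item the sizes $a_i$ and permutations $\sigma_i$ of the outputs.
\end{itemize}
There are $n^{O(mr)}=\exp(O(m\log n))$ such choices, which is negligible against $\exp(-\Theta((\log_b n)^{r/(r-1)}))$. For fixed data, Markov's inequality reduces the problem to the expected number of configurations. The key input is that every hyperedge not contained in $V_{\mathcal A}(T)$ is resampled independently in each copy. Writing $L\coloneqq \log_b n$ and $P_i\coloneqq \prod_k |I_i\cap V_k|=a_i^r\prod_k\gamma_k$, the expected count is at most $b^{X}$, where
\[
X=\Bigl(|I|L-\prod_{j} s_j\Bigr)+\sum_{i=1}^m\Bigl((a_i-|I|)L-\bigl(P_i-\prod_j s_j\bigr)\Bigr).
\]
Here the first term is the cost of choosing $I$ inside $V_{\mathcal A}(T)$ and requiring it to be independent. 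The $i$-th summand is the cost of choosing $J_i$ and requiring every hyperedge of $I_i$ not inside $I$ to be absent.

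The first term is harmless: trivially $|I|L-\prod_j s_j\le |I| L$. Moreover, the outputs bound $I$ from above, since $s_j\le |I_i\cap V_{\pi_\tau(j)}|\le a_i$, so $|I|=O(a_i)$. So the first term is $O(\max_i a_i\, L)$. Without loss of generality I may also cap $a_i$ at $(1+2\epsilon)\comppartite$ by truncating the outputs, so the first term is $O(\comppartite L)$. The whole proof then reduces to the following claim:
\[
\text{for each } i,\qquad (a_i-|I|)L-\Bigl(P_i-\prod_j s_j\Bigr)\ \le\ -c\,\epsilon^2\,\comppartite\, L .
\]
Given this claim, taking $m=C\epsilon^{-2}$ with $C$ large yields $X\le -\Omega(\comppartite L)=-\Omega(L^{r/(r-1)})$, which is Proposition~\ref{prop: upper-partite}.

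To prove the claim, I will grow $I$ into $I_i$ one vertex at a time and telescope the product. Adding a vertex to part $V_k$ raises the product by the product of the other current part sizes, which only grow. At the start, $s_j\ge (1-\mu)\gamma_j\comppartite$ for all $j$. Hence the marginal cost of a vertex added to any part is at least
\[
(1-\mu)^{r-1}\bigl(\comppartite\bigr)^{r-1}\frac{\prod_k\gamma_k}{\gamma_{\max}}=(1-\mu)^{r-1}L ,
\]
using $(\comppartite)^{r-1}\prod_k\gamma_k=\gamma_r L$ and $\gamma_{\max}=\gamma_r$. Because $a_i\ge(1+\epsilon)\comppartite$ while the $s_j$ lie near $(1-\mu)\gamma_j\comppartite$, a constant fraction $\Theta(\epsilon\comppartite)$ of the added vertices arrive after the other parts have grown by a factor $1+\Theta(\epsilon)$. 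Those vertices cost at least $(1+c\epsilon)L$ each. Summing, the total cost exceeds $(a_i-|I|)L$ by $\Omega(\epsilon^2\comppartite L)$, provided that $\mu=\epsilon^2$ is small enough that the early deficit $\bigl(1-(1-\mu)^{r-1}\bigr)L\,(a_i-|I|)=O(\epsilon^2 \comppartite L)$ has a strictly smaller constant. If needed I would take $\mu=c'\epsilon^2$ with $c'$ small.

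I expect the main obstacle to be exactly this telescoping estimate when the shared profile is unbalanced. Some $s_j$ may already far exceed $\gamma_j\comppartite$, and the output permutation $\sigma_i$ need not match $\pi_\tau$. Then the extension may concentrate on a few parts whose complementary products are not obviously large. I would first handle this with Remark~\ref{remark: pi tau}, ordering the $s_j$ non-decreasingly. Together with the rearrangement inequality this shows that, among all ways of matching the output parts $\gamma_{\sigma_i(\cdot)}a_i$ to parts containing $s_j$, the cost $P_i-\prod_j s_j$ is minimized by the sorted matching. In the sorted matching every part has final size at least its initial lower bound. I would then redo the telescoping part by part, using that the part at index $j^*$ sits at equality and must grow from $(1-\mu)\gamma_{j^*}\comppartite$ to at least $(1+\epsilon)\gamma_{j^*}\comppartite$ (up to reindexing by $\sigma_i$). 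That growth alone supplies $\Omega(\epsilon\comppartite)$ expensive vertices.
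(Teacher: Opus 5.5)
\textbf{Gap: the truncation step.} The step that fails as written is the truncation you use to make the shared-set term $O(\comppartite L)$, with $L=\log_b n$. A truncated output $I_i'$ must still satisfy $I_i'\cap V_{\mathcal A}(T)=I$. So it cannot have size at most $(1+2\epsilon)\comppartite$ once $|I|>(1+2\epsilon)\comppartite$, and $\tau$ does not rule this out. The stopping time only forces every part of $I$ above $(1-\mu)\gamma_j\comppartite$. For example, with $r=2$ and $\gamma=(1/2,1/2)$, one part of $I$ may already hold $0.9\comppartite$ vertices, and the outputs then have size at least $1.8\comppartite<\statpartite$. Dropping vertices of $I$ in only some copies breaks the shared structure.

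The fix is the one the paper uses. Each $H_i^{(\tau)}$ is marginally $H(r,n,p)$, so Theorem~\ref{theo: gamma_bal a_stat}~\ref{item: z fmm partite} lets you restrict to $a_i\le(1+\epsilon)\statpartite=O(\comppartite)$. This costs only $\exp(-\Omega(L^{r/(r-1)}))$ and gives $|I|L=O(\comppartite L)$.

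\textbf{Knock-on effect and repair.} After this fix, $a_i-|I|$ can be $\Theta(\comppartite)$. Your absolute comparison (deficit $O(\epsilon^2\comppartite L)$ against gain $\Omega(\epsilon^2\comppartite L)$) then no longer closes with $\mu=\epsilon^2$. Shrinking $\mu$ changes $\tau$, hence $\mathcal S$, so you would be proving a different proposition. The repair is to keep the accounting proportional. Telescope toward the sorted target $t_j=\gamma_j a_i$:
\begin{itemize}
\item Phase one (all parts but $j^*$) adds at most $a_i-(1-\mu)\comppartite$ vertices, each with net at most $\bigl(1-(1-\mu)^{r-1}\bigr)L\le(r-1)\mu L$.
\item Phase two adds exactly $\gamma_{j^*}\bigl(a_i-(1-\mu)\comppartite\bigr)$ vertices, each with net $L\bigl(1-(a_i/\comppartite)^{r-1}\gamma_r/\gamma_{j^*}\bigr)$.
\end{itemize}
So the $i$-th term is at most $(r-1)\bigl(a_i-(1-\mu)\comppartite\bigr)L(\mu-\gamma_r\epsilon)$. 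For $\mu=\epsilon^2$ and $\epsilon\le 1/(2r)$ this is at most $-\frac{r-1}{2r}\epsilon^2\comppartite L$. Because this bound scales with $a_i$, summing it over $m=C\epsilon^{-2}$ copies even absorbs $|I|L\le \min_i a_i\, L$ with no cap at all.

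\textbf{Comparison with the paper.} Apart from this, your argument parallels the paper's globally: the same union bound, the same first-moment count, the same exponent, and the same reduction to a per-copy inequality. In the paper's normalization $\delta_j'=\delta_j/(\gamma_j\comppartite)$, that inequality reads $f(\vec\delta)-\bigl(\gamma_r(a_i')^r-a_i'\bigr)\le -c\epsilon^2$. The difference is how it is proved:
\begin{itemize}
\item The paper confines $\vec\delta'$ to the box $[1-\mu,1+\epst]^r$ with one coordinate equal to $1-\mu$. It then uses multilinearity of $f$ to push a maximizer to a vertex of the box and estimates $g_A$ by Taylor expansion.
\item You telescope the product with part $j^*$ added last. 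This is more elementary and gives a bound proportional to $a_i-(1-\mu)\comppartite$.
\end{itemize}
One correction: the rearrangement step is beside the point, since $P_i=a_i^r\prod_k\gamma_k$ does not depend on $\sigma_i$. What you actually need is that feasibility under $\sigma_i$ forces $s_j\le\gamma_j a_i$ for the sorted profile, which is the paper's first Claim. That fact is what justifies telescoping toward the sorted target rather than the actual part sizes.
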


We defer the proof of this proposition to Section~\ref{subsection: proof of ub partite}.
Assuming the result, let us complete the proof of Theorem~\ref{theo: a_COMP-impossible-Hrnp}.

\begin{proof}[Proof of Theorem~\ref{theo: a_COMP-impossible-Hrnp}]
The argument is identical to the proof of Theorem~\ref{theo: a_COMP-impossible-Hr}. Indeed, by Propositions~\ref{prop:lower-partite} and~\ref{prop: upper-partite}, we have
\[
\delta^m \le \mathbb P[S]
\le \exp \left(
-\Omega \left((\log_b n)^{\frac{r}{r-1}}\right)
\right),
\]
for $m=C\epsilon^{-2}$ with $C$ sufficiently large.
It follows that
\[
\delta
\le
\exp \left(
-\Theta \left(\epsilon^2(\log_b n)^{\frac{r}{r-1}}\right)
\right),
\]
as claimed.
\end{proof}

\subsubsection{Upper Bound: Proof of Proposition~\ref{prop: upper-partite}}\label{subsection: proof of ub partite}
For each \(i \in [m]\), we define \(a_i \coloneqq |\mathcal{A}(H_i^{(\tau)})|\) denoting the sizes of independent sets output by the algorithm on each instance.
Note that by Theorem~\ref{theo: gamma_bal a_stat}~\ref{item: z fmm partite}, we have
\[\P\left[\exists i \in [m] \text{ s.t. } |\mathcal{A}(H_i^{(\tau)})| \geq (1+\epsilon)\statpartite\right] = \exp\left(-\Omega\left((\log_bn)^{r/(r-1)}\right)\right)\footnote{\text{The explicit probability bound appears in Section~\ref{sec: stat-threshold-rpart}.}},\]
where we use the fact that $m = \Theta(\epsilon^{-2})$.
It follows that we need only consider $a_i \leq (1+\epsilon)\statpartite$.
Consider the set of all possible output-size vectors 
\[
\mathcal{F}_m\coloneqq\left\{\vec{a}=(a_1,\dots,a_m)\in\mathbb Z^m:\ (1+\epsilon)\comppartite\le a_i\le (1+\epsilon)\statpartite \text { for all }i\right\}.
\]
Note that for the event $\mathcal{S}$ to occur, there exists an $\vec a \in \mathcal{F}$, $T \in [rn]$, permutations $\pi_0, \ldots, \pi_m \in S_r$, and a set of $m$-tuples $(I_1, \ldots, I_m)$ such that the following hold for each $i \in [m]$:
\begin{enumerate}
    \item $|I_i| = a_i$,
    \item $I_i \cap V_{\mathcal{A}}(T) = I_j \cap V_{\mathcal{A}}(T)$ for $i \neq j$,
    \item $I_i$ is a $\gamma$-balanced independent set in $H_i^{(T)}$ with respect to the permutation $\pi_i$, 
    \item $|I_i \cap V_\mathcal{A}(T) \cap V_{\pi_0(j)}| \geq (1- \mu)\gamma_j\comppartite$ for each $j \in [r]$,
    \item $|I_i \cap V_\mathcal{A}(T) \cap V_{\pi_0(j^*)}| = (1- \mu)\gamma_{j^*}\comppartite$ for some $j^* \in [r]$, and
    \item $|I_i \cap V_\mathcal{A}(T)\cap V_{\pi_0(j)}|$ is non-decreasing in $j$(see Remark~\ref{remark: pi tau}).
\end{enumerate}
Let $\chi_{m, T}(\vec{a}, \pi_0, \ldots, \pi_m)$ denote the set of $m$-tuples $(I_1, \ldots, I_m)$ satisfying the above conditions for $\vec a \in \mathcal{F}$, $T \in [rn]$, and permutations $\pi_0, \ldots, \pi_m \in S_r$.
Define the random variable
\[
X_{m, T}(\vec{a}, \pi_0, \ldots, \pi_m)\ \coloneqq\ |\chi_{m, T}(\vec{a}, \pi_0, \ldots, \pi_m)|.
\]
Hence,
\[
\mathcal S\ \subseteq\ \bigcup_{\vec{a}\in\mathcal{F}_m}\ \bigcup_{T=1}^{rn}\ \bigcup_{\pi_0, \ldots, \pi_m \in S_r}
\{X_{m, T}(\vec{a}, \pi_0, \ldots, \pi_m)\ge 1\},
\]
and therefore, by a union bound, we have
\begin{align}\label{eq:choices}
    \mathbb P [\mathcal S]\ \le\ \sum_{\vec{a}\in\mathcal{F}_m}\ \sum_{T=1}^{rn}\ \sum_{\pi_0, \ldots, \pi_m \in S_r}
    \P\left[X_{m, T}(\vec{a}, \pi_0, \ldots, \pi_m)\ge 1\right].
\end{align}
The key result of this section is the following, which provides an upper bound on $\P\left[X_{m, T}(\vec{a}, \pi_0, \ldots, \pi_m)\ge 1\right]$.

\begin{lemma}\label{lem: help-rpart}
For any fixed $\vec a \in \mathcal{F}$, $T \in [rn]$, and permutations $\pi_0, \ldots, \pi_m \in S_r$, we have 
\[
\P\left[X_{m, T}(\vec{a}, \pi_0, \ldots, \pi_m)\ge 1\right] = \exp \left(-\Omega\left(\comppartite\log n\right)\right).
\] 
\end{lemma}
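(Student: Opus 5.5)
We bound $\P[X_{m,T}(\vec a,\pi_0,\ldots,\pi_m)\ge 1]\le \E[X_{m,T}(\vec a,\pi_0,\ldots,\pi_m)]$ by Markov's inequality, as in the proof of Proposition~\ref{prop:upper-uniform}. Write $\alpha\coloneqq\comppartite$. Each tuple in $\chi_{m,T}$ determines two pieces of data. The first is the shared set $I\coloneqq I_i\cap V_{\mathcal A}(T)$, with part profile $s=(s_1,\dots,s_r)$, $s_j\coloneqq |I\cap V_{\pi_0(j)}|$. By conditions 4--6, $s_j\ge(1-\mu)\gamma_j\alpha$, $s$ is non-decreasing, and $s_{j^*}=(1-\mu)\gamma_{j^*}\alpha$. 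The second is, for each $i$, an extension $J_i\subseteq V\setminus V_{\mathcal A}(T)$. The final profile $t^{(i)}$ of $I\cup J_i$ is forced by $\pi_i$ and $a_i$, namely $|(I\cup J_i)\cap V_{\pi_i(j)}|=\gamma_j a_i$, and it must dominate $s$ coordinatewise (after matching parts). The resampled hyperedges are independent across copies given $E_{\mathcal A}(T)$. So, summing over the polylogarithmically many admissible profiles $s$,
\[
\E[X_{m,T}]\le \sum_{s}\ \Bigl(\prod_{j}\binom{n}{s_j}\Bigr) b^{-\prod_j s_j}\ \prod_{i=1}^m \Bigl(n^{a_i-|s|}\, b^{-(\prod_j t^{(i)}_j-\prod_j s_j)}\Bigr),
\]
where $|s|\coloneqq\sum_j s_j$. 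For the shared factor we only use the trivial bound $n^{|s|}\le n^{O(\alpha)}$; note $|s|\le a_i\le(1+\epsilon)\statpartite=O(\alpha)$ because we restricted to $\vec a\in\mathcal F_m$.

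The heart of the argument is a per-copy estimate:
\[
\prod_j t^{(i)}_j-\prod_j s_j\ \ge\ (1+c\epsilon)\,(a_i-|s|)\log_b n,\qquad a_i-|s|\ge c\,\epsilon\,\alpha,
\]
for a constant $c=c(r,\gamma)>0$. Granting this, each copy contributes at most $\exp(-c'\epsilon^2\alpha\log n)$. With $m=C\epsilon^{-2}$ copies and $C$ large, this overwhelms the shared factor $n^{O(\alpha)}$ and the polylogarithmic number of profiles, which gives $\exp(-\Omega(\alpha\log n))$.

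To prove the estimate, I would telescope: go from $s$ to $t^{(i)}$ one coordinate at a time. Adding a vertex to part $k$ when the current profile is $x\ge s$ costs $\prod_{l\ne k}x_l\ge\prod_{l\ne k}s_l\ge(1-\mu)^{r-1}\alpha^{r-1}\prod_{l\ne k}\gamma_l$. Since $\prod_{l\ne k}\gamma_l\ge\prod_{l\ne r}\gamma_l$, this is at least $(1-\mu)^{r-1}\log_b n$. The choice $\mu=\epsilon^2\ll\epsilon$ makes this baseline loss only $1-O(\epsilon^2)$. The gain of order $\epsilon$ must come from the fact that $a_i\ge(1+\epsilon)\alpha$ forces the coordinates to grow substantially. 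Once some coordinates have been raised to $t^{(i)}_l\ge(1+\epsilon)$-scale values, the later increments face products larger by a factor $1+\Omega(\epsilon)$. So I would process coordinates in increasing order of $\gamma_l$: the cheap part $V_{\pi(r)}$ (largest $\gamma$) comes last and is multiplied by the already-enlarged smaller parts. The lower bound $a_i-|s|\ge c\epsilon\alpha$ should follow from the equality coordinate $j^*$: part $j^*$ must grow from $(1-\mu)\gamma_{j^*}\alpha$ to at least $\gamma_{j^*}a_i$. If instead $|s|$ is already near $a_i$, the shared set itself is nearly a full $\gamma$-balanced set. In that case I would switch to bounding the shared factor $n^{|s|}b^{-\prod s_j}$ directly, using $\prod_j s_j\ge(1+\Omega(\epsilon))|s|\log_b n$ in that regime.

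The main obstacle is this combinatorial inequality when $\pi_i\neq\pi_0$. There the final profile $t^{(i)}$ assigns the proportions $\gamma_j$ to the parts in a different order than $s$ does, and the domination constraint $t^{(i)}\ge s$ may force some parts to overshoot. I would first try to reduce to the case $\pi_i=\pi_0$ by a rearrangement argument: among profiles dominating $s$ with fixed multiset of entries, $\prod_j t_j-\prod_j s_j$ relative to $\sum_j(t_j-s_j)$ is minimized when the orders agree. Failing that, I would do a direct case analysis on which coordinate carries the $(1+\epsilon)$ excess.
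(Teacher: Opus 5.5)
Your reduction is the paper's: Markov's inequality, a shared factor $\prod_j\binom{n}{s_j}b^{-\prod_j s_j}\le n^{O(\alpha)}$ (the paper's bound $-f(\vec\delta)\le 2$), and $m$ per-copy factors that must each be $\exp(-\Omega(\epsilon^2\alpha\log n))$, so that $m=C\epsilon^{-2}$ wins (here $\alpha=\comppartite$ as in your notation). But the per-copy estimate is the whole content of the lemma, and you do not prove it.

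The obstacle you single out is illusory. $t^{(i)}$ is a permutation of $(\gamma_1a_i,\dots,\gamma_ra_i)$, so $\prod_j t^{(i)}_j=a_i^r\prod_j\gamma_j$ and $\sum_j t^{(i)}_j=a_i$ for every $\pi_i$. Hence the per-copy factor $n^{a_i-|s|}b^{-(a_i^r\prod_j\gamma_j-\prod_j s_j)}$ does not depend on $\pi_i$ at all. The only role of $\pi_i$ is feasibility, $s\le t^{(i)}$ partwise. Since $s$ is sorted (Remark~\ref{remark: pi tau}), a pigeonhole argument turns this into $s_j\le\gamma_j a_i$ for every $j$, i.e.\ $s_j\le\gamma_j\min_i a_i$; this is the paper's first claim.

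You need this fact and do not supply it. Your derivation of $a_i-|s|\ge c\epsilon\alpha$ from ``part $j^*$ must grow to $\gamma_{j^*}a_i$'' tacitly assumes $\pi_i$ and $\pi_0$ agree at $j^*$. With sorted domination it follows instead from $a_i-|s|=\sum_j(\gamma_ja_i-s_j)\ge\gamma_{j^*}(a_i-(1-\mu)\alpha)$. That same bound makes your ``$|s|$ near $a_i$'' case unnecessary. This is just as well, because the fallback inequality $\prod_j s_j\ge(1+\Omega(\epsilon))|s|\log_b n$ fails for admissible profiles with small deficit. For example, take $\gamma=\mathbf{1}/r$, $a_i=(1+\epsilon)\alpha$, $s_1=(1-\mu)\alpha/r$ and $s_l=a_i/r$ for $l\ge 2$; then $\prod_j s_j\approx r^{-1}|s|\log_b n$.

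The telescoping sketch also does not deliver the factor $1+c\epsilon$. Your per-step bound $\prod_{l\ne k}x_l\ge(1-\mu)^{r-1}\alpha^{r-1}\prod_{l\ne k}\gamma_l$ discards exactly what produces the gain in the configuration above. There only part $1$ has a deficit and nothing else is raised along the path. Your bound gives cost $(1-\mu)^{r-1}\log_b n<\log_b n$ per added vertex, while the true cost is $(a_i/\alpha)^{r-1}\log_b n\ge(1+\epsilon)^{r-1}\log_b n$, because the other coordinates of $s$ are already full. Once you use the actual $s_l$, what remains is an inequality over all admissible $s$. The missing idea is a reduction to finitely many explicit cases, which the paper carries out as follows.
\begin{enumerate}
\item Normalize by $\delta_j'=s_j/(\gamma_j\alpha)$ and $a_i'=a_i/\alpha$. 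The per-copy exponent becomes $\alpha\log_b n\,[f(\vec\delta)-(\gamma_r(a_i')^r-a_i')]$ with $f(\vec\delta)=\gamma_r\prod_j\delta_j'-\sum_j\gamma_j\delta_j'$.
\item Since $x\mapsto\gamma_rx^r-x$ is increasing on $[1,\infty)$ (as $\gamma_r\ge 1/r$), every $a_i'$ may be replaced by $1+\tilde\epsilon\coloneqq\min_i a_i'$.
\item $f$ is multilinear on the box $[1-\mu,1+\tilde\epsilon]^r$ with one coordinate pinned at $1-\mu$. So it is maximized at a vertex, indexed by the set $A$ of coordinates equal to $1-\mu$.
\item The resulting explicit expression, after using $\sum_{j\in A}\gamma_j\le|A|\gamma_r$, is decreasing in $\tilde\epsilon$. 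A second-order expansion at $\tilde\epsilon=\epsilon$, $\mu=\epsilon^2$ then bounds it by $-\epsilon^2/8$.
\end{enumerate}
Your own target, written as $(1+c\epsilon)(a_i-|s|)\log_b n-(a_i^r\prod_j\gamma_j-\prod_j s_j)\le 0$, is also multilinear in $s$ and can be handled the same way. Some such argument is required, though; as written, the proposal leaves the core of the lemma unproved.
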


To prove Proposition~\ref{prop: upper-partite}, in addition to proving Lemma~\ref{lem: help-rpart}, it remains to count the number of choices in~\eqref{eq:choices}. These satisfy
\[
|\mathcal{F}_m|\le (rn)^m,\qquad
\#\{T\} = rn,\qquad
\#\{\pi_0, \ldots, \pi_m\} = (r!)^{m+1}. \qquad
\]
Therefore, by the above,~\eqref{eq:choices}, and Lemma~\ref{lem: help-rpart}, we have
\begin{align*}
\P[S]
=
(rn\cdot r!)^{m+1}
\exp \left(
-\Omega \left( \comppartite\log n\right)
\right) =
\exp \left(
-\Omega \left( \comppartite\log n\right)
\right).
\end{align*}
Since
$
 \comppartite
=
\Theta \left((\log n)^{\frac{1}{r-1}}\right)
$,
it follows that
\begin{align*}
\P[S] =
\exp \left(
-\Omega \left((\log n)^{\frac{r}{r-1}}\right)
\right),
\end{align*}
for \(m=C\epsilon^{-2}\), completing the proof of Proposition~\ref{prop: upper-partite} (modulo the proof of Lemma~\ref{lem: help-rpart}, which we include below).

\begin{proof}[Proof of Lemma~\ref{lem: help-rpart}]
    Fix $T$, $\vec{a}$, $\pi_0, \ldots, \pi_m$ as in the statement of the lemma. 
    By the definition of $\chi_{m, T}(\vec{a}, \pi_0, \ldots, \pi_m)$, every $(I_1,\ldots,I_m)\in \chi_{m, T}(\vec{a}, \pi_0, \ldots, \pi_m)$ satisfies
    \[
    I_i\cap  V_{\mathcal{A}}(T)=I_j\cap  V_{\mathcal{A}}(T), \qquad \text{for each } i, j \in [m].
    \]
    Letting $I$ be this common intersection,
    we define the family of admissible intersections to be
    \[
    \mathcal{I}\coloneqq\left\{ I\subseteq  V_{\mathcal{A}}(T) \,:\,
    \begin{aligned}
        &|I| \leq \min_ia_i, \\
        &\forall j\in [r],\, |I\cap V_{\pi_0(j)}|\ge (1-\mu)\gamma_j\comppartite, \\
        &\exists j^* \in [r] \text{ s.t. }
        |I\cap V_{\pi_0(j^*)}|=(1-\mu)\gamma_{j^*}\comppartite    
    \end{aligned}
    \right\}.
    \]
    Fix $I\in\mathcal{I}$. For each $i\in[m]$, define the completion family outside $ V_{\mathcal{A}}(T)$ by
    \[
    \mathcal{I}_i(I)\coloneqq
    \left\{
    J_i\subseteq V(H_i^{(T)})\setminus  V_{\mathcal{A}}(T) \,:\,
    \begin{array}{l}
    I\cup J_i \text{ is } \gamma\text{-balanced with respect to } \pi_i, \\
    |I\cup J_i|=a_i
    \end{array}
    \right\}.
    \]
    For $I\in\mathcal{I}$ and $J_i\in\mathcal{I}_i(I)$, set
    \[
    p(I,J_i)\coloneqq
    \P \left[
    I\cup J_i \text{ is independent in } H_i^{(T)}
    \;\middle|\;
    I \text{ is independent in } H_i^{(T)}
    \right].
    \]
    If $\chi_{m, T}(\vec{a}, \pi_0, \ldots, \pi_m)\ge 1$, then there exists a set $I\in\mathcal{I}$ and `completion sets'
    $J_i\in\mathcal{I}_i(I)$ for $i=1,\ldots,m$ such that $I\cup J_i$ is independent in
    $H_i^{(T)}$ for every $i$.
    Therefore, by a union bound, we have
    \begin{align*}
        \P \left[X_{m, T}(\vec{a}, \pi_0, \ldots, \pi_m)\ge 1\right]
        \le
        \sum_{I\in\mathcal{I}}
        \P \left[
        I \text{ is independent in } H_i^{(T)} \text{ for each } i
        \right]
        \sum_{\substack{J_i\in\mathcal{I}_i(I)\\ 1\le i\le m}}
        p(I, J_i).
    \end{align*}
    Let us fix $I \in \mathcal{I}$ and let $\delta_j \coloneqq |I\cap V_j|$ for each $j \in [r]$.
    Before we proceed, we note a few properties of $\vec\delta = (\delta_1, \ldots, \delta_r)$.
    \begin{itemize}
        \item For some $j^*\in [r]$, we have $\delta_{\pi_0(j^*)} = (1-\mu)\gamma_{j^*}\comppartite$, and
        \item For each $i \in [m]$ and $j \in [r]$, we have
        \begin{align}\label{eq: bounds on delta}
            (1-\mu)\gamma_{\pi_0^{-1}(j)}\comppartite \leq \delta_j \leq \gamma_{\pi_i^{-1}(j)}a_i.
        \end{align}
    \end{itemize}
    Let $\Delta$ denote the set of all such $\vec\delta \in \N^r$ and for a fixed $\vec\delta \in \Delta$, let $\mathcal{I}(\vec\delta)$ be the set of $I \in \mathcal{I}$ such that $|I\cap V_j| = \delta_j$ for each $j$.
    We have
    \begin{align*}
        \P \left[X_{m, T}(\vec{a}, \pi_0, \ldots, \pi_m)\ge 1\right] \nonumber \le
        \sum_{\vec\delta \in \Delta}
        \sum_{I\in\mathcal{I}(\vec\delta)}
        \P \left[
        I \text{ is independent in } H_i^{(T)} \text{ for each } i
        \right]
        \sum_{\substack{J_i\in\mathcal{I}_i(I)\\ 1\le i\le m}}
        p(I, J_i).
    \end{align*}
    Note that
    \[\P \left[
        I \text{ is independent in } H_i^{(T)} \text{ for each } i
        \right] = (1-p)^{\prod_j\delta_j}.\]
    Furthermore, we have
    \begin{align*}
        p(I, J_i) &= (1-p)^{a_i^{r}\prod_j\gamma_j -\prod_j\delta_j}.
    \end{align*}
    Additionally, we note that
    \[|\mathcal{I}(\vec\delta)| = \prod_{j = 1}^r\binom{n}{\delta_j} \leq n^{\sum_{j = 1}^r\delta_j},\]
    and
    \[|\mathcal{I}_i(I)|\leq \prod_{j = 1}^r\binom{n}{\gamma_{\pi_i^{-1}(j)}a_i - \delta_j} \leq n^{a_i - \sum_{j = 1}^r\delta_j}.\]
    Putting these observations together, we obtain
    \begin{align}
        \P \left[X_{m, T}(\vec{a}, \pi_0, \ldots, \pi_m)\ge 1\right]
        &\le
        \sum_{\vec{\delta} \in \Delta}n^{\sum_{j = 1}^r\delta_j}(1-p)^{\sum_{i = 1}^ma_i^{r}\prod_j\gamma_j-(m - 1)\prod_j\delta_j}n^{\sum_{i = 1}^ma_i - m\sum_{j = 1}^r\delta_j} \nonumber \\
        &= \sum_{\vec{\delta} \in \Delta}b^{(m - 1)(\prod_j\delta_j - \log_bn\sum_j\delta_j) - \sum_{i = 1}^ma_i^{r}\prod_j\gamma_j + \log_bn\sum_{i = 1}^ma_i} \label{eq: fub}
    \end{align}
    Denote the exponent above by $X(\vec\delta)$.
    Letting
    \begin{align}\label{eq: prime def}
        \delta_j' = \frac{\delta_j}{\gamma_{\pi_0^{-1}(j)}\comppartite} \qquad \text{and} \qquad a_i' = \frac{a_i}{\comppartite},
    \end{align}
    for $j \in [r]$ and $i \in [m]$, we have
    \[X(\vec\delta) = \comppartite\log_bn\left((m - 1)\left(\gamma_r\prod_j\delta_j'  - \sum_j(\gamma_{\pi_0^{-1}(j)}\delta_j')\right) - \sum_{i = 1}^m\left(\gamma_r(a_i')^{r} - a_i'\right)\right).\]
    To simplify the above, we define
    \[\epst \coloneqq \min_{i \in [m]}\left(\frac{a_i}{\comppartite} - 1\right).\]
    Note that the function $f(x) = \gamma_rx^r - x$ is increasing for $x^{r-1} \geq 1/(\gamma_rr)$.
    As $a_i' \geq (1+\epst)$ and $\gamma_r \geq 1/r$, this condition is satisfied for $x = a_i'$.
    In particular, we have
    \[\gamma_r(a_i')^{r} - a_i' \geq (1+\epst)^{r}\gamma_r - (1+\epst).\]
    Additionally, by relabeling indices, we can assume that $\pi_0$ is the identity permutation.
    It follows that
    \begin{align}\label{eq: bad boi}
        X(\vec\delta) \leq \comppartite\log_bn\left((m - 1)\left(\gamma_r\prod_j\delta_j'  - \sum_j(\gamma_{j}\delta_j')\right) - m\left((1+\epst)^{r}\gamma_r - (1+\epst)\right)\right).
    \end{align}

    In the remainder of this proof, we will focus on providing a bound for the above. To this end, let 
    \[f(\vec{\delta}) = \gamma_r\prod_j\delta_j'  - \sum_j(\gamma_{j}\delta_j').\]
    where we recall the definition of $\delta_j'$ from \eqref{eq: prime def}.
    The following claim will assist with our proof.

    \begin{claim}
        Without loss of generality, we can assume that for each $j \in [r]$, we have $1 - \mu \leq \delta_j' \leq 1 + \tilde \eps$.
    \end{claim}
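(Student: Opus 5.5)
The lower bound $\delta_j'\ge 1-\mu$ needs no work. With $\pi_0$ relabeled to the identity, \eqref{eq: bounds on delta} reads $\delta_j \ge (1-\mu)\gamma_j\comppartite$ for every $\vec\delta\in\Delta$, and dividing by $\gamma_j\comppartite$ gives $\delta_j'\ge 1-\mu$. So the content of the claim is the upper bound $\delta_j'\le 1+\epst$, and the phrase ``without loss of generality'' means we may restrict to such $\vec\delta$ without weakening the bound on $X(\vec\delta)$ in \eqref{eq: bad boi}.

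For the upper bound, start from the other half of \eqref{eq: bounds on delta}: $\delta_j\le \gamma_{\pi_i^{-1}(j)}a_i$ for every $i$. The difficulty is that $\pi_i$ need not match $\pi_0$, so $\gamma_{\pi_i^{-1}(j)}$ can exceed $\gamma_j$. I will use the ordering condition (item 6 in the definition of $\chi$, cf. Remark~\ref{remark: pi tau}) together with $\gamma_1\le\cdots\le\gamma_r$ to fix this. The shared set $I$ lies inside each $I_i$, and $I_i$ has part sizes $\gamma_{\pi_i^{-1}(j)}a_i$. By a rearrangement (majorization) argument applied to the sorted part sizes of $I$ against the sorted targets, the $k$-th smallest $\delta$ is at most the $k$-th smallest target, namely $\gamma_k a_i$. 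Since $\delta_j$ is non-decreasing in $j$ under $\pi_0=\mathrm{id}$, this gives $\delta_j\le\gamma_j a_i$ for every $i$.

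Taking $i$ to be the index that minimizes $a_i$ then yields $\delta_j\le \gamma_j\min_i a_i=\gamma_j(1+\epst)\comppartite$, that is, $\delta_j'\le 1+\epst$. So the claim holds for all $\vec\delta\in\Delta$, with no reduction needed.

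The main obstacle is a boundary case. When $\delta_j=\delta_{j+1}$ but $\gamma_j<\gamma_{j+1}$, matching positions in the sorted order is ambiguous, and I have to check that the rearrangement inequality still gives the bound coordinate by coordinate; since equal $\delta$ values share the smaller bound, it does. If this step fails, the fallback is a monotonicity argument. The function $f(\vec\delta)=\gamma_r\prod_j\delta_j'-\sum_j\gamma_j\delta_j'$ has partial derivative in $\delta_j'$ equal to $\gamma_r\prod_{k\ne j}\delta_k'-\gamma_j$. This is positive in the relevant range, because $\prod_{k\ne j}\delta_k'\ge(1-\mu)^{r-1}$ while $\gamma_j\le\gamma_r$ (the equality case $\gamma_j=\gamma_r$ only needs a $\mu$-order correction). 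Any coordinate exceeding $1+\epst$ could then be replaced by the cap through a convexity bound on the multilinear $f$. I expect the direct rearrangement route to be the cleaner one.
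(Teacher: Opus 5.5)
Your argument is correct and is essentially the paper's proof. The paper argues by contradiction: if $\delta_k'>1+\epst$, then, since $\delta_j$ is non-decreasing, at most $k-1$ indices satisfy $\delta_j\le\gamma_k a_i$ for $i$ minimizing $a_i$, so no $\gamma$-balanced set of size $a_i$ can contain $I$ and $X_{m,T}(\vec a,\pi_0,\ldots,\pi_m)=0$ --- this is exactly your order-statistic bound $\delta_{(k)}\le\gamma_k a_i$, and the lower bound comes from \eqref{eq: bounds on delta} just as you do (ties need no special care, since order statistics are values). Drop the monotonicity fallback, though: a positive partial derivative of $f$ would make coordinates above $1+\epst$ give a \emph{larger} $f$, so capping them would not yield an upper bound.
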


    \begin{claimproof}
        The lower bound follows by the definition of the stopping time.
        Assume that $\delta_k' > 1+\epst$ for some $k \in [r]$. Recall that by convention, $\delta_{j}$ is increasing (see Remark~\ref{remark: pi tau}). By definition of $\epst$, we have the following for some $1\leq i\leq m$:
        $$
        \delta_j'\gamma_j\comppartite = \delta_j > \gamma_k a_i \qquad \forall j\geq k.
        $$
        This would imply that there are at most $k-1$ indices $j$ such that $\delta_j \leq \gamma_k a_i$. Therefore, any $\gamma$-balanced independent set containing $I^{(T)}$ must be larger than $a_i$, indicating that $X_{m, T}(\vec{a}, \pi_0, \ldots, \pi_m) = 0$.     
    \end{claimproof}
    \vspace{5pt}

    In light of the above claim, we may simplify \eqref{eq: fub} to
    \begin{align*}
        \P \left[X_{m, T}(\vec{a}, \pi_0, \ldots, \pi_m)\ge 1\right]
        &\le \sum_{\vec{\delta} \in \Delta'}b^{X(\vec\delta)},
    \end{align*}
    where
    $$
    \Delta' \coloneqq \{\vec{\delta}\in \Delta: 1- \mu \leq \delta_j'\leq 1+\epst\}.
    $$

Let $f^* =\max_{\vec{\delta} \in \Delta'}f(\vec{\delta})$.
The following claim shows that there is a maximizer which is an extreme point of $\Delta'$.

\begin{claim}
The value $f^*$
on $\Delta'$ is attained at a point of
\[
B \;=\; \Bigl\{\vec{\delta}\in\Delta' \,: \,\delta_j\in\{1-\mu,\,1+\epst\} \qquad \forall j\in[r]\Bigr\}.
\]
\end{claim}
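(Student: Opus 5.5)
The claim is about maximizing $f(\vec\delta)=\gamma_r\prod_j\delta_j'-\sum_j\gamma_j\delta_j'$ over $\Delta'$. I read the condition defining $B$ as a condition on the normalized coordinates $\delta_j'\in\{1-\mu,1+\epst\}$, consistent with how $\Delta'$ is defined. The plan is a coordinatewise multilinearity argument: for fixed $j$, $f$ is an affine function of $\delta_j'$ once the other coordinates are held fixed.

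Concretely, writing $P_{-j}\coloneqq\prod_{k\neq j}\delta_k'$,
\[
f(\vec\delta)=\delta_j'\bigl(\gamma_r P_{-j}-\gamma_j\bigr)-\sum_{k\neq j}\gamma_k\delta_k'.
\]
This is affine in $\delta_j'$. On the interval $[1-\mu,1+\epst]$ it is therefore maximized at an endpoint:
\begin{itemize}
\item at $1+\epst$ if $\gamma_rP_{-j}\ge\gamma_j$;
\item at $1-\mu$ otherwise.
\end{itemize}

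The argument proceeds as follows.
\begin{enumerate}
\item First relax $\Delta'$ to the continuous box $[1-\mu,1+\epst]^r$. This can only increase the supremum, and since we only need an upper bound on $X(\vec\delta)$ via $f^*$, that is all we need. The box is compact and $f$ is continuous, so a maximizer exists.
\item Starting from any maximizer, move coordinates one at a time to whichever endpoint does not decrease $f$. This is possible because $f$ is affine in the current coordinate given the others.
\item Each move preserves optimality and leaves the previously fixed coordinates alone. After $r$ moves every coordinate lies in $\{1-\mu,1+\epst\}$, giving a maximizer in $B$.
\end{enumerate}

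Two technical points need attention.

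The first is integrality. Points of $\Delta'$ are integer vectors $\vec\delta$, whereas the endpoints correspond to $\delta_j=(1-\mu)\gamma_j\comppartite$ and $(1+\epst)\gamma_j\comppartite$, which the paper treats as integers by its convention of ignoring floors and ceilings. I would state explicitly that $f^*$ is interpreted as the maximum over the relaxed box, so the claim holds in the form needed: $\max_{\Delta'}f\le\max_B f$.

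The second is the other constraints of $\Delta$. These are the equality at $j^*$ and the upper bounds $\delta_j\le\gamma_{\pi_i^{-1}(j)}a_i$. Dropping them again only enlarges the feasible set, so the upper bound survives. The one exception would be if one wanted to use the $j^*$ equality to force $\delta_{j^*}'=1-\mu$. In that case I would keep that coordinate fixed and run the same endpoint argument on the remaining $r-1$ coordinates, which is still consistent with $B$.

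The only real obstacle is this bookkeeping: making sure the relaxation is an upper bound and is compatible with the subsequent use of $f^*$ in bounding \eqref{eq: bad boi}. The analytic core is just multilinearity.
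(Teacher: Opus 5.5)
Your analytic core is the same as the paper's: $f$ is affine in each $\delta_j'$ separately, so coordinates can be pushed to the endpoints of $[1-\mu,1+\epst]$ without decreasing $f$. But your main route does not prove the claim. That route relaxes to the full box $[1-\mu,1+\epst]^r$ and discards the stopping-time equality, and this would also break the argument that follows.

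Every point of $\Delta'$, and hence of $B\subseteq\Delta'$, has some coordinate with $\delta_{j^*}'=1-\mu$. This is exactly why the paper can later assert $s\ge 1$. Over the full box the maximizer can be the vertex $(1+\epst,\dots,1+\epst)$, which is not in $B$. For example, take $r=2$, $\gamma=(1/2,1/2)$, and write $H=1+\epst$, $L=1-\mu$; then $f(H,H)-f(L,H)=\tfrac12(\epst+\mu)\epst>0$. So your inequality $\max_{\Delta'}f\le\max_{\{L,H\}^r}f$ is strictly weaker than $\max_{\Delta'}f\le\max_B f$.

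It is also not ``the form needed''. At the all-$H$ vertex, $f-(1+\epst)^r\gamma_r+(1+\epst)=0$ because $\sum_j\gamma_j=1$. The negative bound \eqref{eq: bad boi 3}, which requires $s\ge 1$, is then lost. The coefficient of $m$ in \eqref{eq: bad boi 2} can only be bounded by $0$, and the overlap contradiction disappears. Conceptually, that vertex corresponds to the shared part $I$ being essentially the whole smallest output, where no contradiction is possible. The stopping-time equality is precisely what excludes it.

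The repair is the ``exception'' you mention in passing, and it has to be the main argument rather than an option. Decompose $\Delta'=\bigcup_{j^*}F_{j^*}$ with $F_{j^*}=\{\vec\delta\in\Delta':\delta_{j^*}'=1-\mu\}$. Keep $\delta_{j^*}'$ frozen and push only the other $r-1$ coordinates to endpoints. The resulting point still has a coordinate equal to $1-\mu$, so it lies in $B$; this is exactly the paper's proof.

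Your other relaxations are harmless. Dropping integrality and dropping the upper bounds $\delta_j\le\gamma_{\pi_i^{-1}(j)}a_i$ only enlarge the set while keeping the face $\delta_{j^*}'=1-\mu$, and the paper tacitly makes the same relaxations. Your reading of $B$ in terms of the normalized coordinates $\delta_j'$ is also the intended one.
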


\begin{claimproof}
Write
\[
F_{j^*} \;\coloneqq\; \Bigl\{\vec{\delta}\in \Delta' \,:\, \delta_{j^*}'=1-\mu\Bigr\},
\qquad j^*\in[r],
\]
and note that by definition
\[
\Delta' \;=\; \bigcup_{j^*\in[r]} F_{j^*}.
\]
It suffices to show that, for every $j^*\in[r]$, the maximum of $f$
on $F_{j^*}$ is attained at an extreme point of $F_{j^*}$, since the union of these 
points is precisely $B$.
For any $j\in[r]$ and any choice of $\delta_k$, $k\ne j$, the map
\[
t \;\longmapsto\; f(\delta_1,\dots,\delta_{j-1},t,\delta_{j+1},\dots,\delta_r)
\;=\; \alpha_j\, t + \beta_j,
\qquad
\text{where}
\qquad
\alpha_j \;=\; \gamma_r\!\!\prod_{k\neq j}\delta_k' - \gamma_j,
\]
is affine in $t$, with $\beta_j$ independent of $t$. Thus $f$ is multilinear. Fix $j^*\in[r]$ and let $\vec{\delta}^\star\in F_{j^*}$ be any maximizer of $f$ on $F_{j^*}$. We show that $\vec{\delta}^\star$ may be taken to be an extreme point of $F_{j^*}$. Suppose some coordinate $\delta^\star_j$ with $j\neq j^*$ satisfies $(\delta^\star_j)' \in (1-\mu,1+\epst)$. By the multilinearity of $f$, the function
\[
\varphi(t) \;\coloneqq\; f(\delta^\star_1,\dots,\delta^\star_{j-1},\,t,\,\delta^\star_{j+1},\dots,\delta^\star_r)
\]
is affine on $[1-\mu,1+\epst]$. An affine function on a closed interval attains its maximum at an endpoint, so that
\[
\max\bigl\{\varphi(1-\mu),\;\varphi(1+\epst)\bigr\} \;\ge\; \varphi(\delta^\star_j).
\]
Replacing $\delta^\star_j$ with the corresponding endpoint produces another maximizer in $F_{j^*}$ with one more coordinate at the boundary $\{1-\mu,1+\epst\}$. Iterating this procedure over all $j\neq j^*$ yields a maximizer $\widetilde{\vec{\delta}}\in F_{j^*}$ with
\[
\widetilde\delta_{j^*}'=1-\mu,\qquad 
\widetilde\delta_j'\in\{1-\mu,1+\epst\}\qquad \forall j\neq j^*,
\]
that is, an extreme point of $F_{j^*}$. Such a point must lie in $B$. Taking the maximum over $j^*\in[r]$ gives a global maximizer in $B$, as desired.
\end{claimproof} 
\vspace{5pt}

From now on, we may assume that $f^* = f(\vec{\delta}^*)$ with $\vec{\delta}^*\in B$. Define \[A \coloneqq \{j\in [r]: \delta^*_j = 1-\mu\},\]
and let $s = |A|$. Note that we have $s\geq 1$ by definition of $\Delta'$. Let 
\[\sigma_A \coloneqq \sum_{j\in A}\gamma_j. \] From now on, we substitute $\mu = \epsilon^2 \leq \epst^2$. 
Recall the bound on $X(\vec\delta)$ from \eqref{eq: bad boi}.
We have
\begin{align}
    \frac{X(\vec\delta)}{\comppartite\log_bn} &\leq (m - 1)f(\vec\delta)- m\left((1+\epst)^{r}\gamma_r - (1+\epst)\right) \nonumber \\
    &= m\left(f(\vec\delta)- (1+\epst)^{r}\gamma_r + (1+\epst)\right) - f(\vec{\delta}).\label{eq: bad boi 2}
\end{align}
Let is consider the term in the parentheses.
Recalling the optimizer $\vec{\delta}^*$ of $f(\vec\delta)$, we have
\begin{align*}
    &~f(\vec\delta)- (1+\epst)^{r}\gamma_r + (1+\epst) \\
    &\leq \gamma_r\prod_j(\delta_j^*)' - \sum_j(\gamma_j(\delta_j^*)') - (1+\epst)^{r}\gamma_r + (1+\epst) \\
    &= \gamma_r(1+\epst)^{r-s}(1-\mu)^s - \sigma_A(1-\mu)-(1-\sigma_A)(1+\epst) - (1+\epst)^{r}\gamma_r + (1+\epst) \\
    &\leq \gamma_r(1+\epst)^{r}\left( \left(\frac{1-\mu}{1+\epst} \right)^s-1\right)+\sigma_A(\mu+\epst) \\
    &\le
\gamma_r(1+\epst)^r
\left(
\left(\frac{1-\mu}{1+\epst}\right)^s-1
\right)
+s\gamma_r(\mu+\epst)
\eqqcolon g_A(\epst),
\end{align*}
where we use the fact that since $\gamma_j\le \gamma_r$ for all $j\in[r]$, we have
$\sigma_A\le s\gamma_r$. 
We will show that $g_{A}(\epst)$ is a decreasing function of $\epst$ as
$$
g_A'(\epst) = \gamma_r(1+\epst)^{r-1}\left( (r-s)\left(\frac{1-\mu}{1+\epst} \right)^s-r\right)+\sigma_A \leq -s\gamma_r(1+\epst)^{r-1}+ s\gamma_r < 0.
$$
Therefore, 
\begin{align}
f(\vec{\delta}) - (1+\epst)^{r}\gamma_r + (1+\epst) &\leq g_A(\epst) \leq g_A(\epsilon) = \gamma_r(1+\epsilon)^{r} \left((1-\epsilon)^s-1\right)+\sigma_A\epsilon(\epsilon+1) \nonumber \\
&\leq s\gamma_r \epsilon(1+\epsilon)\left((1+\epsilon)^{r-1}\left(-1+\frac{(s-1)}{2}\epsilon\right) +1 \right) \nonumber \\
& \leq - \frac{r-1}{4}\epsilon^2 s\gamma_r(1+\epsilon)\leq -\frac{\epsilon^2}{8}, \label{eq: bad boi 3}
\end{align}
where we use the second order Taylor approximation for $(1-\epsilon)^s$, a first order approximation for $(1+\epsilon)^{r-1}$, and the bounds $s \leq r$ and we assume that $\epsilon$ is sufficiently small (e.g. $\epsilon \leq 1/r^2$).
Meanwhile, we also have
\begin{align}\label{eq: almost there}
    -f(\vec\delta) = \sum_{j=1}^r \gamma_j \delta_j' - \gamma_r\prod_{j-1}^r \delta_j' \leq 1+\epst \leq \left(\frac{1}{\gamma_r}\right)^\frac{1}{r-1}\leq 2,
\end{align}
since we assume $a_i \leq (1+\epsilon)\statpartite$ for all $i$.
Combining \eqref{eq: bad boi 2}, \eqref{eq: bad boi 3}, and \eqref{eq: almost there} for $m = C\eps^{-2}$ with $C$ sufficiently large, we have
\begin{align*}
    \P \left[X_{m, T}(\vec{a}, \pi_0, \ldots, \pi_m) \ge 1\right] \leq b^{X(\vec{\delta})} &= \exp\left(-\comppartite\log n(m\eps^2/8 - 2)\right) \\
    &= \exp\left(-\Omega\left(\comppartite\log n\right)\right),
\end{align*}
as desired.
\end{proof}

\printbibliography 

\end{document}